\newtheorem{theorem}{Theorem}[section]
\newtheorem{definition}[theorem]{Definition}
\newtheorem{assumption}[theorem]{Assumption}
\newtheorem{problem}[theorem]{Problem}
\newtheorem{lemma}[theorem]{Lemma}
\newtheorem{remark}[theorem]{Remark}
\newtheorem{example}[theorem]{Example}
\newtheorem{corollary}[theorem]{Corollary}
\newtheorem{proposition}[theorem]{Proposition}
\newcommand{\longthmtitle}[1]{\mbox{}{\bf \textit{(#1).}}}
\newcommand{\real}{\ensuremath{\mathbb{R}}}
\newcommand{\realpos}{\ensuremath{\mathbb{R}_{>0}}}
\newcommand{\intnonneg}{{\mathbb{Z}}_{\ge 0}}
\newcommand{\setdef}[2]{\{#1 \; | \; #2\}}
\newcommand{\E}{\mathbb{E}}
\newcommand{\until}[1]{\{1,\dots,#1\}}
\newcommand{\norm}[1]{\lVert #1 \rVert}
\newcommand{\Ac}{\mathcal{A}}
\newcommand{\Nc}{\mathcal{N}}
\newcommand{\oprocendsymbol}{\hbox{$\square$}}
\newcommand{\oprocend}{\relax\ifmmode\else\unskip\hfill\fi\oprocendsymbol}
\chardef\@x10\chardef\@xv60
\def\tcitime{
\def\@time{%
  \@minute\time\@hour\@minute\divide\@hour\@xv
  \ifnum\@hour<\@x 0\fi\the\@hour:%
  \multiply\@hour\@xv\advance\@minute-\@hour
  \ifnum\@minute<\@x 0\fi\the\@minute
  }}%
\def\QCTOpt[#1]#2{%
  \def\QCTOptB{#1}
  \def\QCTOptA{#2}
}
\def\QCTNOpt#1{%
  \def\QCTOptA{#1}
  \let\QCTOptB\empty
}
\def\Qct{%
  \@ifnextchar[{%
    \QCTOpt}{\QCTNOpt}
}
\def\QCBOpt[#1]#2{%
  \def\QCBOptB{#1}
  \def\QCBOptA{#2}
}
\def\QCBNOpt#1{%
  \def\QCBOptA{#1}
  \let\QCBOptB\empty
}
\def\Qcb{%
  \@ifnextchar[{%
    \QCBOpt}{\QCBNOpt}
}
\def\PrepCapArgs{%
  \ifx\QCBOptA\empty
    \ifx\QCTOptA\empty
      {}%
    \else
      \ifx\QCTOptB\empty
        {\QCTOptA}%
      \else
        [\QCTOptB]{\QCTOptA}%
      \fi
    \fi
  \else
    \ifx\QCBOptA\empty
      {}%
    \else
      \ifx\QCBOptB\empty
        {\QCBOptA}%
      \else
        [\QCBOptB]{\QCBOptA}%
      \fi
    \fi
  \fi
}
\def\GRAPHICSPS#1{%
 \ifcase\GRAPHICSTYPE
   \special{ps: #1}%
 \or
   \special{language "PS", include "#1"}%
 \fi
}%
\def\graffile#1#2#3#4{%
    \leavevmode
    \raise -#4 \BOXTHEFRAME{%
        \hbox to #2{\raise #3\hbox to #2{\null #1\hfil}}}%
}%
\def\draftbox#1#2#3#4{%
 \leavevmode\raise -#4 \hbox{%
  \frame{\rlap{\protect\tiny #1}\hbox to #2%
   {\vrule height#3 width\z@ depth\z@\hfil}%
  }%
 }%
}%
\newif\ifwasdraft
\def\GRAPHIC#1#2#3#4#5{%
 \ifnum\draft=\@ne\draftbox{#2}{#3}{#4}{#5}%
  \else\graffile{#1}{#3}{#4}{#5}%
  \fi
 }%
\def\addtoLaTeXparams#1{%
    \edef\LaTeXparams{\LaTeXparams #1}}%
\newif\ifBoxFrame \BoxFramefalse
\newif\ifOverFrame \OverFramefalse
\newif\ifUnderFrame \UnderFramefalse
\def\BOXTHEFRAME#1{%
   \hbox{%
      \ifBoxFrame
         \frame{#1}%
      \else
         {#1}%
      \fi
   }%
}
\def\doFRAMEparams#1{\BoxFramefalse\OverFramefalse\UnderFramefalse\readFRAMEparams#1\end}%
\def\readFRAMEparams#1{%
 \ifx#1\end%
  \let\next=\relax
  \else
  \ifx#1i\dispkind=\z@\fi
  \ifx#1d\dispkind=\@ne\fi
  \ifx#1f\dispkind=\tw@\fi
  \ifx#1t\addtoLaTeXparams{t}\fi
  \ifx#1b\addtoLaTeXparams{b}\fi
  \ifx#1p\addtoLaTeXparams{p}\fi
  \ifx#1h\addtoLaTeXparams{h}\fi
  \ifx#1X\BoxFrametrue\fi
  \ifx#1O\OverFrametrue\fi
  \ifx#1U\UnderFrametrue\fi
  \ifx#1w
    \ifnum\draft=1\wasdrafttrue\else\wasdraftfalse\fi
    \draft=\@ne
  \fi
  \let\next=\readFRAMEparams
  \fi
 \next
 }%
\def\IFRAME#1#2#3#4#5#6{%
      \bgroup
      \let\QCTOptA\empty
      \let\QCTOptB\empty
      \let\QCBOptA\empty
      \let\QCBOptB\empty
      #6%
      \parindent=0pt%
      \leftskip=0pt
      \rightskip=0pt
      \setbox0 = \hbox{\QCBOptA}%
      \@tempdima = #1\relax
      \ifOverFrame
          \typeout{This is not implemented yet}%
          \show\HELP
      \else
         \ifdim\wd0>\@tempdima
            \advance\@tempdima by \@tempdima
            \ifdim\wd0 >\@tempdima
               \textwidth=\@tempdima
               \setbox1 =\vbox{%
                  \noindent\hbox to \@tempdima{\hfill\GRAPHIC{#5}{#4}{#1}{#2}{#3}\hfill}\\%
                  \noindent\hbox to \@tempdima{\parbox[b]{\@tempdima}{\QCBOptA}}%
               }%
               \wd1=\@tempdima
            \else
               \textwidth=\wd0
               \setbox1 =\vbox{%
                 \noindent\hbox to \wd0{\hfill\GRAPHIC{#5}{#4}{#1}{#2}{#3}\hfill}\\%
                 \noindent\hbox{\QCBOptA}%
               }%
               \wd1=\wd0
            \fi
         \else
            \ifdim\wd0>0pt
              \hsize=\@tempdima
              \setbox1 =\vbox{%
                \unskip\GRAPHIC{#5}{#4}{#1}{#2}{0pt}%
                \break
                \unskip\hbox to \@tempdima{\hfill \QCBOptA\hfill}%
              }%
              \wd1=\@tempdima
           \else
              \hsize=\@tempdima
              \setbox1 =\vbox{%
                \unskip\GRAPHIC{#5}{#4}{#1}{#2}{0pt}%
              }%
              \wd1=\@tempdima
           \fi
         \fi
         \@tempdimb=\ht1
         \advance\@tempdimb by \dp1
         \advance\@tempdimb by -#2%
         \advance\@tempdimb by #3%
         \leavevmode
         \raise -\@tempdimb \hbox{\box1}%
      \fi
      \egroup%
}%
\def\DFRAME#1#2#3#4#5{%
 \begin{center}
     \let\QCTOptA\empty
     \let\QCTOptB\empty
     \let\QCBOptA\empty
     \let\QCBOptB\empty
     \ifOverFrame 
        #5\QCTOptA\par
     \fi
     \GRAPHIC{#4}{#3}{#1}{#2}{\z@}
     \ifUnderFrame 
        \nobreak\par #5\QCBOptA
     \fi
 \end{center}%
 }%
\def\FFRAME#1#2#3#4#5#6#7{%
 \begin{figure}[#1]%
  \let\QCTOptA\empty
  \let\QCTOptB\empty
  \let\QCBOptA\empty
  \let\QCBOptB\empty
  \ifOverFrame
    #4
    \ifx\QCTOptA\empty
    \else
      \ifx\QCTOptB\empty
        \caption{\QCTOptA}%
      \else
        \caption[\QCTOptB]{\QCTOptA}%
      \fi
    \fi
    \ifUnderFrame\else
      \label{#5}%
    \fi
  \else
    \UnderFrametrue%
  \fi
  \begin{center}\GRAPHIC{#7}{#6}{#2}{#3}{\z@}\end{center}%
  \ifUnderFrame
    #4
    \ifx\QCBOptA\empty
      \caption{}%
    \else
      \ifx\QCBOptB\empty
        \caption{\QCBOptA}%
      \else
        \caption[\QCBOptB]{\QCBOptA}%
      \fi
    \fi
    \label{#5}%
  \fi
  \end{figure}%
 }%
\def\makeactives{
  \catcode`\"=\active
  \catcode`\;=\active
  \catcode`\:=\active
  \catcode`\'=\active
  \catcode`\~=\active
}
   \gdef\activesoff{%
      \def"{\string"}
      \def;{\string;}
      \def:{\string:}
      \def'{\string'}
      \def~{\string~}
    }
\def\FRAME#1#2#3#4#5#6#7#8{%
 \bgroup
 \@ifundefined{bbl@deactivate}{}{\activesoff}
 \ifnum\draft=\@ne
   \wasdrafttrue
 \else
   \wasdraftfalse%
 \fi
 \def\LaTeXparams{}%
 \dispkind=\z@
 \def\LaTeXparams{}%
 \doFRAMEparams{#1}%
 \ifnum\dispkind=\z@\IFRAME{#2}{#3}{#4}{#7}{#8}{#5}\else
  \ifnum\dispkind=\@ne\DFRAME{#2}{#3}{#7}{#8}{#5}\else
   \ifnum\dispkind=\tw@
    \edef\@tempa{\noexpand\FFRAME{\LaTeXparams}}%
    \@tempa{#2}{#3}{#5}{#6}{#7}{#8}%
    \fi
   \fi
  \fi
  \ifwasdraft\draft=1\else\draft=0\fi{}%
  \egroup
 }%
\def\TEXUX#1{"texux"}
\def\func#1{\mathop{\rm #1}}%
\long\def\QQQ#1#2{%
     \long\expandafter\def\csname#1\endcsname{#2}}%
\long\def\QQA#1#2{}%
\def\QTR#1#2{{\csname#1\endcsname #2}}
\def\EXPAND#1[#2]#3{}%
\def\NOEXPAND#1[#2]#3{}%
\def\LaTeXparent#1{}%
\def\ChildStyles#1{}%
\def\ChildDefaults#1{}%
\def\QTagDef#1#2#3{}%
\def\QQfnmark#1{\footnotemark}
\def\makeatletter\input gnuindex.sty\makeatother\makeindex{\makeatletter\input gnuindex.sty\makeatother\makeindex}%
\def\initial#1{\bigbreak{\raggedright\large\bf #1}\kern 2\p@\penalty3000}}%
 \def\abstract{%
  \if@twocolumn
   \section*{Abstract (Not appropriate in this style!)}%
   \else \small 
   \begin{center}{\bf Abstract\vspace{-.5em}\vspace{\z@}}\end{center}%
   \quotation 
   \fi
  }%
   \def\registered{\relax\ifmmode{}\r@gistered
                    \else$\m@th\r@gistered$\fi}%
 \def\r@gistered{^{\ooalign
  {\hfil\raise.07ex\hbox{$\scriptstyle\rm\text{R}$}\hfil\crcr
  \mathhexbox20D}}}}{}%
\newdimen\theight
\def\Column{%
 \vadjust{\setbox\z@=\hbox{\scriptsize\quad\quad tcol}%
  \theight=\ht\z@\advance\theight by \dp\z@\advance\theight by \lineskip
  \kern -\theight \vbox to \theight{%
   \rightline{\rlap{\box\z@}}%
   \vss
   }%
  }%
 }%
\def\qed{%
 \ifhmode\unskip\nobreak\fi\ifmmode\ifinner\else\hskip5\p@\fi\fi
 \hbox{\hskip5\p@\vrule width4\p@ height6\p@ depth1.5\p@\hskip\p@}%
 }%
\def\miss{\hbox{\vrule height2\p@ width 2\p@ depth\z@}}%
\def\tcol#1{{\baselineskip=6\p@ \vcenter{#1}} \Column}  %
\def\newfmtname{LaTeX2e}
\def\chkcompat{%
   \if@compatibility
   \else
     \usepackage{latexsym}
   \fi
}
  \DeclareOldFontCommand{\rm}{\normalfont\rmfamily}{\mathrm}
  \DeclareOldFontCommand{\sf}{\normalfont\sffamily}{\mathsf}
  \DeclareOldFontCommand{\tt}{\normalfont\ttfamily}{\mathtt}
  \DeclareOldFontCommand{\bf}{\normalfont\bfseries}{\mathbf}
  \DeclareOldFontCommand{\it}{\normalfont\itshape}{\mathit}
  \DeclareOldFontCommand{\sl}{\normalfont\slshape}{\@nomath\sl}
  \DeclareOldFontCommand{\sc}{\normalfont\scshape}{\@nomath\sc}
\def\alpha{{\Greekmath 010B}}%
\def\beta{{\Greekmath 010C}}%
\def\gamma{{\Greekmath 010D}}%
\def\delta{{\Greekmath 010E}}%
\def\epsilon{{\Greekmath 010F}}%
\def\zeta{{\Greekmath 0110}}%
\def\eta{{\Greekmath 0111}}%
\def\theta{{\Greekmath 0112}}%
\def\iota{{\Greekmath 0113}}%
\def\kappa{{\Greekmath 0114}}%
\def\lambda{{\Greekmath 0115}}%
\def\mu{{\Greekmath 0116}}%
\def\nu{{\Greekmath 0117}}%
\def\xi{{\Greekmath 0118}}%
\def\pi{{\Greekmath 0119}}%
\def\rho{{\Greekmath 011A}}%
\def\sigma{{\Greekmath 011B}}%
\def\tau{{\Greekmath 011C}}%
\def\upsilon{{\Greekmath 011D}}%
\def\phi{{\Greekmath 011E}}%
\def\chi{{\Greekmath 011F}}%
\def\psi{{\Greekmath 0120}}%
\def\omega{{\Greekmath 0121}}%
\def\varepsilon{{\Greekmath 0122}}%
\def\vartheta{{\Greekmath 0123}}%
\def\varpi{{\Greekmath 0124}}%
\def\varrho{{\Greekmath 0125}}%
\def\varsigma{{\Greekmath 0126}}%
\def\varphi{{\Greekmath 0127}}%
\def\nabla{{\Greekmath 0272}}
\def\FindBoldGroup{%
   {\setbox0=\hbox{$\mathbf{x\global\edef\theboldgroup{\the\mathgroup}}$}}%
}
\def\Greekmath#1#2#3#4{%
    \if@compatibility
        \ifnum\mathgroup=\symbold
           \mathchoice{\mbox{\boldmath$\displaystyle\mathchar"#1#2#3#4$}}%
                      {\mbox{\boldmath$\textstyle\mathchar"#1#2#3#4$}}%
                      {\mbox{\boldmath$\scriptstyle\mathchar"#1#2#3#4$}}%
                      {\mbox{\boldmath$\scriptscriptstyle\mathchar"#1#2#3#4$}}%
        \else
           \mathchar"#1#2#3#4%
        \fi 
    \else 
        \FindBoldGroup
        \ifnum\mathgroup=\theboldgroup 
           \mathchoice{\mbox{\boldmath$\displaystyle\mathchar"#1#2#3#4$}}%
                      {\mbox{\boldmath$\textstyle\mathchar"#1#2#3#4$}}%
                      {\mbox{\boldmath$\scriptstyle\mathchar"#1#2#3#4$}}%
                      {\mbox{\boldmath$\scriptscriptstyle\mathchar"#1#2#3#4$}}%
        \else
           \mathchar"#1#2#3#4%
        \fi     	    
	  \fi}
\newif\ifGreekBold  \GreekBoldfalse
\let\SAVEPBF=\pbf
\def\pbf{\GreekBoldtrue\SAVEPBF}%
  \newcounter{equationnumber}  
  \def\mathletters{%
     \addtocounter{equation}{1}
     \edef\@currentlabel{\theequation}%
     \setcounter{equationnumber}{\c@equation}
     \setcounter{equation}{0}%
     \edef\theequation{\@currentlabel\noexpand\alph{equation}}%
  }
    \def\BibTeX{{\rm B\kern-.05em{\sc i\kern-.025em b}\kern-.08em
                 T\kern-.1667em\lower.7ex\hbox{E}\kern-.125emX}}}{}%
\def\AmS{{\protect\usefont{OMS}{cmsy}{m}{n}%
                A\kern-.1667em\lower.5ex\hbox{M}\kern-.125emS}}}{}%
\let\DOTSI\relax
\def\RIfM@{\relax\ifmmode}%
\def\FN@{\futurelet\next}%
\def\iint{\DOTSI\intno@\tw@\FN@\ints@}%
\def\iiint{\DOTSI\intno@\thr@@\FN@\ints@}%
\def\iiiint{\DOTSI\intno@4 \FN@\ints@}%
\def\idotsint{\DOTSI\intno@\z@\FN@\ints@}%
\def\ints@{\findlimits@\ints@@}%
\newif\iflimtoken@
\newif\iflimits@
\def\findlimits@{\limtoken@true\ifx\next\limits\limits@true
 \else\ifx\next\nolimits\limits@false\else
 \limtoken@false\ifx\ilimits@\nolimits\limits@false\else
 \ifinner\limits@false\else\limits@true\fi\fi\fi\fi}%
\def\multint@{\int\ifnum\intno@=\z@\intdots@                          
 \else\intkern@\fi                                                    
 \ifnum\intno@>\tw@\int\intkern@\fi                                   
 \ifnum\intno@>\thr@@\int\intkern@\fi                                 
 \int}
\def\multintlimits@{\intop\ifnum\intno@=\z@\intdots@\else\intkern@\fi
 \ifnum\intno@>\tw@\intop\intkern@\fi
 \ifnum\intno@>\thr@@\intop\intkern@\fi\intop}%
\def\intic@{%
    \mathchoice{\hskip.5em}{\hskip.4em}{\hskip.4em}{\hskip.4em}}%
\def\negintic@{\mathchoice
 {\hskip-.5em}{\hskip-.4em}{\hskip-.4em}{\hskip-.4em}}%
\def\ints@@{\iflimtoken@                                              
 \def\ints@@@{\iflimits@\negintic@
   \mathop{\intic@\multintlimits@}\limits                             
  \else\multint@\nolimits\fi                                          
  \eat@}
 \else                                                                
 \def\ints@@@{\iflimits@\negintic@
  \mathop{\intic@\multintlimits@}\limits\else
  \multint@\nolimits\fi}\fi\ints@@@}%
\def\intkern@{\mathchoice{\!\!\!}{\!\!}{\!\!}{\!\!}}%
\def\plaincdots@{\mathinner{\cdotp\cdotp\cdotp}}%
\def\intdots@{\mathchoice{\plaincdots@}%
 {{\cdotp}\mkern1.5mu{\cdotp}\mkern1.5mu{\cdotp}}%
 {{\cdotp}\mkern1mu{\cdotp}\mkern1mu{\cdotp}}%
 {{\cdotp}\mkern1mu{\cdotp}\mkern1mu{\cdotp}}}%
\def\RIfM@{\relax\protect\ifmmode}
\def\text{\RIfM@\expandafter\text@\else\expandafter\mbox\fi}
\let\nfss@text\text
\def\text@#1{\mathchoice
   {\textdef@\displaystyle\f@size{#1}}%
   {\textdef@\textstyle\tf@size{\firstchoice@false #1}}%
   {\textdef@\textstyle\sf@size{\firstchoice@false #1}}%
   {\textdef@\textstyle \ssf@size{\firstchoice@false #1}}%
   \glb@settings}
\def\textdef@#1#2#3{\hbox{{%
                    \everymath{#1}%
                    \let\f@size#2\selectfont
                    #3}}}
\newif\iffirstchoice@
\def\Let@{\relax\iffalse{\fi\let\\=\cr\iffalse}\fi}%
\def\vspace@{\def\vspace##1{\crcr\noalign{\vskip##1\relax}}}%
\def\multilimits@{\bgroup\vspace@\Let@
 \baselineskip\fontdimen10 \scriptfont\tw@
 \advance\baselineskip\fontdimen12 \scriptfont\tw@
 \lineskip\thr@@\fontdimen8 \scriptfont\thr@@
 \lineskiplimit\lineskip
 \vbox\bgroup\ialign\bgroup\hfil$\m@th\scriptstyle{##}$\hfil\crcr}%
\def\Sb{_\multilimits@}%
\def\endSb{\crcr\egroup\egroup\egroup}%
\def\Sp{^\multilimits@}%
\newdimen\ex@
\def\rightarrowfill@#1{$#1\m@th\mathord-\mkern-6mu\cleaders
 \hbox{$#1\mkern-2mu\mathord-\mkern-2mu$}\hfill
 \mkern-6mu\mathord\rightarrow$}%
\def\leftarrowfill@#1{$#1\m@th\mathord\leftarrow\mkern-6mu\cleaders
 \hbox{$#1\mkern-2mu\mathord-\mkern-2mu$}\hfill\mkern-6mu\mathord-$}%
\def\leftrightarrowfill@#1{$#1\m@th\mathord\leftarrow
\mkern-6mu\cleaders
 \hbox{$#1\mkern-2mu\mathord-\mkern-2mu$}\hfill
 \mkern-6mu\mathord\rightarrow$}%
\def\overrightarrow{\mathpalette\overrightarrow@}%
\def\overrightarrow@#1#2{\vbox{\ialign{##\crcr\rightarrowfill@#1\crcr
 \noalign{\kern-\ex@\nointerlineskip}$\m@th\hfil#1#2\hfil$\crcr}}}%
\def\overleftarrow{\mathpalette\overleftarrow@}%
\def\overleftarrow@#1#2{\vbox{\ialign{##\crcr\leftarrowfill@#1\crcr
 \noalign{\kern-\ex@\nointerlineskip}$\m@th\hfil#1#2\hfil$\crcr}}}%
\def\overleftrightarrow{\mathpalette\overleftrightarrow@}%
\def\overleftrightarrow@#1#2{\vbox{\ialign{##\crcr
   \leftrightarrowfill@#1\crcr
 \noalign{\kern-\ex@\nointerlineskip}$\m@th\hfil#1#2\hfil$\crcr}}}%
\def\underrightarrow{\mathpalette\underrightarrow@}%
\def\underrightarrow@#1#2{\vtop{\ialign{##\crcr$\m@th\hfil#1#2\hfil
  $\crcr\noalign{\nointerlineskip}\rightarrowfill@#1\crcr}}}%
\def\underleftarrow{\mathpalette\underleftarrow@}%
\def\underleftarrow@#1#2{\vtop{\ialign{##\crcr$\m@th\hfil#1#2\hfil
  $\crcr\noalign{\nointerlineskip}\leftarrowfill@#1\crcr}}}%
\def\underleftrightarrow{\mathpalette\underleftrightarrow@}%
\def\underleftrightarrow@#1#2{\vtop{\ialign{##\crcr$\m@th
  \hfil#1#2\hfil$\crcr
 \noalign{\nointerlineskip}\leftrightarrowfill@#1\crcr}}}%
\def\qopnamewl@#1{\mathop{\operator@font#1}\nlimits@}
\let\nlimits@\displaylimits
\def\setboxz@h{\setbox\z@\hbox}
\def\varlim@#1#2{\mathop{\vtop{\ialign{##\crcr
 \hfil$#1\m@th\operator@font lim$\hfil\crcr
 \noalign{\nointerlineskip}#2#1\crcr
 \noalign{\nointerlineskip\kern-\ex@}\crcr}}}}
 \def\rightarrowfill@#1{\m@th\setboxz@h{$#1-$}\ht\z@\z@
  $#1\copy\z@\mkern-6mu\cleaders
  \hbox{$#1\mkern-2mu\box\z@\mkern-2mu$}\hfill
  \mkern-6mu\mathord\rightarrow$}
\def\leftarrowfill@#1{\m@th\setboxz@h{$#1-$}\ht\z@\z@
  $#1\mathord\leftarrow\mkern-6mu\cleaders
  \hbox{$#1\mkern-2mu\copy\z@\mkern-2mu$}\hfill
  \mkern-6mu\box\z@$}
\def\projlim{\qopnamewl@{proj\,lim}}
\def\injlim{\qopnamewl@{inj\,lim}}
\def\varinjlim{\mathpalette\varlim@\rightarrowfill@}
\def\varprojlim{\mathpalette\varlim@\leftarrowfill@}
\def\varliminf{\mathpalette\varliminf@{}}
\def\varliminf@#1{\mathop{\underline{\vrule\@depth.2\ex@\@width\z@
   \hbox{$#1\m@th\operator@font lim$}}}}
\def\varlimsup{\mathpalette\varlimsup@{}}
\def\varlimsup@#1{\mathop{\overline
  {\hbox{$#1\m@th\operator@font lim$}}}}
\def\align{\@verbatim \frenchspacing\@vobeyspaces \@alignverbatim
You are using the "align" environment in a style in which it is not defined.}
\let\csname endalign*\endcsname =\endtrivlist
\def\alignat{\@verbatim \frenchspacing\@vobeyspaces \@alignatverbatim
You are using the "alignat" environment in a style in which it is not defined.}
\let\csname endalignat*\endcsname =\endtrivlist
\def\xalignat{\@verbatim \frenchspacing\@vobeyspaces \@xalignatverbatim
You are using the "xalignat" environment in a style in which it is not defined.}
\let\csname endxalignat*\endcsname =\endtrivlist
\def\gather{\@verbatim \frenchspacing\@vobeyspaces \@gatherverbatim
You are using the "gather" environment in a style in which it is not defined.}
\let\csname endgather*\endcsname =\endtrivlist
\def\multiline{\@verbatim \frenchspacing\@vobeyspaces \@multilineverbatim
You are using the "multiline" environment in a style in which it is not defined.}
\let\csname endmultiline*\endcsname =\endtrivlist
\def\arrax{\@verbatim \frenchspacing\@vobeyspaces \@arraxverbatim
You are using a type of "array" construct that is only allowed in AmS-LaTeX.}
\def\tabulax{\@verbatim \frenchspacing\@vobeyspaces \@tabulaxverbatim
You are using a type of "tabular" construct that is only allowed in AmS-LaTeX.}
\let\csname endarrax*\endcsname =\endtrivlist
\let\csname endtabulax*\endcsname =\endtrivlist
\def\@@eqncr{\let\@tempa\relax
    \ifcase\@eqcnt \def\@tempa{& & &}\or \def\@tempa{& &}%
      \else \def\@tempa{&}\fi
     \@tempa
     \if@eqnsw
        \iftag@
           \@taggnum
        \else
           \@eqnnum\stepcounter{equation}%
        \fi
     \fi
     \global\tag@false
     \global\@eqnswtrue
     \global\@eqcnt\z@\cr}
 \def\endequation{%
     \ifmmode\ifinner 
      \iftag@
        \addtocounter{equation}{-1} 
        $\hfil
           \displaywidth\linewidth\@taggnum\egroup \endtrivlist
        \global\tag@false
        \global\@ignoretrue   
      \else
        $\hfil
           \displaywidth\linewidth\@eqnnum\egroup \endtrivlist
        \global\tag@false
        \global\@ignoretrue 
      \fi
     \else   
      \iftag@
        \addtocounter{equation}{-1} 
        \eqno \hbox{\@taggnum}
        \global\tag@false%
        $$\global\@ignoretrue
      \else
        \eqno \hbox{\@eqnnum}
        $$\global\@ignoretrue
      \fi
     \fi\fi
 } 
 \newif\iftag@ \tag@false
 \def\tag{\@ifnextchar*{\@tagstar}{\@tag}}
 \def\@tag#1{%
     \global\tag@true
     \global\def\@taggnum{(#1)}}
 \def\@tagstar*#1{%
     \global\tag@true
     \global\def\@taggnum{#1}%
}
\newcommand\rev[1]{{\color{blue} #1}}
\renewcommand\rev[1]{#1}
\newcommand\nrev[1]{{\color{blue} #1}}
\renewcommand\nrev[1]{#1}
\newcommand\nnrev[1]{{\color{blue} #1}}
\renewcommand\nnrev[1]{#1}
\begin{document}

\title{Network Identification with Latent Nodes
  \\
  via Auto-Regressive Models\thanks{A preliminary version 
    appeared as~\cite{YZ-JC:16-acc} at the American Control
    Conference.}}

\author{Erfan Nozari \quad Yingbo Zhao \quad Jorge Cort\'{e}s
  \thanks{%
    E. Nozari and J. Cort\'es are with the Department of Mechanical
    and Aerospace Engineering, University of California at San Diego,
    La Jolla, CA 92093, \texttt{\small
      \{enozari,cortes\}@ucsd.edu}. Y. Zhao is with Cymer Corporation,
    San Diego, CA 92127, \texttt{\small yingbo.zhao@asml.com}}}

\maketitle

\begin{abstract}
  We consider linear time-invariant networks with unknown topology
  where only a \emph{manifest} subset of the nodes can be directly
  \nrev{actuated and measured} while the state of the remaining
  \emph{latent} nodes and their number are unknown. Our goal is to
  identify the transfer function of the manifest subnetwork and
  determine whether interactions between manifest nodes are direct or
  mediated by latent nodes.  We show that, if there are no inputs to
  the latent nodes, the manifest transfer function can be approximated
  arbitrarily well in the $H_{\infty}$-norm sense by the transfer
  function of an auto-regressive model and present a least-squares
  estimation method to construct the auto-regressive model from
  measured data. We show that the least-squares auto-regressive method
  guarantees an arbitrarily small $H_\infty$-norm error in the
  approximation of the manifest transfer function, exponentially
  decaying once the model order exceeds a certain threshold.  Finally,
  we show that when the latent subnetwork is acyclic, the proposed
  method achieves perfect identification of the manifest transfer
  function above a specific model order as the length of the data
  increases.  Various examples illustrate our results.
\end{abstract}

\section{Introduction}\label{Sec:Intro}

Network reconstruction problems are widespread in many areas of
science and engineering.  In systems biology, for instance, genetic
network identification uses data from RNA micro-array experiments to
identify the interaction pattern between genes in a \nrev{regulatory
  network~\cite{YXRW-HH:14,AJ-MZ-SB-GJP:09}.} In neuroscience,
researchers seek to understand how different regions of the brain
cooperate with each other by having subjects \nrev{perform certain
  goal-directed tasks while measuring their brain activity via
  multi-channel recordings such as electroencephalograms
  (EEG)~\cite{VS:11,SLB-AKS:11,JRI-AO-TM-MP-JS-GC-HP:14,AK-CMC-RK-PJF-NEC:08,MK-MD-WAT-SLB:01}.}
Similar examples exist in other areas including finance, social
networks, and physics.  Roughly speaking, the objective in network
identification is to determine causal relationships among the nodes in
the network that model the direction and strength of the interactions
between them.  While network control and coordination has made much
progress on problems where the interaction topology is either given or
the design objective itself, not so much attention has been devoted to
develop techniques to address the identification of unknown topologies
from measured data. The need for the latter is especially apparent in
the context of complex, large-scale networks, where it is often not
possible to \nrev{measure or actuate} all nodes, or even know their number.
In this paper, we seek to contribute to this body of work by studying
the effect that the presence of \nrev{unmeasured} nodes has on the
identification of networked linear systems with arbitrary topology.

\subsubsection*{Literature review}
An increasing number of works study topology identification problems
to better understand the interactions in large-scale networks and
their role in determining the network behavior.  A complex network is
commonly represented as a directed graph, and the interactions among
neighboring nodes are represented by directed edges whose weights
reflect the interaction strength. In this sense, topology
identification aims at identifying the adjacency matrix of the network
graph~\cite{CDG-GFR:01} \rev{or its Boolean
  structure~\cite{JS-DT-EMB:15}.}  The work~\cite{MNA-MM:12} studies
the complete characterization of the interaction topology of
consensus-type networks using a series of node-knockout experiments,
where nodes are sequentially forced to broadcast a zero state without
being removed from the network.  The work~\cite{SS-VMP:15} also uses
node-knockout experiments to identify the topology of directed linear
time-invariant networks relying on the cross-power spectral densities
of the network response to wide-sense stationary noise.  The
work~\cite{MT:07} presents a method to infer the topology of a network
of coupled phase oscillators from its stable response dynamics,
assuming that one can manipulate every individual node and perform
large number of experiments. In general, without such assumption, it
is difficult \rev{or impossible, depending on the additional
  structural information available,} to accurately identify the
topology of a general network.
As a result, a main focus has been on particular network realizations
that explain the measured data, such as the sparsest realization,
sometimes with a design parameter to manage the trade-off between
model accuracy and sparsity, see
e.g.,~\cite{AJ-MZ-SB-GJP:09,DM-GI-LG-MS:13}. Along these lines, the
work~\cite{DM-GI:10} considers the identification of networked linear
systems with tree topologies.  The above-referenced works rely on
knowledge of the number of nodes in the network. However, it is often
impossible to sample the state of all nodes, or even know the
existence of some of them.  The work~\cite{MJC-VYT-AA-ASW:11} studies
the problem of learning latent tree graphical models where samples are
available only from a subset of the nodes, and proposes
computationally efficient algorithms for learning trees without any
redundant hidden nodes.  The work~\cite{DM-MVS:12} proposes a method
to identify the latent nodes and consistently reconstruct the topology
under the assumptions that the network is a polytree and the degree of
each latent node is at least three, with out-degree of at least two.
Unlike the topology identification algorithms proposed
in~\cite{DM-GI:10,DM-MVS:12}, our approach here allows for the
possibility of cycles in the network topology.  \nrev{%
  Using the notion of the dynamical structure function of a network
  with latent nodes~\cite{JG-SW:08}, the work~\cite{YY-GBS-SW-JG:11}
  proposes a convex optimization-based approach to find the best
  Boolean structure among manifest nodes which consists of computing
  and comparing the distance between an estimated transfer matrix or
  data to all possible Boolean structures. The problem of minimal
  state-space realization of a given dynamical structure function was
  further studied in~\cite{YY-KG-JG:15}. In the present work, however,
  we use a least-square autoregressive identification approach to
  identify not only whether a pair of manifest nodes are dynamically
  connected, but also whether this connection is direct or indirect
  (latent-mediated) and, in the latter case, the length of the
  shortest path between the two.  }%
\rev{Recent work has employed sparse plus low-rank (S+L) decomposition to
  identify general graphical models (with the possibility of cycles)
  with latent variables for static~\cite{VC-PAP-ASW:12} and
  dynamic~\cite{MZ-RS:16} models. The present paper has two main
  differences with respect to this work.  First, the S+L decomposition
  assumes that the subnetwork among manifest nodes is sparse and the
  number of latent nodes is (considerably) smaller than the number of
  manifest ones, while our method is applicable to arbitrary
  networks. Second, although the identification procedure
  of~\cite{MZ-RS:16} also leads to an auto-regressive (AR) model, it
  is based on the so-called maximum-entropy covariance extension. This
  method, with origins in seismic vibrations and human voice analysis,
  seeks to \emph{maximize} the prediction error~\cite{CIB-SVG-AL:01}
  (while our approach seeks to \emph{minimize} it), leading to very
  different models.}
Finally, our work is inspired by the wide use in neuroscience of AR
models to analyze brain data via Granger causality and its variants
and the study of effective connectivity among different areas of the
brain, see
e.g.,~\cite{EB-OS:09,SLB-AKS:11,JRI-AO-TM-MP-JS-GC-HP:14}. The Granger
causality measure is a mainly descriptive tool that captures influence
and interconnection among time series. \nrev{A popular variant of
  Granger causality, direct directed transfer function
  (dDTF)~\cite{MK-MD-WAT-SLB:01,AK-MM-MK-KJB-SK:03} distinguishes
  between direct and indirect interconnections between two nodes by
  multiplying the directed transfer function (DTF, the normalized
  transfer function between the two nodes) by the partial coherence
  between them in the frequency domain.} We are motivated here by
understanding to what extent the reconstruction results obtained via
methods that build on Granger causality are sensitive to the presence
of latent nodes. \nrev{Furthermore, we propose a method using
  (multivariate) AR models for networks with latent nodes that
  distinguishes between direct and indirect (i.e., latent-mediated)
  interconnections between two nodes in the time domain based on the
  order of the interconnection between them.}

\subsubsection*{Statement of contributions}

We consider a scenario where one can only directly \nrev{actuate and measure}
a subset of the nodes, termed manifest, of a large linear
time-invariant network whose total number of nodes and interaction
topology are unknown.  The objective is to identify the manifest
transfer function, which is the submatrix corresponding to the
manifest nodes of the transfer function matrix of the entire
network. To achieve this, we study the transfer functions provided by
linear AR models. \rev{Our discussion shows how AR models can be used
  to effectively
  distinguish direct interactions between manifest nodes from indirect
  interactions mediated by latent nodes.}
Our first contribution shows that, if no inputs act on the latent
nodes, then there exists a class of AR models whose transfer functions
converge exponentially in the $H_{\infty }$ norm to the manifest
transfer function as the model order increases. We also show that, if
the latent subnetwork is acyclic, then this approximation is exact
above a specific model order.  Our second contribution characterizes
the properties of using least-squares auto-regressive estimation to
construct the AR model from measured data.  We establish that the
least-squares matrix estimate converges in probability to the optimal
matrix sequence identified in our first contribution, enabling us to
determine whether two manifest nodes interact directly or indirectly
through latent nodes. We also show that the least-squares
auto-regressive method guarantees an arbitrarily small $H_{\infty
}$-norm error as the length of data and the model order grow. In fact,
once the order of the AR model candidates exceeds a certain threshold,
the $H_{\infty }$-norm error decays exponentially. Finally, we show
that, when the latent subnetwork is acyclic, the method achieves
perfect identification of the manifest transfer function.
\rev{Throughout a series of remarks in the paper, we also discuss how
  our results can be extended to the identification of linear network
  models of arbitrary order. Simulations on a directed ring network,
  Erd\H{o}s--R{\'{e}}nyi random graphs, and real EEG data illustrate
  our results.}

\subsubsection*{Notation}

For a vector $x\in \mathbb{R}^n$, we use $x_{i}$ to denote its $i$-th
element. Given a sequence $\{x(k)\}_{k=0}^{\infty } \subset
\mathbb{R}^n$ and $j_{1}\leq j_{2}\in \mathbb{Z}_{\geq 0}$, we use
$\{x\}_{j_{1}}^{j_{2}}$ to denote the finite sequence
$\{x(j_{1}),x(j_{1}+1),\dotsc ,x(j_{2})\}$. We omit $j_{1}$ if
$j_{1}=0$. We denote $\norm{\{x\}_{j_{1}}^{j_{2}}} \triangleq \bigl(
\sum_{k=j_{1}}^{j_{2}}x^{T}(k)x(k)\bigr)^{\frac{1}{2}}$. A sequence of
random variables $\{x\}$ converges in probability to a random variable
$X$, denoted $\func{plim} _{k\rightarrow \infty } x(k) = X$, if
$\lim_{k\rightarrow \infty }\Pr (|x(k)-X| \geq \varepsilon )=0$ for
all $\varepsilon >0$. \nnrev{Accordingly, a sequence of random
  matrices $\{A\}$ converges to a random matrix $A_\infty$ in
  probability if $\func{plim}_{k \to \infty} A_{ij}(k) = A_{\infty,
    ij}$ for all $i, j$.}  For a real matrix $M\in \mathbb{R}^{m\times
  n}$, we denote its singular values in decreasing order as $\sigma
_{1}(M)\geq \sigma _{2}(M)\geq \cdots $ $\geq \sigma _{\min
  (m,n)}(M)\geq 0$ and its spectral norm by $\norm{M} =\sigma
_{1}(M)$. The max norm of $M$ is $\norm{M}_{\max}=\max_{i,j}\left\vert
  M_{ij}\right\vert $. We denote by $\rho (M)$ the spectral radius of
a square matrix~$M$. The matrix $M$ is Schur stable if and only if
$\rho (M)<1$. We let $\mathbf{0}_{m\times n}$ denote the $m\times n$
matrix with all zero elements and by $I_{n}$ the identity matrix of
dimension $n\times n$. The $H_{\infty }$-norm of a discrete transfer
function $T$ is $\norm{T}_{\infty }\triangleq \sup_{-\pi \leq \omega
  \leq \pi }\lVert T(\omega )\rVert $.

\section{Problem formulation}\label{Sec:PF}

We consider a discrete-time, linear time-invariant (LTI) network
dynamics with state-space representation%
\begin{align}\label{eq:LTI_Network}
  x(k+1)& =Ax(k)+u(k), \notag
  \\
  y(k)& =Cx(k), 
\end{align}%
where $k\in \mathbb{Z}_{\geq 0}$ is the time index, $x(k)\in
\mathbb{R}^{n}$ is the network state (with $x_{i}(k)$ representing the
state of node $i \in \until{n}$), $u(k)\in \mathbb{R}^{n}$ is the
control input (with $u_{i}(k)$ acting on node $i$), and $y(k)\in
\mathbb{R}^{m}$ is the network output. Here, $A\in \mathbb{R}%
^{n\times n}$ is the adjacency matrix of the network, characterizing
the interactions among neighboring nodes, and $C\in
\mathbb{R}^{m\times n}$ is the \nrev{output matrix}.
\nrev{Since natural system are usually driven by noise, the input, state, and output sequences are in general stochastic processes over the sample space of noise realizations.}
 \rev{For simplicity, the dynamical description~\eqref{eq:LTI_Network} assumes
  that all nodes are of order $1$, that is, $x(k+1)$ depends directly
  only on $x(k)$ and is conditionally independent of $\{x\}^{k-1}$
  given~$x(k)$. Nevertheless, as we discuss later (see e.g.,
  Remark~\ref{rem:higher-order}), all of the subsequent results are
  generalizable to systems whose dynamics (in the original
  ``physical'' variables) are described by difference equations of
  order higher than~$1$.}

Even though there is a control input at every node in the network
dynamics~\eqref{eq:LTI_Network}, we do not assume that all the control
inputs are user-specified.
In fact, in a large-scale network, it is common that one can \nrev{actuate} 
only a small subset of the nodes due to computational
constraints, physical limitations, or cost. A similar observation can
be made regarding the number of nodes whose state can be directly
\nrev{measured}. For these reasons, here we assume that the nodes of the
network are divided into $n_{m}\leq n$ \emph{manifest} nodes, which
can be directly \nrev{actuated and measured} by the user, and $n - n_m$
\emph{latent} nodes, which can neither be directly \nrev{actuated nor
measured} by the user. With this distinction, and using a permutation
of the indices in $(1,2,\ldots ,n)$ if necessary, we can decompose the
network and input state as $x = [x_m,x_l]$ and $u=[u_m,u_l]$,
respectively, where the subindex `$m$' corresponds to manifest nodes
and the subindex `$l$' corresponds to latent nodes.
With this convention, the \nrev{output matrix} takes the form $C=[I_{n_m
  \times n_m}, \mathbf{0}_{n_m \times (n-n_m)}]$.  With the
decomposition of the nodes into manifest and latent, the state-space
representation~\eqref{eq:LTI_Network} becomes
\begin{align}
  \begin{bmatrix}
    x_{m}(k+1)
    \\
    x_{l}(k+1)%
  \end{bmatrix}
  & =
  \begin{bmatrix}
    A_{11} & A_{12}
    \\
    A_{21} & A_{22}%
  \end{bmatrix}
  \begin{bmatrix}
    x_{m}(k)
    \\
    x_{l}(k)
  \end{bmatrix}
  +
  \begin{bmatrix}
    u_{m}(k)
    \\
    u_{l}(k)
  \end{bmatrix}
  , \notag
  \\
  y(k)& =x_{m}(k).  \label{eq:LTI_Network_partitioned}
\end{align}%
In the remainder of this paper, we consider the network in the
relabeled
form~\eqref{eq:LTI_Network_partitioned}. Fig.~\ref{fig:relabel}
illustrates this relabeling procedure \nrev{(corresponding to a
  linear transformation)} in a~ring.

\begin{figure}[tbh]
  \centering
  \includegraphics[scale=.65]{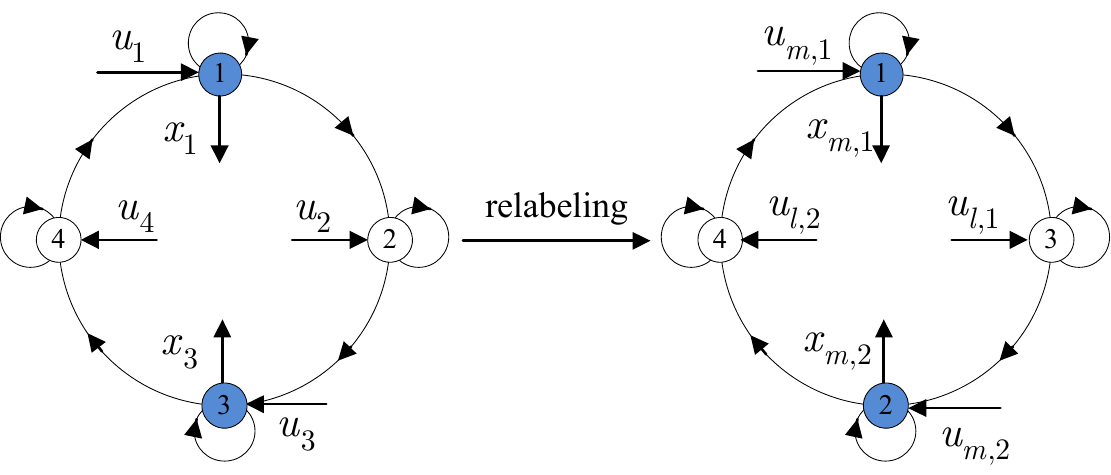}
  \caption{Node relabeling in a directed ring with $4$ nodes. Nodes
    $1$ and $3$ are manifest, nodes $2$ and $4$ are latent. The
    permutation $(1,2,3,4)\rightarrow (1,3,2,4)$ makes manifest and
    latent nodes have consecutive indices,
    as~in~\eqref{eq:LTI_Network_partitioned}.}\label{fig:relabel}
\end{figure}

\rev{
\nrev{Since the focus of this work is on network identification and not stabilization, we make the following standard assumption.}

\begin{assumption}\label{ass:assump1}
  The adjacency matrix of the complete network as well as the latent
  subnetwork are Schur stable, i.e., $\rho(A) < 1$ and $\rho(A_{22}) <
  1$.
\end{assumption}
}

\begin{remark}\longthmtitle{Direct versus latent interactions}
  The interaction graph of the manifest subnetwork is characterized by
  $A_{11}$.  In particular, the state of node $p$ affects the state of
  node $q$ \emph{directly} if and only if the entry on the $q$-th row
  and the $p$-th column, denoted by $A_{11}(q,p)$, is
  nonzero. However, even if $A_{11}(q,p)=0$, it is still possible that
  node $p$ affects node $q$ \emph{indirectly} through some latent
  nodes. The distinction between direct and indirect connections is an
  important point to which we come back later in our discussion.
  \oprocend
\end{remark}

We refer to a latent node as \emph{passive} if its corresponding input
is zero. Throughout the paper, we only deal with passive latent nodes,
so that $\{u_l\} \equiv 0$. We make the following assumption on the
input to the manifest nodes.

\begin{assumption}\label{ass:assump2}
  The input $\{u_{m}\}$ to the manifest subnetwork is a zero-mean
  stochastic process with independent and identically distributed
  (i.i.d.) \nnrev{absolutely continuous}%
  \footnote{\nnrev{Recall that an absolutely continuous random variable/vector is one that has a probability density function (e.g., Gaussian).}} 
  random vectors $u_{m}(k)$, with covariance $I_{n_{m}}$.
\end{assumption}

Assumption~\ref{ass:assump2} guarantees that 
$\{u_{m}\}$
is persistently exciting of arbitrary order and its power spectral
density does not vanish at any frequency.
 Similar
assumptions are common in system identification, see
e.g.,~\cite{LL:99,SS-VMP:15}. 
The zero-mean assumption can be
relaxed by assuming a nonzero but known $\E%
\lbrack u_{m}(k)]$ corresponding to the scenario where the
designer injects a deterministic stimulating signal into every
manifest node, which itself is subject to the disturbance of a
zero-mean white noise. Without loss of generality and for simplicity,
we assume $\E\lbrack u_{m}(k)]\equiv \mathbf{0}_{n_{m}}$.

Given the setup above, our objective is to identify the transfer
function $ T_{x_{m}u_{m}}(\omega )$ of the manifest subnetwork, that
is, the transfer function from $u_{m}$ to $x_{m}$, absent any
knowledge of the latent nodes. 

\begin{problem}\longthmtitle{Identification of the manifest transfer
    function}\label{Prob1}
  Given the measured data $\{y\}_{1}^{N}$, find a linear
  auto-regressive model of order $\tau $, with $N\gg \tau $, of the
  form
  \begin{equation}\label{NW_ML_CAN}
    \tilde{x}_{m}(k+1) = \sum_{i=0}^{\tau -1}\tilde{A}_{i}\tilde{x}
    _{m}(k-i)+u_{m}(k),
  \end{equation}
  such that the associated transfer function $T_{\tilde{x}_{m}u_{m}}$
  from $u_{m}$ to $\tilde{x}_{m}$ and the transfer function
  $T_{x_{m}u_{m}} $ from $u_{m}$ to $x_{m}$ in~\eqref{eq:LTI_Network}
  are close in the $H_{\infty }$-norm, i.e., $\lVert
  T_{\tilde{x}_{m}u_{m}}-T_{x_{m}u_{m}}\rVert _{\infty }$ is small.
\end{problem}

There are alternative methods to identify the transfer function matrix
$ T_{x_{m}u_{m}}$ besides the AR method in~\eqref{NW_ML_CAN}.
Our adoption here of AR model candidates is motivated by their
widespread use in neuroscience to determine causality and
interconnections in human brain connectivity models, see
e.g.,~\cite{AK-CMC-RK-PJF-NEC:08,SLB-AKS:11,JRI-AO-TM-MP-JS-GC-HP:14}%
\footnote{\nrev{In general, the main advantage of AR models over more general models such as
  ARMA or BJ is their simplicity, only capturing the internal dynamics and assuming negligible input \emph{noise correlation} (though putting no restriction on input \emph{signal correlation}, which is significant in brain dynamics). As a result, prediction error minimization has a closed-form solution for an AR model while it is non-convex in the ARMA or BJ cases.}}.
Equipped with time series data obtained during the performance of a
cognitive task,
 the \rev{conventional} procedure
consists of first estimating an AR model, then computing its
associated transfer function matrix, and finally evaluating the
Granger causality connectivity measure, or generalizations of it, in
the frequency domain. We are particularly motivated by the prospect of
understanding the sensitivity of these approaches to the presence of
latent nodes corresponding to brain regions that are active during the
cognitive task but are not directly \nrev{measured}.

\section{Asymptotically exact identification of the manifest transfer
  function}\label{Sec:AR_existence}

In this section we establish that, given an arbitrary precision, there
exists an AR model solving Problem~\ref{Prob1}. More precisely, we
show that there exists a sequence of AR models of the
form~\eqref{NW_ML_CAN} with increasing order whose transfer functions
converge to $T_{x_{m}u_{m}}$ exponentially in the $H_{\infty }$
sense. We later show that, if the latent subnetwork is acyclic,
then this approximation can be made exact.

\rev{
We start our discussion with a useful auxiliary result.

  \begin{lemma}\longthmtitle{Upper bound on
      $\|A_{22}^i\|$}\label{lem:rhobar}
    For any Schur stable $A_{22} \in \real^{n_l \times n_l}$ and any
    $\bar \rho \in (\rho(A_{22}), 1)$, there exists \nrev{$\kappa \in
      \realpos$} such that $ \|A_{22}^i\| \le \kappa \cdot \bar
    \rho^i$, for all $i \in \intnonneg$.
  \end{lemma}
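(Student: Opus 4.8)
The plan is to exploit Gelfand's spectral radius formula together with a separate bookkeeping argument for the finitely many low-order terms. Recall that for any square matrix $M$ one has $\rho(M) = \lim_{i\to\infty}\|M^i\|^{1/i}$, with the limit independent of the choice of submultiplicative norm. Applying this to $M = A_{22}$ and using the hypothesis $\rho(A_{22}) < \bar\rho$, there exists $N \in \intpos$ such that $\|A_{22}^i\|^{1/i} \le \bar\rho$, equivalently $\|A_{22}^i\| \le \bar\rho^{\,i}$, for all $i \ge N$.

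It then remains to absorb the indices $0 \le i < N$ into the constant. I would set
\[
  \kappa \triangleq \max\Big\{1,\ \max_{0 \le i \le N-1}\ \frac{\|A_{22}^i\|}{\bar\rho^{\,i}}\Big\},
\]
which is finite and positive since it is a maximum over finitely many finite quantities. By construction $\|A_{22}^i\| \le \kappa\,\bar\rho^{\,i}$ for $0 \le i < N$, and since $\kappa \ge 1$ the inequality $\|A_{22}^i\| \le \bar\rho^{\,i} \le \kappa\,\bar\rho^{\,i}$ also holds for $i \ge N$. This yields the claim for all $i \in \intnonneg$.

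As an alternative route I would mention the standard fact that for every $\varepsilon > 0$ there is a vector norm on $\real^{n_l}$ whose induced matrix norm $\|\cdot\|_*$ satisfies $\|A_{22}\|_* \le \rho(A_{22}) + \varepsilon$; choosing $\varepsilon = \bar\rho - \rho(A_{22}) > 0$ and invoking submultiplicativity gives $\|A_{22}^i\|_* \le \|A_{22}\|_*^{\,i} \le \bar\rho^{\,i}$, and equivalence of norms on the finite-dimensional space $\real^{n_l \times n_l}$ then supplies a constant $c > 0$ with $\|A_{22}^i\| \le c\,\|A_{22}^i\|_* \le c\,\bar\rho^{\,i}$, so that $\kappa = c$ works.

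I do not expect a genuine obstacle here. The only subtlety worth stating carefully is that Gelfand's formula is asymptotic, so $\|A_{22}^i\| \le \bar\rho^{\,i}$ need not hold for small $i$; the constant $\kappa$ is introduced precisely to cover that finite range, and taking $\kappa \ge 1$ is what keeps the single inequality $\|A_{22}^i\| \le \kappa\,\bar\rho^{\,i}$ valid uniformly over all $i \in \intnonneg$.
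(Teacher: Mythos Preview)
Your proof is correct and follows essentially the same approach as the paper, which also invokes Gelfand's spectral radius formula $\lim_{i\to\infty}\|A_{22}^i\|^{1/i}=\rho(A_{22})$ as the sole ingredient. You simply spell out explicitly the bookkeeping step (absorbing finitely many low-order indices into~$\kappa$) that the paper leaves implicit.
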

  \begin{proof}
    The result is an immediate consequence of the spectral radius
    formula $\lim_{i \to \infty} \|A_{22}^i\|^{1/i} = \rho(A_{22})$.
  \end{proof}

We are now ready to state the main result of this section.

\begin{theorem}\longthmtitle{AR model whose transfer function
    converges to the manifest transfer function}\label{Theo1}
  Consider the LTI network described
  by~\eqref{eq:LTI_Network_partitioned} where all the latent nodes are
  passive.
   For any $\bar \rho \in (\rho(A_{22}), 1)$, there exists $\bar \gamma
  \in \realpos$ such that for all $\tau \in \intnonneg$, the AR
  model~\eqref{NW_ML_CAN} with 
  \begin{align}\label{eq:Pf_Theo_3}
    \tilde{A}_{0}^{\ast }=A_{11},\quad \tilde{A}_{i}^{\ast
    } = A_{12}A_{22}^{i-1}A_{21},  \ i \in \until{\tau -1},
  \end{align}
  guarantees
  \begin{align}\label{eq:Theo_1}
    \lVert T_{\tilde{x}_{m}u_{m}}(\cdot, \tau )-T_{x_{m}u_{m}}\rVert
    _{\infty } \le \bar{\gamma} \cdot \bar \rho^{\tau}.  
  \end{align}%
\end{theorem}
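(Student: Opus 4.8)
The plan is to move to the frequency domain, recognize the prescribed matrices in~\eqref{eq:Pf_Theo_3} as the order-$\tau$ truncation of a convergent matrix power series, and control the remaining tail using Lemma~\ref{lem:rhobar}. First I would take $z$-transforms in~\eqref{eq:LTI_Network_partitioned} with $\{u_l\} \equiv 0$; eliminating $x_l$ through $X_l(z) = (zI - A_{22})^{-1} A_{21} X_m(z)$ gives
\begin{equation*}
  T_{x_m u_m}(z) = \bigl(zI - A_{11} - A_{12}(zI - A_{22})^{-1} A_{21}\bigr)^{-1},
\end{equation*}
which is the $(1,1)$ block of $(zI - A)^{-1}$; since $\rho(A) < 1$ by Assumption~\ref{ass:assump1}, this matrix has no poles on $\{|z| = 1\}$, so $\beta \triangleq \norm{T_{x_m u_m}}_\infty < \infty$. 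Similarly the AR model~\eqref{NW_ML_CAN} has transfer function $T_{\tilde x_m u_m}(z,\tau) = \bigl(zI - \tilde M_\tau(z)\bigr)^{-1}$ with $\tilde M_\tau(z) \triangleq \sum_{i=0}^{\tau-1} \tilde A_i z^{-i}$.

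Next I would use that, on $\{|z| = 1\}$ and since $\rho(A_{22}) < 1$, the Neumann expansion $(zI - A_{22})^{-1} = \sum_{j \ge 0} A_{22}^{j} z^{-(j+1)}$ holds and converges absolutely by Lemma~\ref{lem:rhobar}. Hence $A_{11} + A_{12}(zI - A_{22})^{-1} A_{21} = \sum_{i \ge 0} \tilde A_i^\ast z^{-i}$ with $\tilde A_i^\ast$ exactly the matrices in~\eqref{eq:Pf_Theo_3}, so $\tilde M_\tau$ built from~\eqref{eq:Pf_Theo_3} is precisely the order-$\tau$ truncation of this series and the two inverse transfer functions differ only by its tail,
\begin{equation*}
  E_\tau(z) \triangleq T_{\tilde x_m u_m}(z,\tau)^{-1} - T_{x_m u_m}(z)^{-1} = A_{12}\Bigl(\sum_{j \ge \tau-1} A_{22}^{j} z^{-(j+1)}\Bigr) A_{21},
\end{equation*}
for $\tau \ge 1$. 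Lemma~\ref{lem:rhobar} with the given $\bar\rho$ then bounds, uniformly over $|z| = 1$, $\norm{E_\tau(z)} \le \norm{A_{12}}\,\norm{A_{21}}\,\kappa \sum_{j \ge \tau-1} \bar\rho^{j} = C \bar\rho^{\tau-1}$ with $C \triangleq \kappa\norm{A_{12}}\norm{A_{21}}/(1 - \bar\rho)$. This is where the exponential rate comes from.

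Finally I would convert this into the claimed bound on $\norm{T_{\tilde x_m u_m}(\cdot,\tau) - T_{x_m u_m}}_\infty$. Writing $T_{\tilde x_m u_m} = (T_{x_m u_m}^{-1} + E_\tau)^{-1} = (I + T_{x_m u_m} E_\tau)^{-1} T_{x_m u_m}$ — legitimate pointwise on $|z| = 1$ as soon as $\norm{T_{x_m u_m}(z) E_\tau(z)} < 1$, hence for every $\tau$ beyond the threshold $\tau_0$ at which $\beta C \bar\rho^{\tau_0-1} \le \tfrac12$ — and using $(I + X)^{-1} - I = -X(I + X)^{-1}$, I obtain $\norm{T_{\tilde x_m u_m}(\cdot,\tau) - T_{x_m u_m}}_\infty \le \beta^2 C \bar\rho^{\tau-1} / (1 - \beta C \bar\rho^{\tau-1}) \le 2\beta^2 C \bar\rho^{\tau-1}$ for $\tau \ge \tau_0$. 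The finitely many orders $\tau < \tau_0$ (each with finite $H_\infty$-norm, e.g.\ $\tau = 0$ gives $T_{\tilde x_m u_m} = z^{-1}I$) are absorbed into the constant, so $\bar\gamma \triangleq \max\{2\beta^2 C / \bar\rho,\ \max_{\tau < \tau_0} \bar\rho^{-\tau}\norm{T_{\tilde x_m u_m}(\cdot,\tau) - T_{x_m u_m}}_\infty\}$ works for all $\tau$. I expect this last step to be the only genuinely delicate point: for small $\tau$ the truncated AR model need not be Schur stable, so the Neumann/perturbation argument controlling $\norm{(zI - \tilde M_\tau(z))^{-1}}$ only applies once $E_\tau$ is small; the earlier steps are routine geometric-series bookkeeping.
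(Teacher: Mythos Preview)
Your overall strategy---Neumann expansion of $(zI-A_{22})^{-1}$, identification of~\eqref{eq:Pf_Theo_3} as the order-$\tau$ truncation, geometric tail bound via Lemma~\ref{lem:rhobar}, then a resolvent-perturbation estimate---matches the paper's proof almost step for step. The paper writes the difference as $T_{x_m u_m}(T_{x_m u_m}^{-1}-T_{\tilde x_m u_m}^{-1})T_{\tilde x_m u_m}$ and bounds it by $\|T_{x_m u_m}\|_\infty\,\|T_{\tilde x_m u_m}(\cdot,\tau)\|_\infty\,\|E_\tau\|_\infty$, which is equivalent to your Neumann step; so for $\tau$ large enough both arguments coincide.

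The gap is exactly the point you flag yourself: for the finitely many small orders $\tau<\tau_0$ you \emph{assert} that $\|T_{\tilde x_m u_m}(\cdot,\tau)\|_\infty<\infty$ but do not prove it, and in your last sentence you concede that the truncated AR model ``need not be Schur stable.'' Without this, $\bar\gamma$ is not well defined. The paper closes this gap with a short time-domain argument: from~\eqref{eq:LTI_Network_partitioned} and the AR recursion with coefficients~\eqref{eq:Pf_Theo_3} one has, for zero initial conditions,
\[
x_m(k)-\tilde x_m(k)=A_{12}A_{22}^{\tau-1}x_l(k-\tau),
\]
and since $\rho(A)<1$ both $x_m$ and $x_l$ stay bounded for any bounded $u_m$, hence so does $\tilde x_m$. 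Thus $T_{\tilde x_m u_m}(\cdot,\tau)$ is BIBO stable for \emph{every} finite $\tau$, has no poles on the unit circle, and $\|T_{\tilde x_m u_m}(\cdot,\tau)\|_\infty<\infty$. Adding this one line removes your caveat and makes the absorption of small $\tau$ into $\bar\gamma$ legitimate.
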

}
\begin{proof}
  We obtain from~\eqref{eq:LTI_Network_partitioned} that
  \begin{align}\label{eq:Pf_Theo_1}
    T_{x_{m}u_{m}}(\omega )&
    =(zI_{n_{m}}-A_{11}-A_{12}(zI_{n_{l}}-A_{22})^{-1}A_{21})^{-1}
    \notag
    \\
    & \overset{(a)}{=}(zI_{n_{m}}-A_{11}-\sum_{i=1}^{\infty
    }z^{-i}A_{12}A_{22}^{i-1}A_{21})^{-1}, 
  \end{align}%
  where $z=e^{j\omega }$ and $(a)$ follows by using the relation $%
  (zI_{n_{l}}-A_{22})^{-1}=\sum_{i=1}^{\infty
  }z^{-i}A_{22}^{i-1}$. Similarly, from~\eqref{NW_ML_CAN} we obtain
  \begin{equation}
    T_{\tilde{x}_{m}u_{m}}(\omega ,\tau )=(zI_{n_{m}}-\sum_{i=0}^{\tau
      -1}z^{-i}
    \tilde{A}_{i}^{\ast })^{-1}.  \label{eq:Pf_Theo_2}
  \end{equation}%
  Here we write the transfer function as
  $T_{\tilde{x}_{m}u_{m}}(\omega ,\tau ) $ to emphasize its dependence
  on $\tau $. It then follows directly that
  \begin{align*}
    & \lVert T_{\tilde{x}_{m}u_{m}}(\cdot, \tau
    )-T_{x_{m}u_{m}}\rVert _{\infty } \notag
    \\
    & =\lVert T_{x_{m}u_{m}}(T_{x_{m}u_{m}}^{-1} -T_{\tilde{x}%
      _{m}u_{m}}^{-1}(\cdot, \tau ))T_{\tilde{x}_{m}u_{m}}(\cdot, \tau )\rVert _{\infty } \notag
    \\
    & \overset{(a)}{\leq } \lVert T_{x_{m}u_{m}}\rVert
    _{\infty }\rVert T_{\tilde{x}_{m}u_{m}}(\cdot, \tau )\rVert
    _{\infty } \notag
     \lVert T_{x_{m}u_{m}}^{-1}-T_{\tilde{x}_{m}u_{m}}^{-1}(\cdot, \tau )\rVert _{\infty }
    \notag
    \\
    & \overset{(b)}{\leq }\lVert T_{x_{m}u_{m}}\rVert
    _{\infty }\rVert T_{\tilde{x}_{m}u_{m}}(\cdot, \tau )\rVert
    _{\infty }\sum_{i=\tau }^{\infty }\lVert
    z^{-i}A_{12}A_{22}^{i-1}A_{21}\rVert _{\infty } \notag
    \\
    & \rev{\overset{(c)}{\le} \lVert T_{x_{m}u_{m}}\rVert
    _{\infty }\rVert T_{\tilde{x}_{m}u_{m}}(\cdot, \tau )\rVert
    _{\infty } \|A_{12}\| \|A_{21}\| \sum_{i = \tau}^\infty \|A_{22}^{i - 1}\|} \notag
    \\
    & \rev{\overset{(d)}{\le} \gamma (\tau ) \cdot \bar \rho^\tau,}
  \end{align*}%
  where
    \begin{align}
      \gamma (\tau ) \triangleq \frac{\kappa \lVert
        T_{x_{m}u_{m}}\rVert_{\infty } \lVert A_{12}\rVert
        \rVert A_{21}\rVert}{\bar \rho - \bar \rho^2} \rVert
      T_{\tilde{x}_{m}u_{m}}(\cdot, \tau )\rVert _{\infty }
      . \label{eq:def_of_gamma}
    \end{align}
   Here, $(a)$ follows from the sub-multiplicativity of induced
  norms, $(b)$ follows by the sub-additivity of norms, $(c)$ follows
  by the definition of the $H_{\infty }$-norm and also the
  sub-multiplicativity of induced norms, \rev{and $(d)$ follows from
    Lemma~\ref{lem:rhobar}. The remainder of the proof is devoted to
    showing the existence of a uniform upper bound $\bar \gamma$ for
    $\gamma(\tau)$. By the definition of the $H_\infty$-norm,
    \begin{align}
      \lVert T_{\tilde{x}_{m}u_{m}}(\cdot, \tau )\rVert _{\infty }
      &= \sup_{-\pi \leq \omega \leq \pi } \sigma_{\max}\big(
      T_{\tilde{x}_{m}u_{m}}(\omega ,\tau)\big) \label{eq:Pf_Theo_5}
      \\
      &\overset{(a)}{=} \Big(\inf_{-\pi \leq \omega \leq \pi }\sigma
      _{\min}\big(%
      T_{\tilde{x}_{m}u_{m}}^{-1}(\omega ,\tau )\big)\Big)^{-1}, \notag
    \end{align}}
  where $(a)$ holds due to the fact that $\sigma _{\max}(M)=\sigma
  _{\min}^{-1}(M^{-1})$ for any invertible matrix $M$.
  \rev{ To complete the proof, we only need to show that
    \begin{align}\label{eq:infs}
      \vartheta \triangleq \inf_{\tau \in \intnonneg} \inf_{-\pi \leq
        \omega \leq \pi
      }\sigma_{\min}\big(T_{\tilde{x}_{m}u_{m}}^{-1}(\omega ,\tau )\big) >
      0.
    \end{align}
    We show this in two steps.
    \begin{enumerate}
    \item It follows from~\eqref{eq:Pf_Theo_1}
      and~\eqref{eq:Pf_Theo_2} that
      \begin{align*}
        \lim_{\tau \to \infty} T_{\tilde x_m u_m}^{-1}(\omega, \tau) =
        T_{x_m u_m}^{-1}(\omega), \quad \forall \omega \in [-\pi, \pi].
      \end{align*}
      It is straightforward to show, using the exponential decay of
      $A_{22}^\tau$ and \nrev{definition} of uniform convergence,
      that each entry of $T_{\tilde x_m u_m}^{-1}(\cdot, \tau)$ 
      converges uniformly to the corresponding
      entry of $T_{x_m u_m}^{-1}$. Hence, given the uniform
      continuity of matrix eigenvalues as a function of matrix
      entries~\cite[Thm 7.8c]{PH:88},
      $\sigma_{\min}\big(T_{\tilde{x}_{m}u_{m}}^{-1}(\cdot ,\tau
      )\big)$ converges uniformly to
      $\sigma_{\min}\big(T_{x_{m}u_{m}}^{-1}\big)$. Thus,
      since $\inf_{-\pi \le \omega \le \pi}
      \sigma_{\min}\big(T_{x_{m}u_{m}}^{-1}(\omega)\big)$ $= \|T_{x_m
        u_m}\|_\infty > 0$ (which itself holds by
      Assumption~\ref{ass:assump1}), there exists $\tau_0 \in
      \intnonneg$ such that
      \begin{align*}
        \inf_{\tau \ge \tau_0} \inf_{-\pi \leq \omega \leq \pi
        }\sigma_{\min}\big(T_{\tilde{x}_{m}u_{m}}^{-1}(\omega ,\tau )\big) >
        0.
      \end{align*} 
    \item For any finite $\tau$, we show that $T_{\tilde x_m u_m}(\cdot,
      \tau)$ is BIBO stable and thus has no poles on the unit circle
      (which in turn guarantees $\inf_{-\pi \leq \omega \leq \pi
      }\sigma_{\min}\big(T_{\tilde{x}_{m}u_{m}}^{-1}(\omega ,\tau
      )\big) > 0$). For any bounded input $u_m$, let the corresponding
      outputs of $T_{\tilde x_m u_m}(\cdot, \tau)$ and $T_{x_m u_m}$ be
      denoted by $\tilde x_m$ and $x_m$, resp. (with
      initial states set to zero). We then have 
      \begin{align*}
        x_m(k) - \tilde x_m(k) = A_{12} A_{22}^{\tau - 1} x_l(k - \tau),
      \end{align*}
      where $x_l$ is the (internal) state of the latent nodes in $T_{x_m
        u_m}$. By Assumption~\ref{ass:assump1}, both $x_m(k)$ and
      $A_{12} A_{22}^{\tau - 1} x_l(k - \tau)$ are bounded, proving the
      BIBO stability of $T_{\tilde x_m u_m}(\cdot, \tau)$.
    \end{enumerate}
    Hence, \eqref{eq:infs} follows by combining (i) and (ii) and the
    fact that the decomposition $\intnonneg = \{0\} \cup \{1\} \cup
    \cdots \cup \{\tau_0 - 1\} \cup \{\tau_0, \tau_0 + 1, \dots\}$ is
    finite. Equivalently, there exists $U > 0$ such that $\rVert
    T_{\tilde{x}_{m}u_{m}}(\cdot ,\tau )\rVert_\infty < U$ for all $\tau \in
    \intnonneg$, so~\eqref{eq:Theo_1} holds with $\bar \gamma = \kappa
    U \|T_{x_m u_m}\|_\infty \|A_{12}\| \|A_{21}\| / (\bar
    \rho - \bar \rho^2)$.  }
\end{proof}

Theorem~\ref{Theo1} shows that the presence of latent nodes in
the network, as long as they do not receive any external input, does
not affect the achievable accuracy of the identification via
AR modeling of the manifest transfer function.

\begin{remark}\longthmtitle{Direct versus latent
    interactions -- cont'd}\label{Remark_AR_t_domain}
  It follows from the network
  dynamics~\eqref{eq:LTI_Network_partitioned} that
  \begin{equation}
    x_{m}(k+1)=\sum_{i=0}^{k}\tilde{A}_{i}^{\ast
    }x_{m}(k-i)+A_{12}A_{22}^{k}x_{l}(0)+u_{m}(k).
    \label{eq1:Remark_AR_t_domain}
  \end{equation}%
  By virtue of \eqref{eq1:Remark_AR_t_domain}, we can distinguish
  whether two manifest nodes interact directly or indirectly through
  latent nodes by looking at the matrix sequence
  $\{\tilde{A}_{i}^{\ast }\}$. First, the state of manifest node $p$
  affects the state of manifest node $q$ directly if and only if
  $\tilde{A}_{0}^{\ast }(q,p)=A_{11}(q,p)\neq 0$.  Similarly, the
  state of manifest node $p$ affects the state of manifest node $q$
  indirectly through latent nodes if and only if $\tilde{A}_{i}^{\ast
  }(q,p)\neq 0$ for some $i\geq 1$. In particular, from the relation
  $\tilde{A}_{i}^{\ast }=-A_{12}A_{22}^{i-1}A_{21}$, one can see that
  the state of $p$ first affects some latent nodes (that correspond to
  the nonzero entries in the $p$-th column of~$A_{21}$) through
  $A_{21}$, then propagates through the latent subnetwork, reflected
  by $A_{22}^{i-1}$, and finally affects $q$ through~$A_{12}$.
  Furthermore, if the latent subnetwork is acyclic, then
  $\tilde{A}_{i}^{\ast }(q,p)\neq 0$ implies that there are exactly
  $i$ latent nodes in a path connecting $p$ to~$q$. \oprocend
\end{remark}

\rev{
  \begin{remark}\longthmtitle{Systems described by higher-order
      difference equations}\label{rem:higher-order}
    Unlike the system description in~\eqref{eq:LTI_Network}, the
    dynamic behavior of many real-world complex systems such as the
    brain cortical networks is described by difference equations of
    orders significantly greater than $1$, i.e.,
    \begin{align}\label{eq:LTI_Network-HO}
      x(k+1) &= A^{(0)} x(k) + A^{(1)} x(k - 1) + \cdots
      \\
      \notag &\quad + A^{(\nu - 1)} x(k - \nu + 1) +u(k), \qquad \nu
      \gg 1
    \end{align}
    where $x_1, \dots, x_{n_m}$ still denote the manifest (sensed and
    actuated) nodes and $x_{n_m + 1}, \ldots, x_n$ are the latent
    ones. In this description, the vector $x$ corresponds to some
    relevant physical variables.  Defining the state vector $\xi(k) =
    [x(k)^T \ x(k - 1)^T \ \cdots \ x(k - \nu + 1)^T]^T$, one can
    rewrite~\eqref{eq:LTI_Network-HO} in order-$1$
    form as
    \begin{align}\label{eq:LTI_Network-HO-o1}
      \left[\!\!\begin{array}{c} \xi_m(k + 1) \\ \xi_l(k +
          1) \end{array}\!\!\right] = \left[\!\!\begin{array}{cc}
          \Ac_{11} & \Ac_{12} \\ \Ac_{21} &
          \Ac_{22} \end{array}\!\!\right] \left[\!\!\begin{array}{c}
          \xi_m(k) \\ \xi_l(k) \end{array}\!\!\right] +
      \left[\!\!\begin{array}{c} u_m(k) \\ 0 \end{array}\!\!\right] ,
    \end{align}
    where $\xi_m(k) = x_m(k)$, $\xi_l(k) = [x_l(k)^T \ x_m(k - 1)^T \
    x_l(k - 1)^T \ \cdots \ x_m(k - \nu + 1)^T \ x_l(k - \nu +
    1)^T]^T$, $\Ac_{11} = A^{(0)}_{11}$, and
    \begin{align*}
      \Ac_{12} &= \left[\begin{array}{cccccc} A^{(0)}_{12} &
          A^{(1)}_{11} & A^{(1)}_{12} & \cdots & A^{(\tau - 1)}_{11} &
          A^{(\tau - 1)}_{12} \end{array}\right],
      \\
      \Ac_{21} &= \left[\begin{array}{cccccc} (A^{(0)}_{21})^T &
          I_{n_m} & 0 & \cdots & 0 & 0 \end{array}\right]^T,
      \\
      \Ac_{22} &= 
\setlength\arraycolsep{1pt}
 \begin{bmatrix}
 A^{(0)}_{22} & A^{(1)}_{21} &
        A^{(1)}_{22} & \cdots & A^{(\tau - 2)}_{21} & A^{(\tau - 2)}_{22} & A^{(\tau - 1)}_{21} & A^{(\tau - 1)}_{22}
        \\
        0 & 0 & 0 & \cdots & 0 & 0 & 0 & 0
        \\
        I_{n_l} & 0 & 0 & \cdots & 0 & 0 & 0 & 0
        \\
        \vdots & \vdots & \vdots & \ddots & \vdots & \vdots & \vdots & \vdots
        \\
        0 & 0 & 0 & \cdots & 0 & 0 & 0 & 0
        \\
        0 & 0 & 0 & \cdots & 0 & 0 & 0 & 0
        \\
        0 & 0 & 0 & \cdots & I_{n_m} & 0 & 0 & 0
        \\
        0 & 0 & 0 & \cdots & 0 & I_{n_l} & 0 & 0 \end{bmatrix} .
    \end{align*}
    In this description, we view $\xi_m$ as the actual ``manifest
    state'' of the system while the whole vector $\xi_l$ is the
    ``latent state''. The reason for this interpretation is that, at
    any time $k$, only $x_m(k)$ is directly sensed/actuated while $x(k
    - 1), \dots, x(k - \nu + 1)$ are quantities stored in the
    system. Interestingly, for the order-$1$
    description~\eqref{eq:LTI_Network}, this observation brings up the
    possibility of some of the latent variables $x_l$ simply being a
    relayed version of manifest variables. Note that, under this
    interpretation, the matrices $A^{(1)}_{11}, \dots, A^{(\nu -
      1)}_{11}$ represent manifest-latent (rather than
    manifest-manifest)
    interactions. From~\eqref{eq:LTI_Network-HO-o1}, it is clear that
    all the treatment for~\eqref{eq:LTI_Network} is readily
    applicable. Nevertheless, as $\nu$ increases, larger $\tau$ is
    necessary in order for~\eqref{NW_ML_CAN} to represent the system
    accurately. This is both intuitive and clear
    from~\eqref{eq:Theo_1} and~\eqref{eq:def_of_gamma}, where
    increasing $\nu$ results in larger $\|A_{12}\|$ and $\|A_{21}\|$
    as well as (usually) $\|T_{x_m u_m}\|$ and $\rho(A_{22})$. This,
    in turn, may result in numerical difficulties when one constructs
    the AR model from recorded input-output data (which is the subject
    of the next section). \oprocend
  \end{remark}
}

Next, we show that there exists an AR model~\eqref{NW_ML_CAN} whose
transfer function coincides with the manifest transfer function if the
latent subnetwork is acyclic.

\begin{corollary}\longthmtitle{Exact manifest transfer function
    identification for acyclic latent subnetworks}\label{Theo2}
  Under the \nnrev{assumptions} of Theorem~\ref{Theo1}, further assume that the
  latent subnetwork is acyclic, i.e., there exists $\tau _{22}\in
  \mathbb{Z}_{\geq 1}$ such that $A_{22}^{\tau
    _{22}}=\mathbf{0}_{n_{l}\times n_{l}}$. Then, the matrix sequence
  $\tilde{A}_{0}^{\ast },\cdots ,\tilde{A}%
  _{\tau _{22}}^{\ast }$ in~\eqref{eq:Pf_Theo_3}
  ensures~$T_{\tilde{x}_{m}u_{m}}=T_{x_{m}u_{m}}$.
\end{corollary}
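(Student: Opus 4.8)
The plan is to exploit the series representation of $T_{x_{m}u_{m}}$ already derived in the proof of Theorem~\ref{Theo1}, specializing it to the nilpotent case. Since $A_{22}^{\tau_{22}} = \mathbf{0}_{n_{l}\times n_{l}}$, the identity $(zI_{n_{l}}-A_{22})^{-1} = \sum_{i=1}^{\infty} z^{-i} A_{22}^{i-1}$ used in step~$(a)$ of~\eqref{eq:Pf_Theo_1} reduces to the \emph{finite} sum $\sum_{i=1}^{\tau_{22}} z^{-i} A_{22}^{i-1}$, because every term with $i-1 \ge \tau_{22}$ vanishes. I would first note that this truncated expansion is valid for \emph{all} $z = e^{j\omega}$ on the unit circle: nilpotency forces $\rho(A_{22}) = 0 < 1$, so $zI_{n_{l}}-A_{22}$ is invertible for every $z \neq 0$ and the finite geometric-type series is exact, with no convergence caveat.

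Substituting this truncation into~\eqref{eq:Pf_Theo_1} yields $T_{x_{m}u_{m}}(\omega) = \bigl(zI_{n_{m}} - A_{11} - \sum_{i=1}^{\tau_{22}} z^{-i} A_{12} A_{22}^{i-1} A_{21}\bigr)^{-1}$, and recalling the definitions $\tilde{A}_{0}^{\ast} = A_{11}$ and $\tilde{A}_{i}^{\ast} = A_{12} A_{22}^{i-1} A_{21}$ from~\eqref{eq:Pf_Theo_3}, this is precisely $\bigl(zI_{n_{m}} - \sum_{i=0}^{\tau_{22}} z^{-i} \tilde{A}_{i}^{\ast}\bigr)^{-1}$. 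On the other hand, the AR transfer function in~\eqref{eq:Pf_Theo_2} evaluated with order $\tau = \tau_{22}+1$ --- that is, the model~\eqref{NW_ML_CAN} driven by exactly the $\tau_{22}+1$ matrices $\tilde{A}_{0}^{\ast},\dots,\tilde{A}_{\tau_{22}}^{\ast}$ --- is $T_{\tilde{x}_{m}u_{m}}(\omega,\tau_{22}+1) = \bigl(zI_{n_{m}} - \sum_{i=0}^{\tau_{22}} z^{-i} \tilde{A}_{i}^{\ast}\bigr)^{-1}$.

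Comparing the two expressions shows $T_{\tilde{x}_{m}u_{m}}(\omega,\tau_{22}+1) = T_{x_{m}u_{m}}(\omega)$ for every $\omega \in [-\pi,\pi]$, hence $\lVert T_{\tilde{x}_{m}u_{m}}(\cdot,\tau_{22}+1) - T_{x_{m}u_{m}}\rVert_{\infty} = 0$, which is the claim. Alternatively, and even more directly, one can revisit the inequality chain in the proof of Theorem~\ref{Theo1}: step~$(b)$ there bounds the error by a constant times $\sum_{i=\tau}^{\infty} \lVert z^{-i} A_{12} A_{22}^{i-1} A_{21}\rVert_{\infty}$, and for any $\tau \ge \tau_{22}+1$ every summand is zero, so the bound itself collapses. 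I do not anticipate a genuine obstacle here --- the only points requiring care are the index bookkeeping (the sequence $\tilde{A}_{0}^{\ast},\dots,\tilde{A}_{\tau_{22}}^{\ast}$ has $\tau_{22}+1$ entries, matching order $\tau = \tau_{22}+1$ in~\eqref{NW_ML_CAN}) and confirming that the series truncation is valid on the unit circle, which it is by nilpotency of $A_{22}$.
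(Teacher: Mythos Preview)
Your proposal is correct and follows precisely the route the paper itself indicates: compare the series representation~\eqref{eq:Pf_Theo_1} with the AR transfer function~\eqref{eq:Pf_Theo_2}, noting that nilpotency of $A_{22}$ truncates the former to a finite sum that coincides with the latter at order $\tau=\tau_{22}+1$. The paper's proof is a one-line sketch of exactly this comparison, and you have simply spelled out the details (including the index bookkeeping) carefully.
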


The proof of the result follows by comparing~\eqref{eq:Pf_Theo_1}
and~\eqref{eq:Pf_Theo_2}, and using the \nnrev{assumption} that the latent
subnetwork is acyclic.  Theorem~\ref{Theo1} and Corollary~\ref{Theo2}
show that it is possible to identify the transfer function of the
manifest subnetwork without any knowledge of the passive latent
nodes. However,~\eqref{eq:Pf_Theo_3} cannot be directly applied to
determine the auto-regressive model because its evaluation requires
knowledge of the adjacency matrix $A$ of the whole network, which is
unknown. This problem can be circumvented by employing the measured
data sequence $\{y\}_{1}^{N} \subset \real^{n_m}$, as explained in the
next section.

\section{Identification via least-squares estimation}\label{Sec:AR_LS}

In this section we employ least-squares estimation to compute from
data the sequence of matrices defining the auto-regressive model. We
show that the estimates resulting from this method asymptotically
converge in probability, as the data length $N$ and model order $\tau
$ increase, to the optimal matrix sequence identified in
Theorem~\ref{Theo1}. Finally, we particularize our discussion to the
case of acyclic latent subnetworks.

\subsection{Least-squares auto-regressive estimation}

Given a vector sequence $\{y\}_{1}^{N} \subset \mathbb{R}^{n_{m}}$,
the problem of least-squares auto-regressive (LSAR) model estimation
with order $\tau \in \mathbb{Z}_{\geq 1}$ is to find a matrix sequence
$\{\hat{A}\}_{0}^{\tau -1} \subset \real^{n_m \times n_m}$ that
minimizes the $2$-norm of the residual sequence $\{e\}_{\tau }^{N-1}
\subset \real^{n_m}$ defined by
\begin{equation}\label{eq:AR}
  e(k) = y(k+1)-\sum_{i=0}^{\tau -1}\hat{A}_{i}y(k-i),
\end{equation}
for $k \in \{\tau, \dots, N-1\}$.  Equation~\eqref{eq:AR} can be
written in compact vector form as%
\begin{equation}\label{eq:AR_vec}
  \vec{y}_{N} = \hat{\mathbf{A}}_{\tau }\Phi
  _{N}+\vec{e}_{N},
\end{equation}
where
\begin{align*}
  \vec{y}_{N}& =
  \begin{bmatrix}
      y(\tau +1) & y(\tau +2) & \cdots & y(N)
  \end{bmatrix}
  \in \mathbb{R}^{n_{m}\times (N-\tau )},
  \\
  \vec{e}_{N} & =
  \begin{bmatrix}
    e(\tau ) & e(\tau +1) & \cdots & e(N-1)%
  \end{bmatrix}
  \in \mathbb{R}^{n_{m}\times (N-\tau )},
  \\
  \hat{\mathbf{A}}_{\tau }& =
  \begin{bmatrix}
    \hat{A}_{0} & \hat{A}_{1} & \cdots & \hat{A}_{\tau -1}%
  \end{bmatrix}
  \in \mathbb{R}^{n_{m}\times n_{m}\tau },
  \\
  \Phi _{N}& =
  \begin{bmatrix}
    y(\tau ) & y(\tau +1) & \cdots & y(N-1)
    \\
    y(\tau -1) & y(\tau ) & \cdots & y(N-2)
    \\
    \vdots & \vdots & \ddots & \vdots
    \\
    y(1) & y(2) & \cdots & y(N-\tau )%
  \end{bmatrix}.
\end{align*}%
\rev{Using the square of the prediction error~\cite{LL:99} 
  \begin{align*}
    \func{tr}(\vec e_N \vec e _N^T) = \func{tr}\big((\vec y_N -
    \hat{\mathbf A}_\tau \Phi_{N})(\vec y_N - \hat{\mathbf A}_\tau
    \Phi_{N})^T\big)
  \end{align*}
  as the cost function, we compute its gradient
  \begin{align*}
    \frac{\partial \func{tr}(\vec e_N \vec e _N^T)}{\partial
      \hat{\mathbf A}_\tau} = (\vec y_N - \hat{\mathbf A}_\tau \Phi_{N})
    (-\Phi_N^T) = \hat{\mathbf A}_\tau \Phi_N \Phi_N^T - \vec y_N
    \Phi_N^T.
  \end{align*} 
  Setting this to zero, we get a system of linear equations for which
  a solution is guaranteed to exist (since the rows of~$\vec y_N
  \Phi_N^T$ belong to the row space of $\Phi_N \Phi_N^T$, which is the
  same as the row space of
  $\Phi_N^T$).
  By Assumption~\ref{ass:assump2}, $\func{det}(\Phi_N \Phi_N^T) \neq
  0$ and this solution is unique with probability one.%
  \footnote{\nnrev{This is because (each element of) $\{y\}_{1}^{N-1}$ is an affine function of $\{u\}_{0}^{N-2}$, and $\func{det}(\Phi_N \Phi_N^T)$ is a polynomial function of $\{y\}_{1}^{N-1}$, so $\func{det}(\Phi_N \Phi_N^T)$ is a polynomial function of $\{u\}_{0}^{N-2}$. Therefore, the level set $\Nc = \setdef{\{u\}_{0}^{N-2}}{\func{det}(\Phi_N \Phi_N^T) = 0}$ has Lebesgue measure zero. Thus, by Assumption~\ref{ass:assump2}, $\Pr(\Nc) = 0$.}}
  If $\func{det}(\Phi_N \Phi_N^T) = 0$, the
  minimum-norm solution can be found as
  \begin{align}\label{eq:ls_AR}
    \hat{\mathbf A}_\tau = \vec y_N \Phi_N^T (\Phi_N \Phi_N^T)^{-1} =
    \vec y_N \Phi_N^+,
  \end{align}
  where $(\cdot)^+$ denotes the Moore-Penrose
  pseudo-inverse. Since~\eqref{eq:ls_AR} is also valid for the
  nonsingular case, it is taken as the solution to the LSAR estimation
  problem. In order to indicate the dependency of the solution upon
  the measured data sequence, we sometimes use the
  notation~$\hat{\mathbf{A}}_{\tau }(\{y\}_{1}^{N})$.  }

\subsection{Convergence in probability to manifest transfer function}

Here we study the transfer function resulting from the LSAR estimation
method and characterize its convergence properties, as the data length
and the model order increase, with respect to the transfer function of
the manifest subnetwork.  Our first result establishes that the LSAR
matrix estimate~\eqref{eq:ls_AR} converges in probability to the
optimal matrix sequence identified in Theorem~\ref{Theo1}.

\rev{
  \begin{proposition}\longthmtitle{The LSAR estimate converges in
      probability to optimal matrix sequence}\label{prop:LSAR-matrix}
    Consider the LTI network described
    by~\eqref{eq:LTI_Network_partitioned} where all latent nodes are
    passive. Given the measured data sequence $\{y\}_{1}^{N}$
    generated from the dynamics~\eqref{eq:LTI_Network_partitioned}
    stimulated by the white noise input $\{u_{m}\}$ according to
    Assumption~\ref{ass:assump2} and any $\bar \rho \in (\rho(A_{22}),
    1)$, there exists $\beta \in \realpos$ (depending only on the
    adjacency matrix $A$) such that the LSAR estimate
    $\hat{\mathbf{A}%
    }_{\tau }(\{y\}_{1}^{N})$ in~\eqref{eq:ls_AR} satisfies
    \begin{equation}
      \lVert \func{plim}_{N\rightarrow \infty }\hat{\mathbf{A}}_{\tau
      }(\{y\}_{1}^{N})-\tilde{\mathbf{A}}_{\tau }^{\ast }\rVert _{\max }\leq \beta
      \tau \bar \rho^\tau,  \label{eq:Lemma2}
    \end{equation}%
    where $ \tilde{\mathbf{A}}_{\tau }^{\ast } =
    \begin{bmatrix}
      \tilde{A}_{0}^{\ast } & \tilde{A}_{1}^{\ast } & \cdots &
      \tilde{A}_{\tau -1}^{\ast }
    \end{bmatrix}
    \in \mathbb{R}^{n_{m}\times n_{m}\tau }$ is the optimal matrix
    sequence given by~\eqref{eq:Pf_Theo_3}.
  \end{proposition}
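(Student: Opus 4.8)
The plan is to compute $\func{plim}_{N\to\infty}\hat{\mathbf A}_\tau(\{y\}_1^N)$ in closed form as a ratio of stationary covariances, recognize it as $\tilde{\mathbf A}_\tau^*$ plus a truncation correction, and then bound that correction deterministically. First I would write, with probability one (as justified in the footnote preceding the statement), $\hat{\mathbf A}_\tau(\{y\}_1^N)=\bigl(\tfrac{1}{N-\tau}\vec y_N\Phi_N^T\bigr)\bigl(\tfrac{1}{N-\tau}\Phi_N\Phi_N^T\bigr)^{-1}$. Since $\rho(A)<1$ by Assumption~\ref{ass:assump1}, the output process $\{x_m\}$ of~\eqref{eq:LTI_Network_partitioned} is a Schur-stable linear filter of the i.i.d.\ input $\{u_m\}$, hence asymptotically wide-sense stationary; by the law of large numbers for such processes (see, e.g.,~\cite{LL:99}), both empirical second moments above converge in probability, as $N\to\infty$, to the stationary (co)variances $R_{y\phi}\triangleq\E[x_m(k{+}1)\phi(k)^T]$ and $R_{\phi\phi}\triangleq\E[\phi(k)\phi(k)^T]$, where $\phi(k)\triangleq[x_m(k)^T\ \cdots\ x_m(k-\tau+1)^T]^T$, the contribution of the transient due to $x(0)$ (which decays like $\rho(A)^k$) being asymptotically negligible. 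As $R_{\phi\phi}$ is invertible (see below), the continuous mapping theorem gives $\func{plim}_{N\to\infty}\hat{\mathbf A}_\tau(\{y\}_1^N)=R_{y\phi}R_{\phi\phi}^{-1}=:\hat{\mathbf A}_\tau^\infty$, a \emph{deterministic} matrix.

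Next I would identify $\hat{\mathbf A}_\tau^\infty$. Letting the start time in~\eqref{eq1:Remark_AR_t_domain} tend to $-\infty$ and using $\rho(A_{22})<1$ produces, in the stationary regime, the infinite-order representation $x_m(k+1)=\sum_{i=0}^{\infty}\tilde A_i^*\,x_m(k-i)+u_m(k)$. Separating the first $\tau$ terms, $x_m(k+1)=\sum_{i=0}^{\tau-1}\tilde A_i^*\,x_m(k-i)+w(k)+u_m(k)$ with truncation remainder $w(k)\triangleq\sum_{i\ge\tau}\tilde A_i^*\,x_m(k-i)$. Multiplying on the right by $\phi(k)^T$ and taking expectations yields $R_{y\phi}=\tilde{\mathbf A}_\tau^*R_{\phi\phi}+\E[w(k)\phi(k)^T]+\E[u_m(k)\phi(k)^T]$, and the last term vanishes because $u_m(k)$ is zero-mean and independent of $\phi(k)$, which is a function of $\{u_m(j)\}_{j<k}$ only. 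Hence $\hat{\mathbf A}_\tau^\infty-\tilde{\mathbf A}_\tau^*=\E[w(k)\phi(k)^T]\,R_{\phi\phi}^{-1}$.

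It then remains to bound this deterministic matrix in the max norm. For a $\tau$-uniform lower bound on $\lambda_{\min}(R_{\phi\phi})$, I would use that $R_{\phi\phi}$ is Hermitian block-Toeplitz formed from the Fourier coefficients of the power spectral density $\Phi_{x_m}(\omega)=T_{x_m u_m}(\omega)T_{x_m u_m}(\omega)^{*}$ (the input being white with unit covariance, Assumption~\ref{ass:assump2}); a Parseval argument then gives $\lambda_{\min}(R_{\phi\phi})\ge\inf_{\omega}\sigma_{\min}\bigl(T_{x_m u_m}(\omega)\bigr)^2$, which is strictly positive and independent of $\tau$ because, by Assumption~\ref{ass:assump1} and~\eqref{eq:Pf_Theo_1}, $T_{x_m u_m}^{-1}(\omega)=zI_{n_m}-A_{11}-A_{12}(zI_{n_l}-A_{22})^{-1}A_{21}$ is bounded on $|z|=1$. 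Therefore $\|R_{\phi\phi}^{-1}\|\le c_1^{-1}$ for a constant $c_1$ depending only on $A$. Next, Lemma~\ref{lem:rhobar} gives $\|\tilde A_i^*\|=\|A_{12}A_{22}^{i-1}A_{21}\|\le\kappa\|A_{12}\|\,\|A_{21}\|\,\bar\rho^{\,i-1}$, while Cauchy-Schwarz and $\rho(A)<1$ give $\|\E[x_m(k-i)\phi(k)^T]\|\le\sqrt{\tau}\,\max_j\|R_{x_m}(j)\|\le\sqrt{\tau}\,\func{tr}R_{x_m}(0)<\infty$; hence $\|\E[w(k)\phi(k)^T]\|\le\sqrt{\tau}\,\func{tr}R_{x_m}(0)\,\kappa\|A_{12}\|\,\|A_{21}\|\sum_{i\ge\tau}\bar\rho^{\,i-1}$. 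Combining with $\|M\|_{\max}\le\|M\|$ and $\sum_{i\ge\tau}\bar\rho^{\,i-1}=\bar\rho^{\,\tau-1}/(1-\bar\rho)$, I obtain $\|\hat{\mathbf A}_\tau^\infty-\tilde{\mathbf A}_\tau^*\|_{\max}\le\beta\,\tau\,\bar\rho^{\,\tau}$ for a $\beta$ depending only on $A$ (through $\kappa$, $\|A_{12}\|$, $\|A_{21}\|$, $c_1$, $\func{tr}R_{x_m}(0)$, and $\bar\rho$), which is~\eqref{eq:Lemma2}.

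The hardest part will be the first step: rigorously invoking the law of large numbers for the merely asymptotically stationary output process while discarding the initial transient, and establishing the positive definiteness of $R_{\phi\phi}$ \emph{uniformly in} $\tau$ via the block-Toeplitz / spectral-density estimate. The algebraic manipulations in the second and third steps are routine by comparison.
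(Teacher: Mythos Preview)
Your argument is correct and shares the same skeleton as the paper's: both write $\func{plim}_{N\to\infty}\hat{\mathbf A}_\tau-\tilde{\mathbf A}_\tau^{*}$ as the cross-covariance between the truncation residual and the regressor, multiplied by $R_{\phi\phi}^{-1}$ (your $w(k)$ coincides exactly with the paper's $\nu(k)-u_m(k)$). The bounding tools, however, differ. The paper controls each block $\Psi_j$ using the appendix inequality $\|ab^T\|_{\max}\le M^{-1}a^Ta+Mb^Tb$ with $M=\bar\rho^{\,\tau}$, and then invokes Theorem~\ref{Theo1} to show $\func{tr}R_{\nu-u_m}(0)\le\hat\gamma\,n_m\,\bar\rho^{\,2\tau}$ via the $H_\infty$ norm of $T_{\tilde x_m u_m}^{-1}T_{x_m u_m}-I$; it then simply absorbs $\|R_\Phi^{-1}\|_{\max}$ into $\beta$. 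Your route is more elementary and self-contained: you bound $\|\E[w(k)\phi(k)^T]\|$ directly from the geometric decay $\|\tilde A_i^*\|\le\kappa\|A_{12}\|\,\|A_{21}\|\,\bar\rho^{\,i-1}$ together with the block-row estimate $\|\E[x_m(k-i)\phi(k)^T]\|\le\sqrt{\tau}\,\|R_{x_m}(0)\|$, and you handle the $\tau$-uniformity of $\|R_{\phi\phi}^{-1}\|$ explicitly via the block-Toeplitz/spectral-density bound $\lambda_{\min}(R_{\phi\phi})\ge\inf_\omega\sigma_{\min}(T_{x_m u_m}(\omega))^2$. This last point is actually a place where your treatment is more careful than the paper's, which leaves the $\tau$-independence of $\|R_\Phi^{-1}\|_{\max}$ implicit. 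As a bonus, your estimate yields $\sqrt{\tau}\,\bar\rho^{\,\tau}$ before you relax it to $\tau\,\bar\rho^{\,\tau}$, so you in fact recover a slightly sharper bound than stated.
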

}
\begin{proof}
\nnrev{For any quasi-stationary signal%
\footnote{Basically, a signal is quasi-stationary if it has a well-defined covariance function. See~\cite[Def 2.1]{LL:99} for a formal definition.}
$\{s\}$, let
\begin{align*}
R_s(j) \triangleq \lim_{N \to \infty} \frac{1}{N} \sum_{i = 1}^N \E[s(i) s(i - j)^T].
\end{align*} 
Using the Birkhoff's Ergodic Theorem~\cite[Thm 7.2.1]{RD:10} (see also~\cite[Thm 7.1.3]{RD:10}) and the fact that $\{y\}$ is the output of a stable system (and thus the effects of initial conditions asymptotically vanish), we can show that
\begin{align*}
    \func{plim}_{N\rightarrow \infty
    }\frac{1}{N}\sum_{i=1}^{N}y(i)y(i-j)^{T} = R_{y}(j).
  \end{align*}}
  As a result, $\frac{1}{N}\Phi _{N}\Phi _{N}^{T} \in
  \mathbb{R}^{n_{m}\tau \times n_{m}\tau }$ also converges in
  probability and
  \begin{align*}
    R_{\Phi }& \triangleq \func{plim}_{N\rightarrow \infty
    }\frac{1}{N}\Phi _{N}\Phi _{N}^{T}
    \\
    & =
    \begin{bmatrix}
      R_{y}(0) & R_{y}(1) & \cdots & R_{y}(\tau -1)
      \\
      R_{y}^{T}(1) & R_{y}(0) & \cdots & R_{y}(\tau -2)
      \\
      \vdots & \vdots & \ddots & \vdots
      \\
      R_{y}^{T}(\tau -1) & R_{y}^{T}(\tau -2) & \cdots & R_{y}(0)%
    \end{bmatrix}
    .
  \end{align*}%
  Define
  \begin{equation}
    \nu (k)\triangleq y(k+1)-\sum_{i=0}^{\tau -1}\tilde{A}_{i}^{\ast }y(k-i),
    \label{Def_nu}
  \end{equation}%
  and note that the transfer function from $u_{m}$ to $\nu $ is
  $T_{\tilde{x}_{m}u_{m}}^{-1}T_{x_{m}u_{m}}$, where $T_{x_{m}u_{m}}$
  and $T_{\tilde{x}_{m}u_{m}}$ are given by~\eqref{eq:Pf_Theo_1} and~%
  \eqref{eq:Pf_Theo_2}, respectively. Equation~\eqref{Def_nu} can be
  written in compact vector form as
  \begin{equation}
    \vec{y}_{N}=\tilde{\mathbf{A}}_{\tau }^{\ast }\Phi _{N}+\vec{\nu}_{N},
    \label{eq1:Pf_Theo_3}
  \end{equation}%
  with $ \vec{\nu}_{N}\triangleq [ \nu (\tau ) \hspace*{1ex}  \nu (\tau +1) \hspace*{1ex} 
  \cdots \hspace*{1ex}  \nu (N-1)] \in \mathbb{R}^{n_{m}\times (N-\tau )}$.
  From~\eqref{eq:ls_AR} and~\eqref{eq1:Pf_Theo_3}, it follows that
  \begin{align}
    \func{plim}_{N\rightarrow \infty }\hat{\mathbf{A}}_{\tau
    }(\{y\}_{1}^{N})& =%
    \func{plim}_{N\rightarrow \infty }\frac{1}{N}\vec{y}_{N}\Phi
    _{N}^{T}(\frac{1%
    }{N}\Phi _{N}\Phi _{N}^{T})^{-1} \notag
    \\
    & \!\!\!\!\!\!\!\!=\tilde{\mathbf{A}}_{\tau }^{\ast }+\func{plim}_{N\rightarrow
      \infty }\frac{1%
    }{N}\vec{\nu}_{N}\Phi _{N}^{T} (\frac{1%
    }{N}\Phi _{N}\Phi _{N}^{T})^{-1}.  \label{eq2:Pf_Lemma2}
  \end{align}%
  Moreover, Assumption~\ref{ass:assump2} renders $u_{m}(k)$
  independent of $%
  \{y\}_{1}^{k}$, which further implies that $
  \func{plim}_{N\rightarrow \infty }\frac{1}{N}\vec{u}_{m,N}\Phi
  _{N}^{T}=%
  \mathbf{0}_{n_{m}\times n_{m}\tau }$, where $\vec{u}_{m,N}\triangleq
  [ u_{m}(\tau ) \hspace*{1ex} u_{m}(\tau +1) \hspace*{1ex} \cdots
  \hspace*{1ex} u_{m}(N-1) \in \real^{n_m \times (N-\tau)} $.
  Therefore,
  \begin{equation}
    \func{plim}_{N\rightarrow \infty }\frac{1}{N}\vec{\nu}_{N}\Phi _{N}^{T}=%
    \func{plim}_{N\rightarrow \infty }\frac{1}{N}(\vec{\nu}_{N}-\vec{u}%
    _{m,N})\Phi _{N}^{T}=\Psi ,  \label{eq3:Pf_Lemma2}
  \end{equation}%
  where $\Psi \triangleq
  \begin{bmatrix}
    \Psi _{1} & \Psi _{2} & \cdots & \Psi _{\tau }
  \end{bmatrix} \in \real^{n_m \times n_m \tau} $, with
  \begin{equation*}
    \Psi _{j}\triangleq \func{plim}_{N\rightarrow \infty }\frac{1}{N}%
    \sum_{i=\tau }^{N-1}(\nu (i)-u_{m}(i))y^{T}(i-j+1) \in \real^{n_m \times n_m} .
  \end{equation*}%
  Thus, using $\func{plim}_{N \to \infty} (\frac{1%
    }{N}\Phi _{N}\Phi _{N}^{T})^{-1} = R_\Phi^{-1}$, we have
    \begin{align*}
    \func{plim}_{N\rightarrow \infty }\hat{\mathbf{A}}_{\tau
    }(\{y\}_{1}^{N}) - \tilde{\mathbf{A}}_{\tau }^{\ast } = \Psi R_\Phi^{-1}.
    \end{align*}
  By the sub-additivity of the max norm, it holds for any $j \in
  \until{\tau}$ that
  \begin{align}\label{eq4:Pf_Lemma2}
    \norm{\Psi _{j}}_{\max} & \leq \func{plim}_{N\rightarrow \infty
    }\frac{1}{N}\sum_{i=\tau }^{N-1}\lVert (\nu
    (i)-u_{m}(i))y^{T}(i-j+1)\rVert _{\max } \notag
    \\
    & \rev{\overset{(a)}{\leq }\func{plim}_{N\rightarrow \infty }
    \frac{\bar \rho^{-\tau}}{N}\sum_{i=\tau }^{N-1}(\nu
    (i) \!-\!u_{m}(i)\hspace*{-1pt})^{T}(\nu (i)
    \!-\!u_{m}(i)\hspace*{-1pt}) } \notag
    \\
    & \rev{\quad +\func{plim}_{N\rightarrow \infty }\frac{\bar \rho^\tau}{N}\sum_{i=\tau
    }^{N-1}y^{T}(i-j+1)y(i-j+1) } \notag
    \\
    & \rev{= \hspace*{-1pt} \bar \rho^{-\tau} \hspace*{-2pt}
    \func{tr}(R_{\nu-u_{m}}(0)\hspace*{-1pt})
    \hspace*{-1pt}+\hspace*{-1pt} \bar \rho^\tau \hspace*{-2pt}
    \func{tr}(R_{y}(0)\hspace*{-1pt})} ,
  \end{align}
  where $(a)$ follows from Lemma~\ref{Lemma1} in the appendix with the
  positive scalar $M$ chosen as \rev{$\bar \rho^\tau$}.  Using
  the fact that the transfer function from $u_{m}$ to $\nu-u_{m}$ is
  $T_{\tilde{x}_{m}u_{m}}^{-1}T_{x_{m}u_{m}}-I_{n_{m}}$, we obtain
  \begin{align}
    R_{\nu-u_{m}}(0) &\triangleq \lim_{N\rightarrow
      \infty }\frac{1}{N}\sum_{i=0}^{N
      -1}\E\lbrack (\nu-u_{m})(i)(\nu-u_{m})^{T}(i)]
      \notag
    \\
    & \overset{(a)} =\frac{1}{2\pi }\int_{-\pi }^{\pi }(T_{%
      \tilde{x}_{m}u_{m}}^{-1}T_{x_{m}u_{m}}(\omega )-I_{n_{m}})
    \notag
    \\
    & \quad \cdot (T_{\tilde{x}_{m}u_{m}}^{-1}T_{x_{m}u_{m}}(\omega
    )-I_{n_{m}})^{\ast }d\omega \notag
    \\
    & \overset{(b)}{\leq }\lVert
    T_{\tilde{x}_{m}u_{m}}^{-1}T_{x_{m}u_{m}}-I_{n_{m}}\rVert _{\infty
    }^{2}I_{n_{m}} \notag
    \\
    & \overset{(c)}{\leq }\lVert T_{x_{m}u_{m}}-T_{\tilde{x}%
      _{m}u_{m}}\rVert _{\infty }^{2}\lVert
    T_{\tilde{x}_{m}u_{m}}^{-1}\rVert _{\infty }^{2}I_{n_{m}} \notag
    \\
    & \overset{(d)}{\leq }\hat{\gamma} \rev{\bar \rho^{2 \tau}} I_{n_{m}},
    \label{eq5:Pf_Lemma2}
  \end{align}%
  where \rev{$ \hat{\gamma} \triangleq \bar{\gamma}^{2}\big(1+\lVert
    A_{11}\rVert +\lVert A_{12}\rVert \lVert A_{21}\rVert \kappa
    (1-\bar \rho)^{-1} \big)^{2}$ is constant}, $(a)$ follows
  from~\cite[eq.~(9-193)]{AP-SUP:02}, $(b)$ follows by the definition
  of $H_{\infty }$-norm, $(c)$ follows by the sub-multiplicativity of
  induced norms, and $(d)$ holds because of Theorem~\ref{Theo1} and
  the observation that, \rev{by Lemma~\ref{lem:rhobar},
    \begin{align*}
      \lVert T_{\tilde{x}_{m}u_{m}}^{-1}\rVert _{\infty }\leq 1+\lVert
      A_{11}\rVert +\lVert A_{12}\rVert \lVert A_{21}\rVert \kappa
      (1-\bar \rho)^{-1}.
    \end{align*}} 
  We obtain from~\eqref{eq4:Pf_Lemma2}
  and~\eqref{eq5:Pf_Lemma2},
  \begin{equation*}
    \lVert \Psi _{j}\rVert _{\max }\leq \rev{\bar \rho^\tau} (\hat{\gamma}%
    n_{m}+\func{tr}(R_{y}(0))),
  \end{equation*}%
  and from~\eqref{eq2:Pf_Lemma2} and~\eqref{eq3:Pf_Lemma2},
  \begin{align*}
    & \rVert \func{plim}_{N\rightarrow \infty }\hat{\mathbf{A}}_{\tau
    }(\{y\}_{1}^{N})-\tilde{\mathbf{A}}_{\tau }^{\ast }\rVert _{\max
    }=\rVert \Psi R_{\Phi }^{-1}\rVert _{\max }
    \\
    & \quad \leq n_{m}\tau \rVert R_{\Phi }^{-1}\rVert _{\max }\rVert
    \Psi \rVert _{\max }
    \\
    & \quad =n_{m}\tau \rVert R_{\Phi }^{-1}\rVert _{\max
    }\max_{j}\rVert \Psi _{j}\rVert _{\max } \leq \beta \tau \rev{\bar \rho^\tau},
  \end{align*}%
  where $\beta =(\hat{\gamma}n_{m}^{2}+\func{tr}(R_{y}(0))n_{m})\rVert
  R_{\Phi }^{-1}\rVert _{\max }$, as claimed.
\end{proof}

When it is clear from context, we refer to $\func{plim}_{N\rightarrow
  \infty }\hat{A}_{i}(\{y\}_{1}^{N})$ simply as $\hat{A}_{i}$.

\begin{remark}\longthmtitle{Direct versus latent interactions --
    cont'd}\label{re:LSAR-direct-latent}
  Proposition~\ref{prop:LSAR-matrix} shows that $\hat{A}_{i}$
  converges in probability to $\tilde{A}_{i}^{\ast }$ exponentially as
  the model order $\tau $ increases. Therefore, within a margin of
  error that can be tuned as desired, we deduce from the discussion in
  Remark~\ref{Remark_AR_t_domain} that the LSAR estimate~$\hat{A}_{0}$
  allows us to determine whether two manifest nodes interact directly
  and the LSAR estimates $\{\hat{A}_{i}\}_{i \geq 1}$ allow us to
  determine whether two manifest nodes interact indirectly through
  latent nodes with high probability as the length of measurement data
  grows. \oprocend
\end{remark}

Given the result in Proposition~\ref{prop:LSAR-matrix}, we next turn
our attention to the transfer function from $e$ to $y$ resulting from
the LSAR estimation~\eqref{eq:AR}, which we denote by
$T_{ye}(\{y\}_{1}^{N},\tau )$.  The next result shows that the
$H_{\infty}$-norm of this transfer function is uniformly upper bounded
with respect to the model order~$\tau$.

\begin{lemma}\longthmtitle{$H_{\infty}$-norm of $T_{ye}$ is uniformly
    upper bounded}\label{Lemma3}
  Under the \nnrev{assumptions} of Proposition~\ref{prop:LSAR-matrix}, there
  exist positive scalars $\tau _{0}$ and $ U_{T_{ye}}^{\infty }$ such
  that, for $\tau \geq \tau _{0}$,
  \begin{equation}
    \lVert \func{plim}_{N\rightarrow \infty }T_{ye}(\{y\}_{1}^{N},\tau )\rVert
    _{\infty }\leq U_{T_{ye}}^{\infty }.  \label{eq:Lemma3}
  \end{equation}
\end{lemma}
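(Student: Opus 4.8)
The plan is to write the transfer function $T_{ye}$ in closed form, push the limit in probability through it with the help of Proposition~\ref{prop:LSAR-matrix}, and then bound the resulting deterministic transfer function by recycling the uniform lower bound on $\sigma_{\min}(T_{\tilde x_m u_m}^{-1})$ obtained inside the proof of Theorem~\ref{Theo1}. First, rearranging~\eqref{eq:AR} as $y(k+1)=\sum_{i=0}^{\tau-1}\hat A_i y(k-i)+e(k)$ and passing to the frequency domain shows that the transfer function from $e$ to $y$ has exactly the AR form
\[
  T_{ye}(\omega,\tau;\{y\}_1^N)=\Big(zI_{n_m}-\sum_{i=0}^{\tau-1}z^{-i}\hat A_i(\{y\}_1^N)\Big)^{-1},\qquad z=e^{j\omega},
\]
matching~\eqref{eq:Pf_Theo_2} with $\hat A_i(\{y\}_1^N)$ in place of $\tilde A_i^{\ast}$. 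I would first work purely at the level of deterministic matrices: by Proposition~\ref{prop:LSAR-matrix}, the (deterministic) limit $\hat A_i\triangleq\func{plim}_{N\to\infty}\hat A_i(\{y\}_1^N)$ satisfies $\lVert\hat{\mathbf A}_\tau-\tilde{\mathbf A}_\tau^{\ast}\rVert_{\max}\le\beta\tau\bar\rho^\tau$, so each block (a submatrix of $\hat{\mathbf A}_\tau-\tilde{\mathbf A}_\tau^{\ast}$) obeys $\lVert\hat A_i-\tilde A_i^{\ast}\rVert\le n_m\beta\tau\bar\rho^\tau$ after converting the max-norm to the spectral norm. Writing $\bar T_{ye}(\cdot,\tau)\triangleq\big(zI_{n_m}-\sum_{i=0}^{\tau-1}z^{-i}\hat A_i\big)^{-1}$, the difference of inverses is $\bar T_{ye}^{-1}(\omega,\tau)-T_{\tilde x_m u_m}^{-1}(\omega,\tau)=\sum_{i=0}^{\tau-1}z^{-i}(\hat A_i-\tilde A_i^{\ast})$, hence
\[
  \big\lVert\bar T_{ye}^{-1}(\omega,\tau)-T_{\tilde x_m u_m}^{-1}(\omega,\tau)\big\rVert\le\sum_{i=0}^{\tau-1}\lVert\hat A_i-\tilde A_i^{\ast}\rVert\le n_m\beta\tau^2\bar\rho^\tau\triangleq\delta(\tau),
\]
uniformly in $\omega$, with $\delta(\tau)\to0$ as $\tau\to\infty$ since $\bar\rho<1$.

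Next I would invoke the fact, already established within the proof of Theorem~\ref{Theo1} (see~\eqref{eq:infs}), that $\vartheta\triangleq\inf_{\tau\in\intnonneg}\inf_{-\pi\le\omega\le\pi}\sigma_{\min}(T_{\tilde x_m u_m}^{-1}(\omega,\tau))>0$. Combining this with the $1$-Lipschitz dependence of the smallest singular value on its argument gives $\sigma_{\min}(\bar T_{ye}^{-1}(\omega,\tau))\ge\vartheta-\delta(\tau)$ for every $\omega$. Choosing $\tau_0$ so that $\delta(\tau)\le\vartheta/2$ for all $\tau\ge\tau_0$ (possible by the exponential decay of $\delta$) yields $\sigma_{\min}(\bar T_{ye}^{-1}(\omega,\tau))\ge\vartheta/2>0$ for all $\omega\in[-\pi,\pi]$ and all $\tau\ge\tau_0$; in particular, the limiting pencil $zI_{n_m}-\sum_{i=0}^{\tau-1}z^{-i}\hat A_i$ is invertible everywhere on the unit circle for such $\tau$. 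This invertibility is exactly what licenses pushing the limit in probability through the matrix-inversion map: by the continuous mapping theorem applied to $\hat{\mathbf A}_\tau(\{y\}_1^N)\to\hat{\mathbf A}_\tau$, one gets $\func{plim}_{N\to\infty}T_{ye}(\{y\}_1^N,\tau)=\bar T_{ye}(\cdot,\tau)$, whence
\[
  \big\lVert\func{plim}_{N\to\infty}T_{ye}(\{y\}_1^N,\tau)\big\rVert_\infty=\lVert\bar T_{ye}(\cdot,\tau)\rVert_\infty=\Big(\inf_{-\pi\le\omega\le\pi}\sigma_{\min}(\bar T_{ye}^{-1}(\omega,\tau))\Big)^{-1}\le\frac{2}{\vartheta},
\]
so~\eqref{eq:Lemma3} holds with this $\tau_0$ and $U_{T_{ye}}^\infty\triangleq2/\vartheta$.

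The step requiring the most care, I expect, is the ordering of the deterministic and probabilistic parts of the argument. Invertibility of the limiting pencil on the unit circle is simultaneously a hypothesis needed for the continuous-mapping step and a consequence of the singular-value estimate, so one must first carry out the estimate purely at the matrix level (which is deterministic once Proposition~\ref{prop:LSAR-matrix} is invoked), use it to fix $\tau_0$, and only afterward transfer the conclusion back to the random transfer functions $T_{ye}(\{y\}_1^N,\tau)$; doing it in the reverse order would be circular. The remaining ingredients — reducing the stacked max-norm bound of Proposition~\ref{prop:LSAR-matrix} to per-block spectral norms, the geometric-decay estimate for $\delta(\tau)$, and the reuse of $\vartheta>0$ from Theorem~\ref{Theo1} — are routine.
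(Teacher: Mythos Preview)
Your proposal is correct and follows essentially the same route as the paper: both compute the perturbation $\bar T_{ye}^{-1}-T_{\tilde x_m u_m}^{-1}=\sum_{i=0}^{\tau-1}z^{-i}(\hat A_i-\tilde A_i^{\ast})$, bound it by $n_m\beta\tau^2\bar\rho^\tau$ via Proposition~\ref{prop:LSAR-matrix}, apply the $1$-Lipschitz property of $\sigma_{\min}$ (the paper cites this as Weyl's theorem) together with the constant $\vartheta$ from~\eqref{eq:infs}, and arrive at the identical constants $\tau_0$ and $U_{T_{ye}}^\infty=2/\vartheta$. Your added discussion of the continuous mapping theorem and the ordering of the deterministic versus probabilistic steps is more explicit than the paper, which simply writes $\func{plim}_{N\to\infty}T_{ye}^{-1}$ without further comment; this is a point of extra care on your part rather than a difference in strategy.
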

\begin{proof}
  By definition of $H_{\infty }$-norm, we have
  \rev{
  \begin{multline}
    \lVert \func{plim}_{N\rightarrow \infty }T_{ye}(\{y\}_{1}^{N},\tau
    )\rVert _{\infty } = \sup_{-\pi \leq \omega \leq \pi }\sigma_{\max}
    \big(\func{plim}%
    _{N\rightarrow \infty }T_{ye}(\omega ,\tau )\big)
    \\
    =\Big(\inf_{-\pi \leq \omega \leq \pi }\sigma_{\min}
    \bigl(\func{plim}_{N\rightarrow \infty }T_{ye}^{-1}(\omega ,\tau
    )\bigr)\Big)^{-1} .
    \label{eq1:Pf_Lemma3}
  \end{multline}
  Note that, for every $\omega \in [-\pi, \pi]$ and $\tau \in \intnonneg$,
  \begin{align}\label{eq:hat-tilde}
    \notag \func{plim}_{N \to \infty}T_{ye}^{-1}(\omega, \tau) &= z
    I_{n_m} - \sum_{i = 0}^{\tau - 1} z^{-i} \hat A_i
    \\
    &= T_{\tilde x_m u_m}^{-1} (\omega, \tau) - \sum_{i = 0}^{\tau -
      1} z^{-i} (\hat A_i - \tilde A_i^*),
  \end{align} 
  where $z = e^{j \omega}$. However, for every $\omega \in [-\pi,
  \pi]$ and $\tau \in \intnonneg$,
  \begin{align*}
    \! \Big\|\sum_{i=0}^{\tau -1} z^{-i} (\hat{A}_{i} - \tilde{A}_{i}^*)\Big\|
    & \! \le \! \sum_{i = 0}^{\tau - 1} \|\hat A_i - \tilde A_i^*\| 
    \! \stackrel{(a)}{\leq}\! n_{m}\sum_{i=0}^{\tau -1}\rVert
    \hat{A}_{i}-\tilde{A}_{i}^{\ast }\rVert _{\max } \notag
    \\
    \!  & \!\stackrel{(b)}{\leq} n_{m}\tau \max_{i}\rVert
    \hat{A}_{i}-\tilde{A}_{i}^{\ast }\rVert _{\max } \leq n_{m}\beta
    \tau ^{2} \bar \rho^\tau,
  \end{align*}%
  where $(a)$ follows from the fact that $\|A\| \le n_m \|A\|_{\max}$
  for any matrix $A \in \real^{n_m \times n_m}$ and $(b)$ follows from
  Proposition~\ref{prop:LSAR-matrix}.  Therefore, using Weyl's theorem
  for the perturbation of singular values~\cite{HW:12}
  in~\eqref{eq:hat-tilde} and taking $\inf_{-\pi \le \omega \le \pi}$
  of both sides, we get
  \begin{align*}
    &\inf_{-\pi \le \omega \le \pi} \sigma_{\min} \big(\func{plim}_{N
      \to \infty} T_{ye}^{-1}(\omega ,\tau)\big)
    \\
    &\qquad \ge \inf_{-\pi \le \omega \le \pi}
    \sigma_{\min}\big(T_{\tilde x_m u_m}^{-1}(\omega, \tau)\big) -
    \Big\|\sum_{i=0}^{\tau -1} z^{-i} (\hat{A}_{i} -
    \tilde{A}_{i}^*)\Big\|
    \\
    &\qquad \ge \inf_{-\pi \le \omega \le \pi}
    \sigma_{\min}\big(T_{\tilde x_m u_m}^{-1}(\omega, \tau)\big) -
    n_{m}\beta \tau ^{2} \bar \rho^\tau.
    \end{align*} 
    In view of~\eqref{eq:infs}, let $\tau_0$ be such that
    \begin{align}\label{eq5:Pf_Lemma3}
      n_{m}\beta \tau ^{2} \bar \rho^\tau \le \frac{\vartheta}{2},
      \qquad \forall \tau \ge \tau_0.
    \end{align}
    Then, the result follows from~\eqref{eq1:Pf_Lemma3} with
    $U_{T_{ye}}^\infty = \frac{2}{\vartheta}$.
  }
\end{proof}

We are finally ready to show that the transfer function $T_{ye}$
obtained from the LSAR method converges in probability to the transfer
function~$T_{x_{m}u_{m}}$ of the manifest subnetwork.

\begin{theorem}\longthmtitle{The LSAR method consistently estimates
    the manifest transfer function}\label{Theo3}
  Under the \nnrev{assumptions} of Proposition~\ref{prop:LSAR-matrix}, \rev{for
    any $\bar \rho \in (\rho(A_{22}), 1)$,} there exist positive
  scalars $\bar{\beta}$, $\bar{\gamma}$ and $ \tau _{0}$ such that,
  for $\tau \geq \tau _{0}$,
  \begin{equation}
    \lVert \func{plim}_{N\rightarrow \infty }T_{ye}(\{y\}_{1}^{N},\tau
    )-T_{x_{m}u_{m}}\rVert
    _{\infty }\leq (\bar{\beta}\tau ^{2}+\bar{\gamma}) \rev{\bar \rho^\tau}.  \label{eq2:Theo3}
  \end{equation}
  Consequently, $\func{plim}_{N\rightarrow \infty ,\tau \rightarrow
    \infty }T_{ye}(\{y\}_{1}^{N},\tau )=T_{x_{m}u_{m}}$.
\end{theorem}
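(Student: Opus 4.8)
The plan is to estimate $\lVert \func{plim}_{N\to\infty} T_{ye}(\{y\}_1^N,\tau) - T_{x_m u_m}\rVert_\infty$ by a triangle inequality that inserts the ``optimal order-$\tau$'' transfer function $T_{\tilde x_m u_m}(\cdot,\tau)$ of Theorem~\ref{Theo1} as an intermediate term:
\begin{align*}
  \lVert \func{plim}_{N\to\infty} T_{ye} - T_{x_m u_m}\rVert_\infty
  &\le \lVert \func{plim}_{N\to\infty} T_{ye} - T_{\tilde x_m u_m}(\cdot,\tau)\rVert_\infty
  \\
  &\quad + \lVert T_{\tilde x_m u_m}(\cdot,\tau) - T_{x_m u_m}\rVert_\infty .
\end{align*}
Theorem~\ref{Theo1} already bounds the second term by $\bar\gamma\,\bar\rho^\tau$, so the whole argument reduces to the first term.

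For the first term I would reuse the identity exploited at the start of the proof of Theorem~\ref{Theo1}: writing $T \triangleq \func{plim}_{N\to\infty}T_{ye}(\cdot,\tau)$, one has $T\,(T^{-1} - T_{\tilde x_m u_m}^{-1})\,T_{\tilde x_m u_m} = T_{\tilde x_m u_m} - T$, whence, by sub-multiplicativity of induced norms,
\begin{align*}
  \lVert T - T_{\tilde x_m u_m}\rVert_\infty
  \le \lVert T\rVert_\infty \, \lVert T_{\tilde x_m u_m}\rVert_\infty \, \lVert T^{-1} - T_{\tilde x_m u_m}^{-1}\rVert_\infty .
\end{align*}
The first factor is uniformly (in $\tau \ge \tau_0$) bounded by $U_{T_{ye}}^\infty$ via Lemma~\ref{Lemma3}; the second factor is uniformly bounded (in $\tau \in \intnonneg$) by the constant $U$ constructed at the end of the proof of Theorem~\ref{Theo1}; and the third factor is precisely the quantity estimated inside the proof of Lemma~\ref{Lemma3}, since by~\eqref{eq:hat-tilde} one has $T^{-1}(\omega,\tau) - T_{\tilde x_m u_m}^{-1}(\omega,\tau) = -\sum_{i=0}^{\tau-1} z^{-i}(\hat A_i - \tilde A_i^*)$ (with $z = e^{j\omega}$), whose spectral norm is at most $n_m\beta\tau^2\bar\rho^\tau$ by Proposition~\ref{prop:LSAR-matrix}. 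Multiplying the three bounds gives $\lVert T - T_{\tilde x_m u_m}\rVert_\infty \le \bar\beta\,\tau^2\bar\rho^\tau$ with $\bar\beta \triangleq n_m\,\beta\,U\,U_{T_{ye}}^\infty$, and adding the Theorem~\ref{Theo1} bound on the second term yields~\eqref{eq2:Theo3} for $\tau \ge \tau_0$.

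For the concluding ``consequently'' claim I would note that for each fixed $\tau$ the limit $\func{plim}_{N\to\infty}T_{ye}(\{y\}_1^N,\tau)$ is a deterministic transfer function, and~\eqref{eq2:Theo3} shows its $H_\infty$-distance to $T_{x_m u_m}$ equals $(\bar\beta\tau^2 + \bar\gamma)\bar\rho^\tau$, which tends to $0$ as $\tau\to\infty$ because $\bar\rho < 1$; hence the iterated limit (plim in $N$, then limit in $\tau$) is $T_{x_m u_m}$.

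I do not expect any step to be genuinely difficult: the three nontrivial ingredients --- the uniform upper bound on $\lVert T_{\tilde x_m u_m}(\cdot,\tau)\rVert_\infty$, the uniform (for $\tau\ge\tau_0$) upper bound on $\lVert \func{plim}_N T_{ye}(\cdot,\tau)\rVert_\infty$, and the exponential-in-$\tau$ convergence of the LSAR matrices $\hat A_i \to \tilde A_i^*$ --- are furnished by Theorem~\ref{Theo1}, Lemma~\ref{Lemma3}, and Proposition~\ref{prop:LSAR-matrix}, respectively. The one point needing care is that all the $H_\infty$-norms above are finite, i.e.\ that $\func{plim}_N T_{ye}(\cdot,\tau)$ has no poles on the unit circle for $\tau \ge \tau_0$; but this is exactly what the choice of $\tau_0$ in~\eqref{eq5:Pf_Lemma3} (via the uniform lower bound $\vartheta$ on $\sigma_{\min}(T_{\tilde x_m u_m}^{-1})$ from~\eqref{eq:infs}) secures, so the write-up is essentially the assembly of the earlier estimates through the two triangle inequalities above.
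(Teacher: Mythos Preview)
Your proposal is correct and follows essentially the same approach as the paper: the same triangle-inequality split through $T_{\tilde x_m u_m}(\cdot,\tau)$, the same identity $T - T_{\tilde x_m u_m} = T(T_{\tilde x_m u_m}^{-1} - T^{-1})T_{\tilde x_m u_m}$, and the same three ingredients (Lemma~\ref{Lemma3}, the uniform bound on $\lVert T_{\tilde x_m u_m}\rVert_\infty$ from the proof of Theorem~\ref{Theo1}, and Proposition~\ref{prop:LSAR-matrix}) to control the three factors. The paper writes the uniform bound on $\lVert T_{\tilde x_m u_m}(\cdot,\tau)\rVert_\infty$ as $\vartheta^{-1}$ rather than $U$, but these are the same quantity, so your $\bar\beta = n_m\beta\,U\,U_{T_{ye}}^\infty$ coincides with the paper's $\bar\beta = U_{T_{ye}}^\infty\vartheta^{-1}n_m\beta$. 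One trivial slip: in your last paragraph you say the $H_\infty$-distance ``equals'' $(\bar\beta\tau^2+\bar\gamma)\bar\rho^\tau$; it is of course only \emph{at most} that quantity.
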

\begin{proof}
  We only need to prove~\eqref{eq2:Theo3} as it directly implies the last equation in the
  statement. By the sub-additivity and
  sub-multiplicity of induced norms,
\begin{align}
    & \lVert T_{ye}(\cdot, \tau )-T_{x_{m}u_{m}}\rVert _{\infty } \notag
    \\
    &  \leq \lVert T_{ye}(\cdot, \tau )-T_{\tilde{x}_{m}u_{m}}(\cdot, \tau
    )\rVert _{\infty }+\lVert T_{\tilde{x}_{m}u_{m}}(\cdot, \tau
    )-T_{x_{m}u_{m}}\rVert _{\infty } \notag
    \\
    &  \leq \norm{T_{ye}(\cdot, \tau )}_{\infty }
    \norm{T_{\tilde{x}_{m}u_{m}}(\cdot, \tau )}_{\infty}
    \norm{T_{ye}^{-1}(\cdot, \tau )-T_{\tilde{x}_{m}u_{m}}^{-1}(\cdot, \tau
      )}_{\infty } \notag
    \\
    &  \quad +\lVert T_{\tilde{x}_{m}u_{m}}(\cdot, \tau
    )-T_{x_{m}u_{m}}\rVert _{\infty }.  \label{eq1:Pf_Theo3}
  \end{align}%
  Next, by~\eqref{eq:Pf_Theo_5}, Lemma \ref{Lemma3}, and
  Theorem~\ref{Theo1}, there exist positive scalars $\tau _{0}$,
  $U_{T_{ye}}^{\infty }$ and \rev{$\vartheta$} such that for $\tau \geq \tau
  _{0}$,
  \begin{align}
    & \lVert \func{plim}_{N\rightarrow \infty }T_{ye}(\cdot, \tau
    )-T_{x_{m}u_{m}}\rVert _{\infty } \label{eq2:Pf_Theo3}
    \\
    & \quad \leq U_{T_{ye}}^{\infty
    } \rev{\vartheta^{-1}} \lVert
    \func{plim}_{N\rightarrow \infty }T_{ye}^{-1}(\cdot, \tau )-T_{\tilde{x}%
      _{m}u_{m}}^{-1}(\cdot, \tau )\rVert _{\infty }+\bar{\gamma} \rev{\bar \rho^\tau}.  \notag
  \end{align}%
  Finally, according to the definition of $T_{ye}(\cdot, \tau )$
  in~\eqref{eq:AR} and $T_{\tilde{x}_{m}u_{m}}(\cdot, \tau )$
  in~\eqref{eq:Pf_Theo_2}, it follows that
  \begin{align}
    & \lVert \! \func{plim}_{N\rightarrow \infty } \! T_{ye}^{-1}(\cdot, \tau
    )\!-\!T_{\tilde{x}%
      _{m}u_{m}}^{-1}(\cdot, \tau )\rVert _{\infty }\!=\!\lVert \sum_{i=0}^{\tau
      -1} \! z^{-i}(%
    \!\func{plim}_{N\rightarrow \infty } \! \hat{A}_{i}\!-\!\tilde{A}_{i}^{\ast
    })\rVert _{\infty } \notag
    \\
    & \quad \overset{(a)}{\leq }\sum_{i=0}^{\tau -1}\lVert
    \func{plim}%
    _{N\rightarrow \infty }\hat{A}_{i}-\tilde{A}_{i}^{\ast }\rVert
    \overset{(b)}{%
      \leq }n_{m}\beta \tau ^{2} \rev{\bar \rho^\tau}, \label{eq3:Pf_Theo3}
  \end{align}%
  where $(a)$ holds by the sub-additivity and sub-multiplicity of
  $\|\cdot\|$ and $(b)$ follows by
  Proposition~\ref{prop:LSAR-matrix} and the fact that $\rVert A\rVert
  \leq n_{m}\rVert A\rVert _{\max }$ for any matrix $A\in
  \mathbb{R}^{n_{m}\times n_{m}}$. Thus, we
  obtain~\eqref{eq2:Theo3} for $ \tau \geq \tau _{0}$, where
  \rev{$\bar{\beta}\triangleq U_{T_{ye}}^{\infty } \vartheta^{-1} n_{m}\beta$} is a constant.
\end{proof}

According to Theorem~\ref{Theo3}, when the length $N$ of the
measurement data is sufficiently large
and the model order $\tau$ exceeds a certain threshold, the error
$\lVert T_{ye}(\tau) - T_{x_{m}u_{m}} \rVert_{\infty }$ obtained by
the LSAR method decreases exponentially with~$\tau$.

\begin{remark}\longthmtitle{Identification of
    manifest transfer function requires higher-order models as stability
    margin of latent subnetwork decreases}
  Even though an explicit expression of the threshold $\tau_0$ in
  Theorem~\ref{Theo3} as a function of the network is difficult to
  obtain, we can still make some useful observations.  \rev{From
    inequality~\eqref{eq5:Pf_Lemma3} in the proof of
    Lemma~\ref{Lemma3}, one can see that
    $\tau _{0}$ is an increasing function of~$\bar \rho$. Hence,
    as the latent subnetwork becomes less stable ($\rho (A_{22})$
    gets closer to $1$),} the corresponding $\tau _{0}$ becomes larger,
  requiring the order of the AR model to be higher to ensure exponential
  convergence. \oprocend
\end{remark}

\rev{
  \begin{remark}\longthmtitle{Systems described by higher-order
      difference equations -- cont'd}\label{rem:higher-order2}
    As explained in Remark~\ref{rem:higher-order}, the AR
    representation of systems with order $\nu > 1$ is identical to the
    $\nu = 1$ case, although they require larger AR order~$\tau$. For
    large-scale systems ($n \gg 1$), increasing $\tau$ rapidly raises
    the number of parameters in~\eqref{eq:AR_vec}, which leads to
    over-parametrization of the LSAR identification. Our simulations
    in Section~\ref{sec:simulations} show how this can be overcome
    both by increasing $N$ (which is computationally costly) and
    exponential regularization.  Also, note that when $\nu > 1$, the
    only member of the sequence of matrices $A^{(0)}_{11}, \dots,
    A^{(\nu - 1)}_{11}$ (denoting all current and past interactions
    among manifest \emph{nodes}) that is identifiable by the LSAR
    method is $A^{(0)}_{11}$ (representing direct interactions among
    manifest \emph{states}) while the others are only identifiable in
    the aggregate form~\eqref{eq:Theo_1}.  \oprocend
  \end{remark}
}

\subsection{Exact identification for acyclic latent subnetworks}

Here we show that the transfer function of the manifest
subnetwork can be perfectly identified using the LSAR method with a
finite model order if the latent subnetwork is acyclic. We start by
refining the result in Proposition~\ref{prop:LSAR-matrix} and showing
how, in this case, the convergence of the LSAR matrix
estimate~\eqref{eq:ls_AR} to the optimal matrix sequence identified in
Theorem~\ref{Theo1} holds in the mean-square sense.

\begin{proposition}\longthmtitle{The LSAR estimate converges in mean
    square to optimal matrix sequence for acyclic  latent
    subnetworks}\label{prop:LSAR-mean-square}
  Consider the LTI network described
  by~\eqref{eq:LTI_Network_partitioned} where all latent nodes
  are passive.  Further assume that the
  latent subnetwork is acyclic, i.e., there exists $\tau _{22}\in
  \mathbb{Z}_{\geq 1}$ such that $A_{22}^{\tau
    _{22}}=\mathbf{0}_{n_{l}\times n_{l}}$. Given the measured data
  sequence $\{y\}_{1}^{N}$ generated from the
  dynamics~\eqref{eq:LTI_Network_partitioned} stimulated by the white
  noise input $\{u_{m}\}$ according to Assumption~\ref{ass:assump2},
  the LSAR estimate $\hat{\mathbf{A}}_{\tau }(\{y\}_{1}^{N})$
  in~\eqref{eq:ls_AR} satisfies, for any $\tau \geq \tau _{22}+1$,
  \begin{equation*}
    \lim_{N\rightarrow \infty }\E\lbrack (\hat{\mathbf{A}}_{\tau
    }(\{y\}_{1}^{N})-%
    \tilde{\mathbf{A}}_{\tau }^{\ast })^{T}(\hat{\mathbf{A}}_{\tau
    }(\{y\}_{1}^{N})-%
    \tilde{\mathbf{A}}_{\tau }^{\ast })]=\mathbf{0}_{n_{m}\tau \times n_{m}\tau }.
  \end{equation*}
\end{proposition}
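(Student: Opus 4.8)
The plan is to leverage the fact that, in the acyclic case with $\tau\ge\tau_{22}+1$, the residual sequence of the LSAR problem coincides \emph{exactly} with the innovation sequence (rather than only approximately, as in the general case), which turns the estimation error into an explicit ``noise over regressor'' quantity whose second moment decays like $1/N$. Concretely, starting from~\eqref{eq1:Remark_AR_t_domain} in Remark~\ref{Remark_AR_t_domain}, which (recalling $y=x_m$) reads $y(k+1)=\sum_{i=0}^{k}\tilde A_i^{\ast}y(k-i)+A_{12}A_{22}^{k}x_l(0)+u_m(k)$, subtracting $\sum_{i=0}^{\tau-1}\tilde A_i^{\ast}y(k-i)$ gives, for $k\ge\tau$,
\begin{equation*}
  \nu(k)=\sum_{i=\tau}^{k}\tilde A_i^{\ast}y(k-i)+A_{12}A_{22}^{k}x_l(0)+u_m(k),
\end{equation*}
with $\nu$ as in~\eqref{Def_nu}. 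Since $A_{22}^{\tau_{22}}=\mathbf{0}$ and $\tau\ge\tau_{22}+1$, every power of $A_{22}$ appearing above has exponent at least $\tau_{22}$ (for $i\ge\tau$, $\tilde A_i^{\ast}=A_{12}A_{22}^{i-1}A_{21}=\mathbf{0}$ because $i-1\ge\tau_{22}$; and $A_{22}^{k}=\mathbf{0}$ because $k\ge\tau>\tau_{22}$), so $\nu(k)=u_m(k)$ for all $k\ge\tau$, i.e., $\vec{\nu}_N=\vec{u}_{m,N}$. Inserting this into~\eqref{eq1:Pf_Theo_3} yields $\vec{y}_N=\tilde{\mathbf{A}}_\tau^{\ast}\Phi_N+\vec{u}_{m,N}$; substituting into the normal equations $\hat{\mathbf{A}}_\tau\Phi_N\Phi_N^T=\vec{y}_N\Phi_N^T$ gives $M_N\Phi_N\Phi_N^T=\vec{u}_{m,N}\Phi_N^T$, where $M_N\triangleq\hat{\mathbf{A}}_\tau(\{y\}_1^N)-\tilde{\mathbf{A}}_\tau^{\ast}$. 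By Assumption~\ref{ass:assump2}, $\det(\Phi_N\Phi_N^T)\ne0$ with probability one for $N$ large, so almost surely $M_N=G_N P_N^{-1}$ with $G_N\triangleq\tfrac1N\vec{u}_{m,N}\Phi_N^T$ and $P_N\triangleq\tfrac1N\Phi_N\Phi_N^T$, and it remains to prove $\E[M_N^T M_N]\to\mathbf{0}$.

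Next I would bound $\E[\|G_N\|_F^2]$. Writing $\Phi_N\vec{u}_{m,N}^T=\sum_{k=\tau}^{N-1}\phi(k)u_m(k)^T$ with $\phi(k)=[y(k)^T\ y(k-1)^T\ \cdots\ y(k-\tau+1)^T]^T$, Assumption~\ref{ass:assump2} renders $u_m(k)$ zero-mean and independent of $\{y\}_1^{k}$ (as already noted in the proof of Proposition~\ref{prop:LSAR-matrix}), hence independent of $\phi(k)$ and of all earlier summands; thus $\{\phi(k)u_m(k)^T\}_k$ is a martingale-difference sequence, all cross terms vanish in expectation, and
\begin{equation*}
  \E\big[\Phi_N\vec{u}_{m,N}^T\vec{u}_{m,N}\Phi_N^T\big]=\sum_{k=\tau}^{N-1}\E\big[\|u_m(k)\|^2\,\phi(k)\phi(k)^T\big]=n_m\,\E[\Phi_N\Phi_N^T],
\end{equation*}
using $\E\|u_m(k)\|^2=n_m$ and independence again. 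Since $\rho(A)<1$ (Assumption~\ref{ass:assump1}) gives $\sup_j\E\|y(j)\|^2\triangleq c_y<\infty$, taking traces yields $\E[\|G_N\|_F^2]=\tfrac{n_m}{N^2}\tr\E[\Phi_N\Phi_N^T]\le\tfrac{n_m\tau c_y}{N}$.

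Finally I would assemble the pieces. As in the proof of Proposition~\ref{prop:LSAR-matrix}, $P_N\to R_\Phi$ with $R_\Phi\succ\mathbf{0}$ (positive definiteness by Assumption~\ref{ass:assump2}), so $\Pr(\Ec_N^c)\to0$ for $\Ec_N\triangleq\{P_N\succeq\tfrac12 R_\Phi\}$. On $\Ec_N$, $M_N^T M_N=P_N^{-1}G_N^T G_N P_N^{-1}$ and $\tr(M_N^T M_N)\le4\lambda_{\min}(R_\Phi)^{-2}\|G_N\|_F^2$ (using $\tr(BAB)\le\|B\|^2\tr A$ for $A\succeq\mathbf{0}$), so $\E[\tr(M_N^T M_N)\mathbf{1}_{\Ec_N}]\le4\lambda_{\min}(R_\Phi)^{-2}\E[\|G_N\|_F^2]\to0$. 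On $\Ec_N^c$, the minimizer property gives the orthogonal splitting $\|\vec{u}_{m,N}\|_F^2=\|\vec{u}_{m,N}-M_N\Phi_N\|_F^2+\|M_N\Phi_N\|_F^2$, hence $\|M_N\Phi_N\|_F\le\|\vec{u}_{m,N}\|_F$ and $\tr(M_N^T M_N)\le\|\vec{u}_{m,N}\|_F^2/\lambda_{\min}(\Phi_N\Phi_N^T)$, and one shows $\E[\tr(M_N^T M_N)\mathbf{1}_{\Ec_N^c}]\to0$ by a uniform-integrability argument. Since $\E[M_N^T M_N]\succeq\mathbf{0}$, $\tr\E[M_N^T M_N]\to0$ forces $\E[M_N^T M_N]\to\mathbf{0}$, as claimed. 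The main obstacle is precisely the $\Ec_N^c$ term: convergence of $M_N$ to zero \emph{in probability} is already essentially available (from Proposition~\ref{prop:LSAR-matrix} by letting $\bar\rho\downarrow0$, since $\rho(A_{22})=0$ here), and the genuinely new content is upgrading it to mean-square convergence, which is delicate because $P_N^{-1}$ and $G_N$ are correlated so $\E[M_N^T M_N]$ cannot simply be factored, and the vanishingly-likely event of a near-singular empirical Gram matrix could a priori inflate the expectation; this is handled by combining a concentration estimate on $\lambda_{\min}(P_N)$ with the $O(1/N)$ bound on $\E[\|G_N\|_F^2]$.
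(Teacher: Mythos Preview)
Your approach coincides with the paper's: both show $\nu(k)=u_m(k)$ from nilpotency of $A_{22}$ together with $\tau\ge\tau_{22}+1$ (the paper phrases this as $T_{\tilde x_m u_m}^{-1}T_{x_m u_m}=I_{n_m}$ via Corollary~\ref{Theo2}), define $Z_N=\tfrac1N\vec\nu_N\Phi_N^T$ (your $G_N$), and establish $\E[Z_N^TZ_N]=O(1/N)$ using independence of $u_m(k)$ from $\{y\}_1^k$. The only divergence is in the final passage from $Z_N$ to $M_N=Z_N P_N^{-1}$: the paper simply writes $\lim_N\E[M_N^TM_N]=R_\Phi^{-1}\bigl(\lim_N\E[Z_N^TZ_N]\bigr)R_\Phi^{-1}=\mathbf 0$, substituting the deterministic limit $R_\Phi^{-1}$ for the random $P_N^{-1}$ without further comment, whereas you attempt a more careful event-splitting argument on $\{P_N\succeq\tfrac12R_\Phi\}$ and correctly flag the complementary contribution as the residual technicality---one that the paper's own proof does not address either.
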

\begin{proof}
  If $A_{22}$ is nilpotent, using Corollary~\ref{Theo2}, we deduce
  that the transfer function from $u_{m}$ to $\nu $ defined
  in~\eqref{Def_nu} is $T_{\tilde{x}%
    _{m}u_{m}}^{-1}T_{x_{m}u_{m}}=I_{n_{m}}$. Consequently, the random
  vectors $\nu (k)$'s are i.i.d. with zero mean and finite second
  moment $\E\lbrack \nu (k)\nu ^{T}(k)] = I_{n_{m}}$.  Define
  \begin{align*}
    Z_{N}& \triangleq \frac{1}{N}(\hat{\mathbf{A}}_{\tau
    }-\tilde{\mathbf{A}}_{\tau }^{\ast })\Phi _{N}\Phi _{N}^{T}
    \\
    & \overset{(a)}{=}\frac{1}{N}(\vec{\nu}_{N}-\vec{e}_{N})\Phi
    _{N}^{T} \overset{(b)}{=}\frac{1}{N}\vec{\nu}_{N}\Phi _{N}^{T},
  \end{align*}
  where $(a)$ follows from~\eqref{eq:AR_vec} and~\eqref{eq1:Pf_Theo_3}
  and $(b) $ follows from the fact that the least-squares estimate
  $\hat{\mathbf{A}}_{\tau }$ in~\eqref{eq:ls_AR} renders
  $\vec{e}_{N}\Phi _{N}^{T} = \mathbf{0}%
  _{n_{m}\times n_{m}\tau }$. Combining the fact that the $\nu (k)$'s
  are i.i.d. and the fact that $\{y\}_{1}^{k}$ is a function of
  $\{\nu\}_{1}^{k-1}$, we deduce that $\nu (k)$ are independent of
  $\{y\}_{1}^{k}$.  This further implies that $\E%
  \lbrack Z_{N}]=\mathbf{0}_{n_{m}\times n_{m}\tau }$. Furthermore,
  \begin{align*}
    \lim_{N\rightarrow \infty }\E\lbrack Z_{N}^{T}Z_{N}]&
    =\lim_{N\rightarrow \infty }\frac{1}{N^{2}}\E\lbrack \Phi
    _{N}\vec{\nu}_{N}^{T}\vec{\nu}_{N}\Phi _{N}^{T}]
    \\
    & =\lim_{N\rightarrow \infty }\frac{1}{N}R_{\Phi }
    =\mathbf{0}_{n_{m}\tau \times n_{m}\tau }.
  \end{align*}
  Therefore, $ \lim_{N\rightarrow \infty }\E\lbrack
  \hat{\mathbf{A}}_{\tau }-\tilde{\mathbf{A}}%
  _{\tau }^{\ast }] =\lim_{N\rightarrow \infty }\E\lbrack
  Z_{N}]R_{\Phi }^{-1} =\mathbf{0}_{n_{m}\times n_{m}\tau }$ and $
  \lim_{N\rightarrow \infty }\E\lbrack (\hat{\mathbf{A}}_{\tau
  }-\tilde{\mathbf{A}}%
  _{\tau }^{\ast })^{T}(\hat{\mathbf{A}}_{\tau
  }-\tilde{\mathbf{A}}_{\tau }^{\ast })] =R_{\Phi
  }^{-1}\lim_{N\rightarrow \infty }\E\lbrack Z_{N}^{T}Z_{N}]R_{\Phi
  }^{-1} =\mathbf{0}_{n_{m}\tau \times n_{m}\tau }$,
   as claimed.
\end{proof}

We build on this result to show that the manifest transfer function
can be perfectly identified using the LSAR method with a finite model
order if the latent subnetwork is acyclic.

\begin{theorem}\longthmtitle{Exact manifest transfer function
    identification for acyclic latent subnetworks}\label{Theo3_nil}
  Under the \nnrev{assumptions} of Proposition~\ref{prop:LSAR-mean-square}, for
  any $\tau \geq \tau _{22}+1$,
  \begin{equation*}
    \func{plim}_{N\rightarrow \infty }T_{ye}(\{y\}_{1}^{N}, \tau
    ) = T_{x_{m}u_{m}}.
  \end{equation*}
\end{theorem}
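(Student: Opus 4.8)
The plan is to derive the statement directly, with essentially no approximation error, by combining the exact finite-order representation of Corollary~\ref{Theo2} with the mean-square (hence in-probability) convergence of the least-squares estimate established in Proposition~\ref{prop:LSAR-mean-square}. Fix $\tau \geq \tau_{22}+1$. Proposition~\ref{prop:LSAR-mean-square} gives $\lim_{N\to\infty}\E[(\hat{\mathbf{A}}_{\tau}(\{y\}_{1}^{N}) - \tilde{\mathbf{A}}_{\tau}^{\ast})^{T}(\hat{\mathbf{A}}_{\tau}(\{y\}_{1}^{N}) - \tilde{\mathbf{A}}_{\tau}^{\ast})] = \mathbf{0}$; taking the trace of this matrix identity shows $\E[\lVert \hat{\mathbf{A}}_{\tau}(\{y\}_{1}^{N}) - \tilde{\mathbf{A}}_{\tau}^{\ast}\rVert_{F}^{2}] \to 0$, i.e., $\hat{\mathbf{A}}_{\tau}(\{y\}_{1}^{N})$ converges to $\tilde{\mathbf{A}}_{\tau}^{\ast}$ in mean square and therefore in probability. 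Hence $\func{plim}_{N\to\infty}\hat{A}_{i}(\{y\}_{1}^{N}) = \tilde{A}_{i}^{\ast}$ for every $i \in \{0,\dots,\tau-1\}$.

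Next I would pass to the associated transfer functions. From the definition of the LSAR model in~\eqref{eq:AR}, $T_{ye}^{-1}(\omega,\tau) = zI_{n_{m}} - \sum_{i=0}^{\tau-1}z^{-i}\hat{A}_{i}(\{y\}_{1}^{N})$ with $z = e^{j\omega}$, which is an affine (hence continuous) function of the estimates; combining this with the previous step exactly as in~\eqref{eq:hat-tilde} yields $\func{plim}_{N\to\infty}T_{ye}^{-1}(\omega,\tau) = zI_{n_{m}} - \sum_{i=0}^{\tau-1}z^{-i}\tilde{A}_{i}^{\ast} = T_{\tilde{x}_{m}u_{m}}^{-1}(\omega,\tau)$ for every $\omega \in [-\pi,\pi]$, the last equality by~\eqref{eq:Pf_Theo_2}. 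Since $\tau - 1 \geq \tau_{22}$ and $A_{22}^{\tau_{22}} = \mathbf{0}$, the terms $\tilde{A}_{i}^{\ast} = A_{12}A_{22}^{i-1}A_{21}$ vanish for all $i \geq \tau_{22}+1$, so $\sum_{i=0}^{\tau-1}z^{-i}\tilde{A}_{i}^{\ast} = A_{11} + \sum_{i=1}^{\infty}z^{-i}A_{12}A_{22}^{i-1}A_{21} = A_{11} + A_{12}(zI_{n_{l}}-A_{22})^{-1}A_{21}$, which by~\eqref{eq:Pf_Theo_1} is precisely $T_{x_{m}u_{m}}^{-1}(\omega)$; this is exactly the content of Corollary~\ref{Theo2}, specialized to $\tau \geq \tau_{22}+1$. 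Thus $\func{plim}_{N\to\infty}T_{ye}^{-1}(\omega,\tau) = T_{x_{m}u_{m}}^{-1}(\omega)$ for all $\omega$.

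It then remains to invert back. By Assumption~\ref{ass:assump1} the full network is Schur stable, so $T_{x_{m}u_{m}}$ has no poles on the unit circle and $T_{x_{m}u_{m}}(\omega)$ is a finite, nonsingular matrix for every $\omega$. Since matrix inversion is continuous on the set of invertible matrices, the continuous-mapping theorem gives $\func{plim}_{N\to\infty}T_{ye}(\omega,\tau) = \bigl(\func{plim}_{N\to\infty}T_{ye}^{-1}(\omega,\tau)\bigr)^{-1} = T_{x_{m}u_{m}}(\omega)$ for every $\omega$, i.e., $\func{plim}_{N\to\infty}T_{ye}(\{y\}_{1}^{N},\tau) = T_{x_{m}u_{m}}$, as claimed. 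The only place requiring a little care is this last inversion: for finite $N$ the estimated AR model may have a pole on the unit circle, so $T_{ye}(\omega,\tau)$ need not even be defined there; however, since $T_{ye}^{-1}(\cdot,\tau)$ concentrates around the nonsingular limit $T_{x_{m}u_{m}}^{-1}$, the probability of this degeneracy vanishes as $N\to\infty$, so the in-probability limit of $T_{ye}(\cdot,\tau)$ is well defined and equals $T_{x_{m}u_{m}}$. No other obstacle arises; in particular, in contrast with Theorem~\ref{Theo3}, no lower bound on $\tau$ beyond $\tau_{22}+1$ is needed, because the perturbation $\sum_{i=0}^{\tau-1}z^{-i}(\hat{A}_{i} - \tilde{A}_{i}^{\ast})$ is not merely small but identically zero in the plim.
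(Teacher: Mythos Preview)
Your proof is correct and follows essentially the same route as the paper: both arguments use Proposition~\ref{prop:LSAR-mean-square} to obtain $\func{plim}_{N\to\infty}\hat{\mathbf A}_\tau=\tilde{\mathbf A}_\tau^{\ast}$, pass to $\func{plim}_{N\to\infty}T_{ye}^{-1}=T_{\tilde x_m u_m}^{-1}$, invoke Corollary~\ref{Theo2} to identify $T_{\tilde x_m u_m}$ with $T_{x_m u_m}$, and then invert. The only difference is in this last inversion step: the paper appeals to the decomposition~\eqref{eq1:Pf_Theo3} together with the uniform bound of Lemma~\ref{Lemma3}, whereas you invoke the continuous-mapping theorem directly at each $\omega$ using nonsingularity of $T_{x_m u_m}(\omega)$; your variant is slightly more self-contained since it sidesteps the threshold~$\tau_0$ appearing in Lemma~\ref{Lemma3}.
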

\begin{proof}
  We have $\func{plim}_{N\rightarrow \infty }\hat{\mathbf{A}}_{\tau
  }(\{y\}_{1}^{N}) = \tilde{\mathbf{A}}_{\tau }^{\ast }$ from
  Proposition~\ref{prop:LSAR-mean-square}, which combined with~%
  \eqref{eq3:Pf_Theo3} implies
  \begin{equation*}
    \func{plim}_{N\rightarrow \infty }T_{ye}^{-1}(\tau )=T_{\tilde{x}%
      _{m}u_{m}}^{-1}(\tau ).
  \end{equation*}%
  Moreover, from Corollary~\ref{Theo2}, we have $ T_{\tilde{x}%
    _{m}u_{m}}(\tau )=T_{x_{m}u_{m}}$. The statement then follows
  from~\eqref{eq1:Pf_Theo3} and Lemma~\ref{Lemma3}.
\end{proof}

\section{Simulations}\label{sec:simulations}

In this section, we illustrate the performance of least-squares
auto-regressive estimation in identifying the manifest transfer
function in two examples, a deterministic directed ring network and a
group of Erd\H{o}s--R\'{e}nyi random networks. We pay particular
attention to the behavior displayed as the length of measured data and
the model order change.
In both examples, the input signal 
is a white Gaussian process with unit variance.

\begin{example}\longthmtitle{Directed ring network}\label{Example2}
  Consider a directed ring network of $40$ nodes with self-loops and
  all edge weights equal to $\alpha =0.25$. The nodes with indices
  $\{5, 23, 33, 34, 36\}$ are manifest and the remaining $35$ nodes
  are passive latent. Fig.~\ref{Fig_Ring_1e5}.(a) shows a 3D plot of the
  identification error $\lVert T_{ye}-T_{x_{m}u_{m}}\rVert _{\infty }$
  of the LSAR method, with axes corresponding to length of measured
  data and model order, respectively.
  We note that, when the measured data length $N$ is
  small, increasing the AR model order $\tau $ does not provide better
  estimation of the manifest transfer function. Similarly, when the
  model order $\tau $ is too low, increasing the data length $N$ is
  not helpful either.  Instead, when $N$\ and $\tau $ increase
  simultaneously, the LSAR method provides good estimation of the
  manifest transfer function without any knowledge of the latent
  nodes, as predicted by Theorem~\ref{Theo3}.  In
  Fig.~\ref{Fig_Ring_1e5}.(b), we fix 
  $N=10^{6}$ and show that the error of the model obtained by the LSAR
  method is quite similar to the error $\lVert T_{
    \tilde{x}_{m}u_{m}}-T_{x_{m}u_{m}}\rVert _{\infty }$ of the ideal
  AR model from Theorem~\ref{Theo1}.
  \begin{figure}[tbh]
    \centering
    \subfigure[]{
    \includegraphics[width=.47\linewidth]{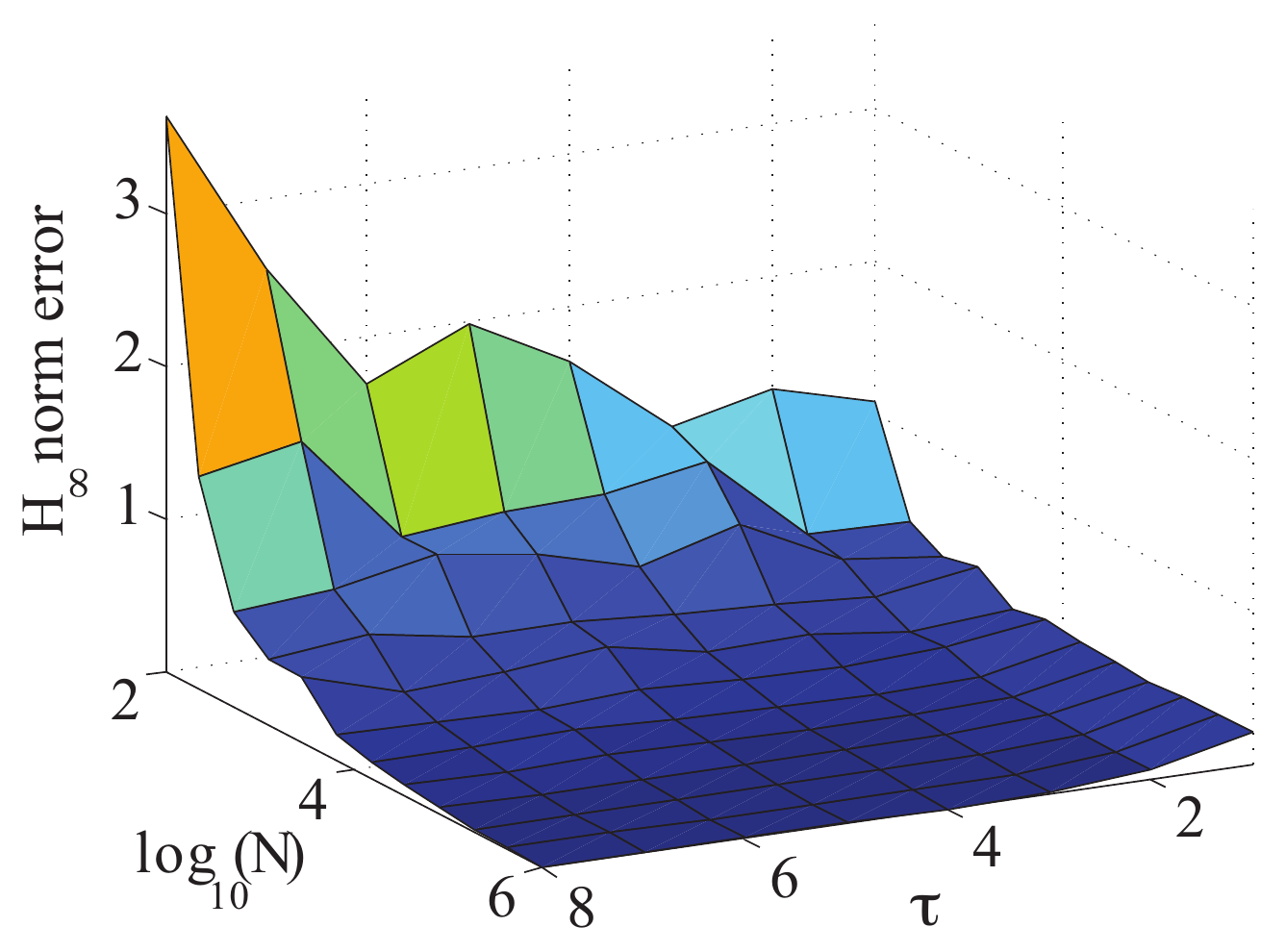}
    }
    \subfigure[]{
    \includegraphics[width=.46\linewidth]{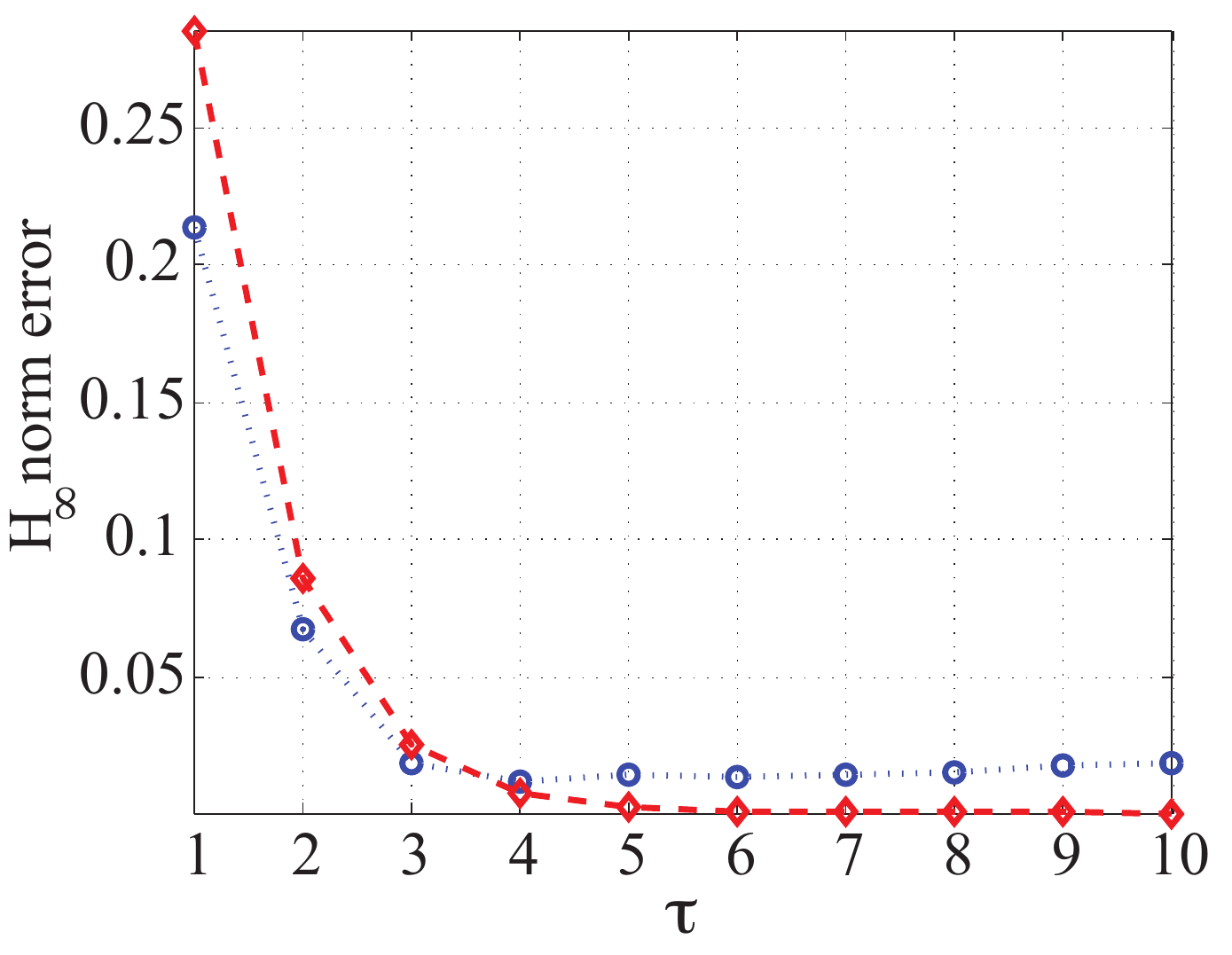}
    }
    \caption{\nrev{$H_{\infty }$-norm errors for the directed ring network of
      Example~\ref{Example2}. (a) The $H_{\infty }$-norm error of the LSAR
      method as a function of data length $N$ and model
      order $\tau$. Performance improves as $N$ and
      $\tau$ increase. (b) Comparison of the $H_{\infty }$-norm errors of the LSAR
      method (blue dotted lines) and the optimal AR model from
      Theorem~\ref{Theo1} (red dashed lines) for $N=10^{6}$.}}
      \label{Fig_Ring_1e5}
  \end{figure}
  Note that the latter requires knowledge of the true adjacency
  matrix~$A$, and we use it here merely for comparison purposes.
  \oprocend
\end{example}

\begin{example}\longthmtitle{Erd\H{o}s--R{\'{e}}nyi random
    network}\label{Example3}
  Here we consider a group of $10$ Erd\H{o}s--R{\'{e}}nyi random
  networks~\cite{BB:01}.  Each network in the group is of type
  $G(10,0.35)$, with $5$ manifest nodes chosen randomly and the
  remaining $5$ nodes are latent. Each pair of edges $(i,j),
  (j,i)$, $1\leq i <j\leq 10$ has nonzero weights with probability
  $0.35$ (we choose edges in pairs so that, when plotting the graph,
  the edge direction can be omitted). The weight of each edge has a
  uniform distribution in $\setdef{x\in \mathbb{R}}{0.1<x<0.35}$ (note
  that $(i,j)$ and $(j,i)$ can have different weights).  Because of
  rounding errors in the numerical computation, the estimated
  coefficient matrices~\eqref{eq:ls_AR} of the AR model are usually
  full matrices.  The lower bound on the edge weights allows us to
  discard entries in $\hat{\mathbf{A}}_{0}$ that are smaller than
  $0.1$.
  We consider a fixed length $N=10^{6}$ of measured data
  and analyze the effect of varying model
  order. Fig.~\ref{Fig:ER_3d} shows a 3D plot of the error in the
  identification of the manifest transfer function by the LSAR
  estimation, with axes corresponding to network index and model
  order, respectively.
  \begin{figure}[tbh]
    \centering
    \includegraphics[width=.65\linewidth]{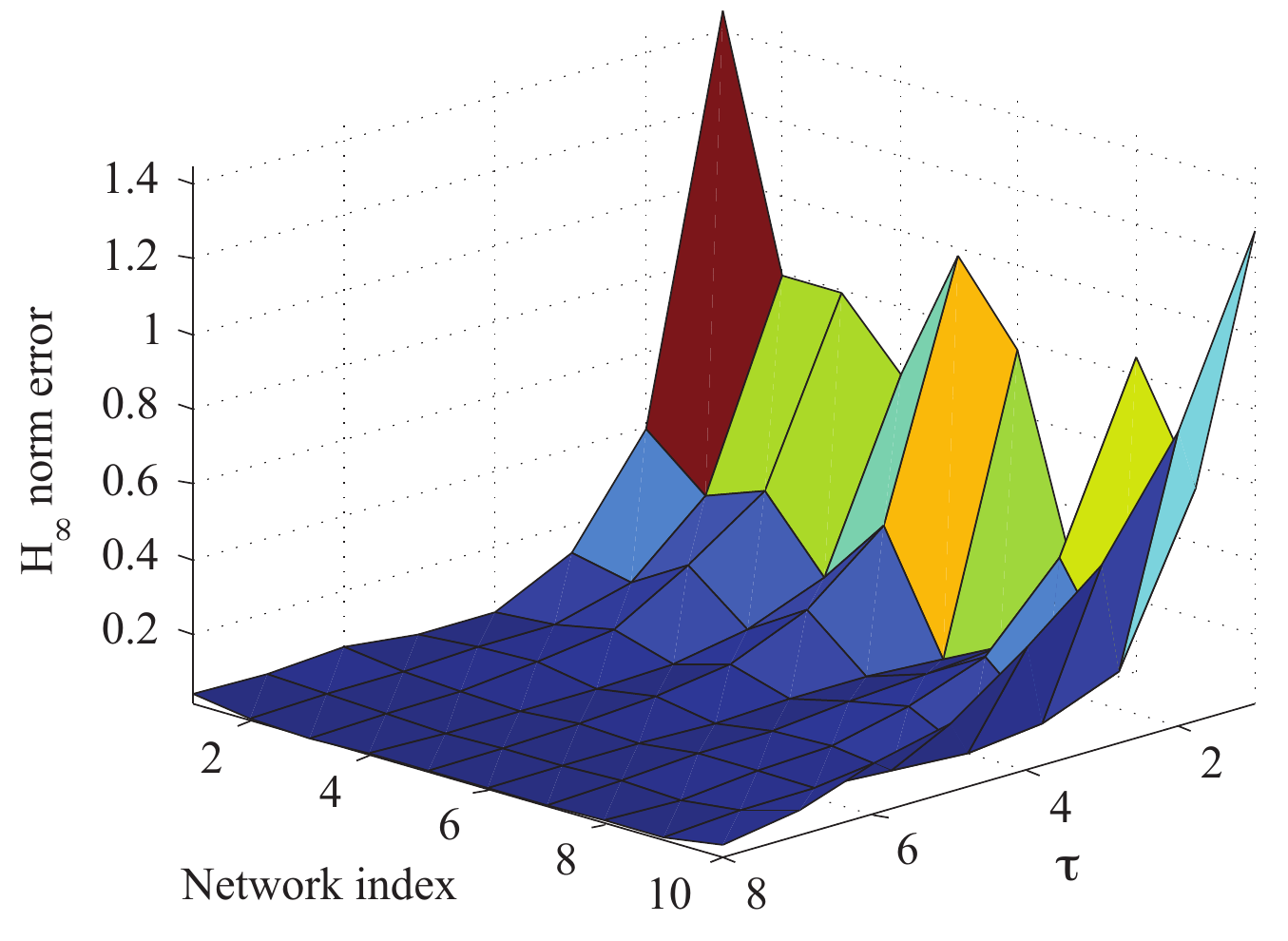}
    \caption{Illustration of the $H_{\infty }$-norm error of the LSAR
      with respect to the model order $\protect\tau $ for the group of
      $G(10,0.35)$ Erd\H{o}s--R{\'{e}}nyi random networks of Example
      \protect\ref{Example3}.  Performance improves as the model order
      $\protect\tau $ increases for all $10$ networks. The length of
      measured data is $N=10^{6}$.}\label{Fig:ER_3d}
  \end{figure}
  One can see the improvement in performance as the model order
  increases for all~$10$ networks.
  Fig.~\ref{Fig:ER_HinfErr_4models} compares the identification
  error of the LSAR method for the networks with indices $1, 6, 8, 10$
  in Fig.~\ref{Fig:ER_3d} against the error of the optimal AR model
  from Theorem~\ref{Theo1}.
  \begin{figure}[tbh]
    \centering
    \includegraphics[width=.75\linewidth]{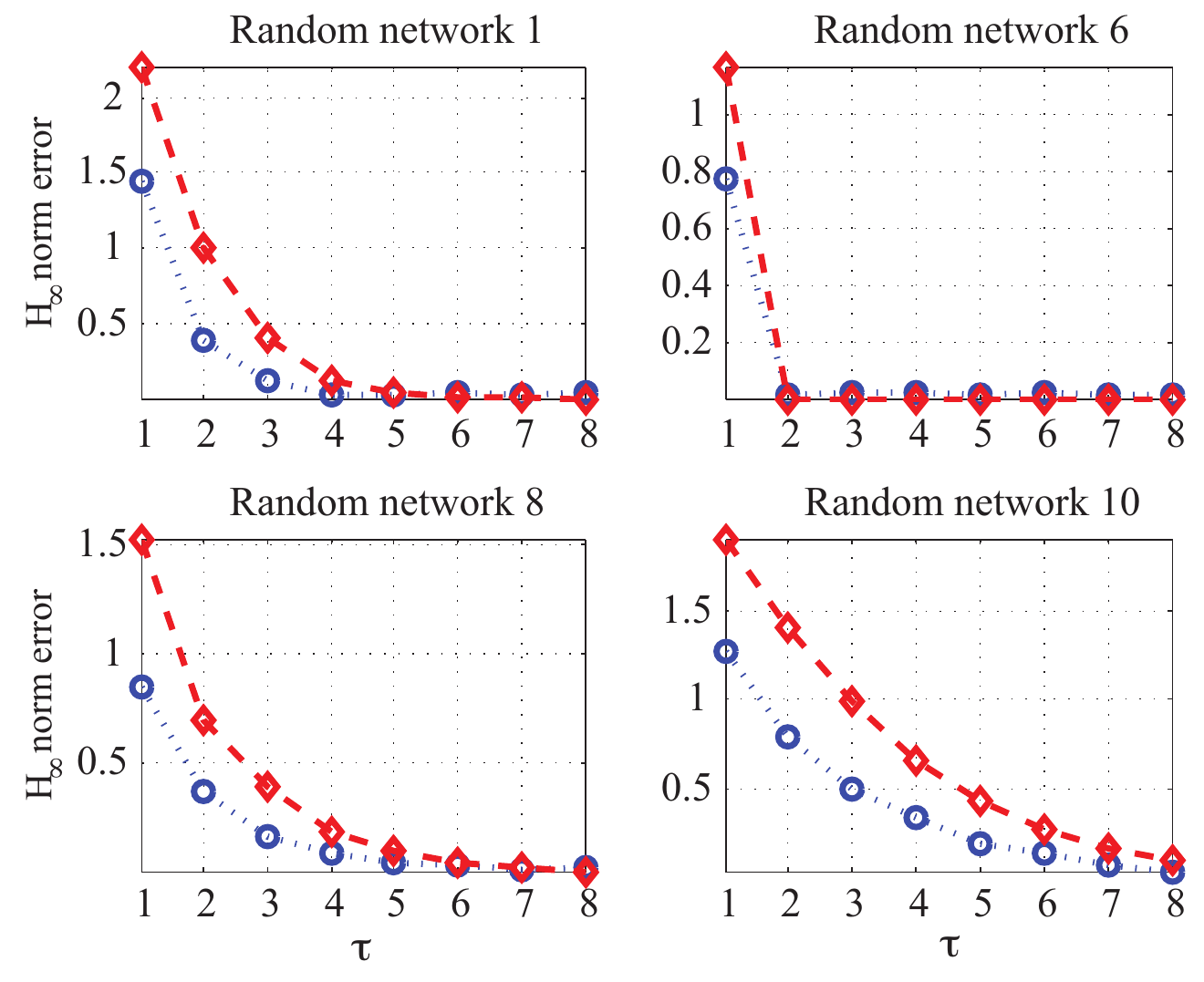}
    \caption{Comparison of the $H_{\infty }$-norm errors of the LSAR
      method (red dashed lines) and the optimal AR model from
      Theorem~\ref{Theo1} (blue dotted lines) for the
      Erd\H{o}s--R{\'{e}}nyi random networks with indices $1, 6, 8,
      10$ in Fig.~\ref{Fig:ER_3d}. The estimation error for network
      $6$ becomes $0$ when the AR model has order higher than $1$
      because the latent subnetwork is acyclic with $\tau_{22}=1$. The
      length of measured data is
      $N=10^{6}$. }\label{Fig:ER_HinfErr_4models}
  \end{figure}
  The latent subnetwork of network $6$ is acyclic (with
  $A_{22}=\mathbf{0}_{5 \times 5}$), and the estimation error goes to
  $0$ when the AR model has order higher than $\tau_{22}=1$, as
  predicted by Theorem~\ref{Theo3_nil}.  To illustrate our
  observations in Remark~\ref{re:LSAR-direct-latent} regarding the
  identification of manifest and latent interactions,
  Fig.~\ref{Fig:ER_graph} shows on the left the networks with
  indices $1, 6, 8, 10$ of Fig.~\ref{Fig:ER_3d} and on the right the
  corresponding reconstructions obtained with the LSAR method.  The
  indirect interactions represented by dashed edges in these plots
  imply the presence of latent nodes.  For comparison, we have also
  used the brain connectivity estimator technique called direct
  directed transfer function (dDTF)
  measure~\cite{MK-MD-WAT-SLB:01,AK-MM-MK-KJB-SK:03} from neuroscience
  to identify direct connections between nodes.  This technique is a
  refinement of the directed transfer function (DTF) approach, which
  instead cannot distinguish between direct and indirect connections.
  We have employed the dynamical modeling method within the Source
  Information Flow Toolbox
  (SIFT)~\cite{TM-AD-CK-SM:10,AD-TM-CK-ZAA-NBS-AV-SM:11} in
  EEGLAB~\cite{AD-SM:04}, which is a widely used open-source toolbox
  for EEG analysis.  Fig.~\ref{Fig:dDTF_ER_Network10} shows the
  interaction topology among the $5$ manifest nodes in network $10$
  identified by SIFT using the dDTF measure.  The dDTF measure is in
  the frequency domain and can also be a function of time (e.g., for
  time-varying networks). Since our networks are time-invariant, the
  time axis can be ignored. The plot shows that the dDTF identifies
  roughly equally strong connections for $(2,4)$ (which is in reality
  mediated by latent nodes) and $(4,5)$ (which is a true direct
  connection). This is in contrast with the identification made with
  the LSAR method presented in Fig.~\ref{Fig:ER_graph}(d).\oprocend
\end{example}

\begin{figure}[tbh!]
  \centering
  \subfigure[]{\includegraphics[width=.47\linewidth]{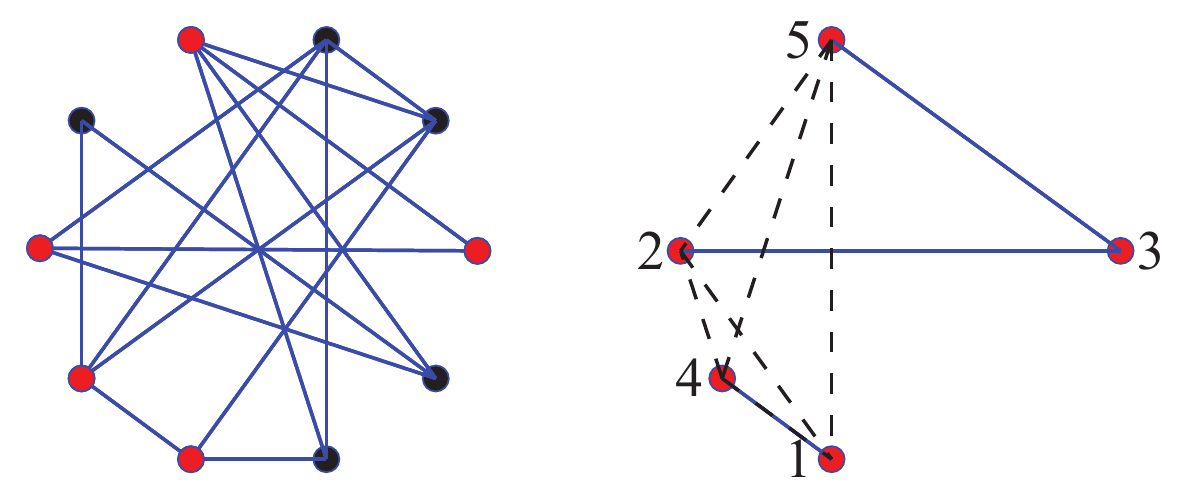}}
\hspace{10pt}%
  \subfigure[]{\includegraphics[width=.47\linewidth]{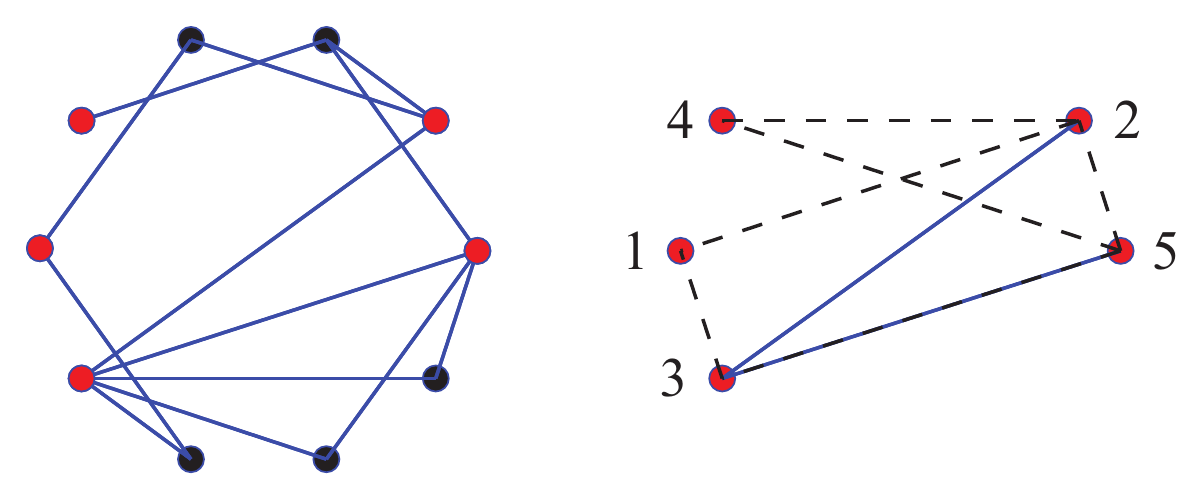}}
  \\[0.5ex]
  \subfigure[]{\includegraphics[width=.47\linewidth]{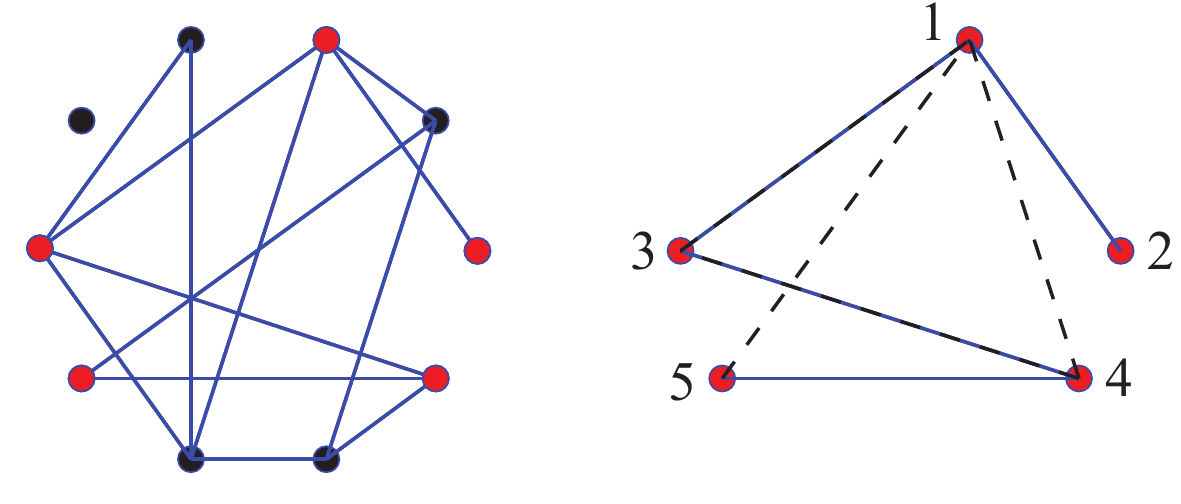}}
\hspace{10pt}%
  \subfigure[]{\includegraphics[width=.47\linewidth]{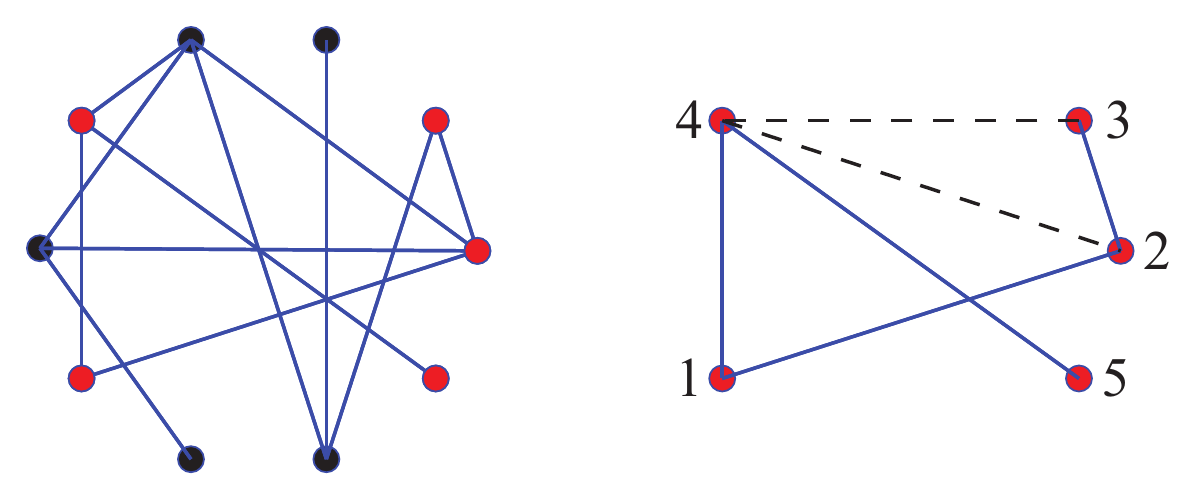}}
  \caption{Left: Erd\H{o}s--R{\'{e}}nyi random networks corresponding
    to the networks with indices $1$ (a), $6$ (b), $8$ (c), $10$ (d)
    in Fig.~\ref{Fig:ER_3d}, where red circles represent manifest
    nodes and black circles represent latent nodes. Right:
    reconstructed interaction graphs of the manifest subnetworks using
    the LSAR method. The numbers next to these nodes indicate their
    indices. A blue solid edge represents direct interaction and a
    black dashed edge represents indirect interaction through latent
    nodes. Note that the latent subnetwork of network $6$ is
    acyclic.}\label{Fig:ER_graph}
\end{figure}

\begin{figure}[tbh!]
  \centering
  \subfigure[]{\includegraphics[width=.8\linewidth]{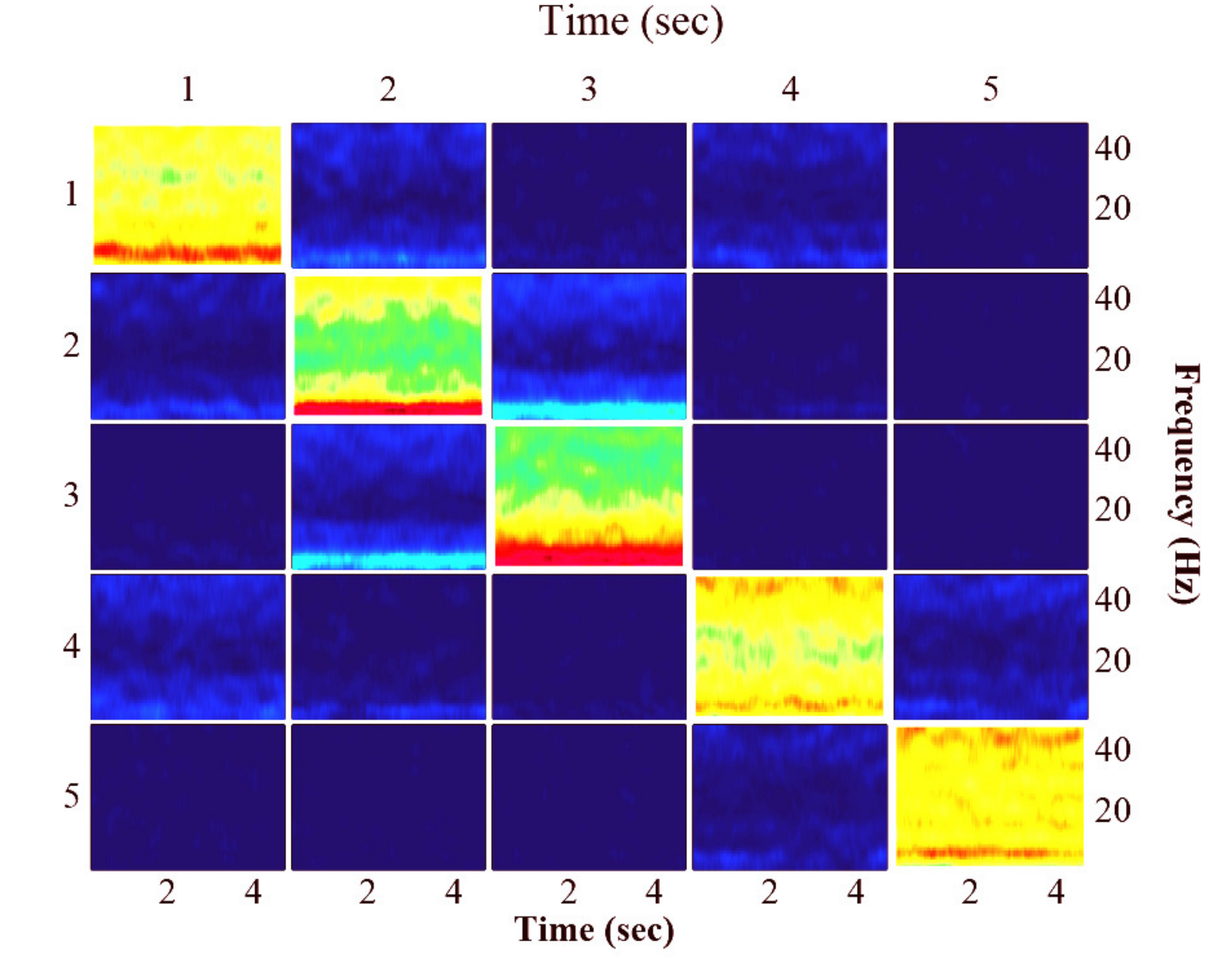}}
  \\
  \subfigure[]{\includegraphics[width=.8\linewidth]{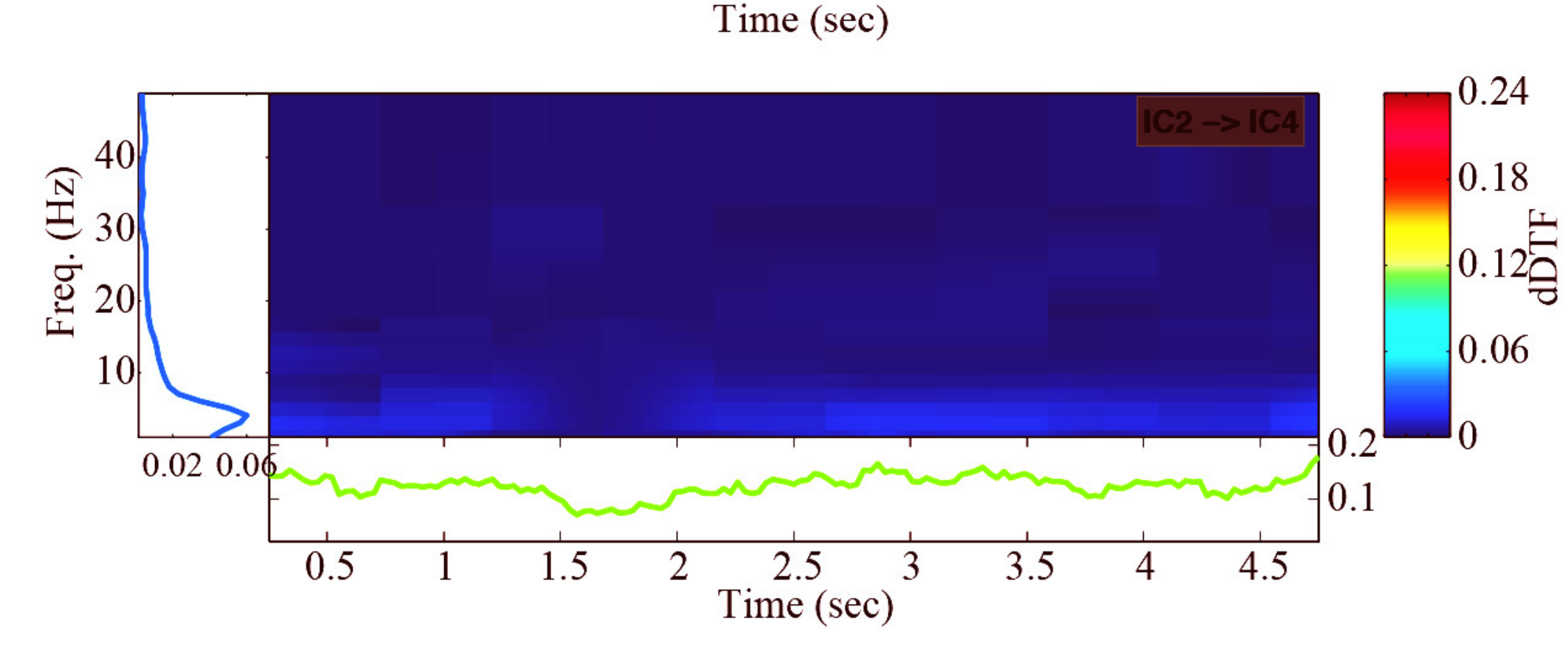}}
  \\
  \subfigure[]{
  \includegraphics[width=.8\linewidth]{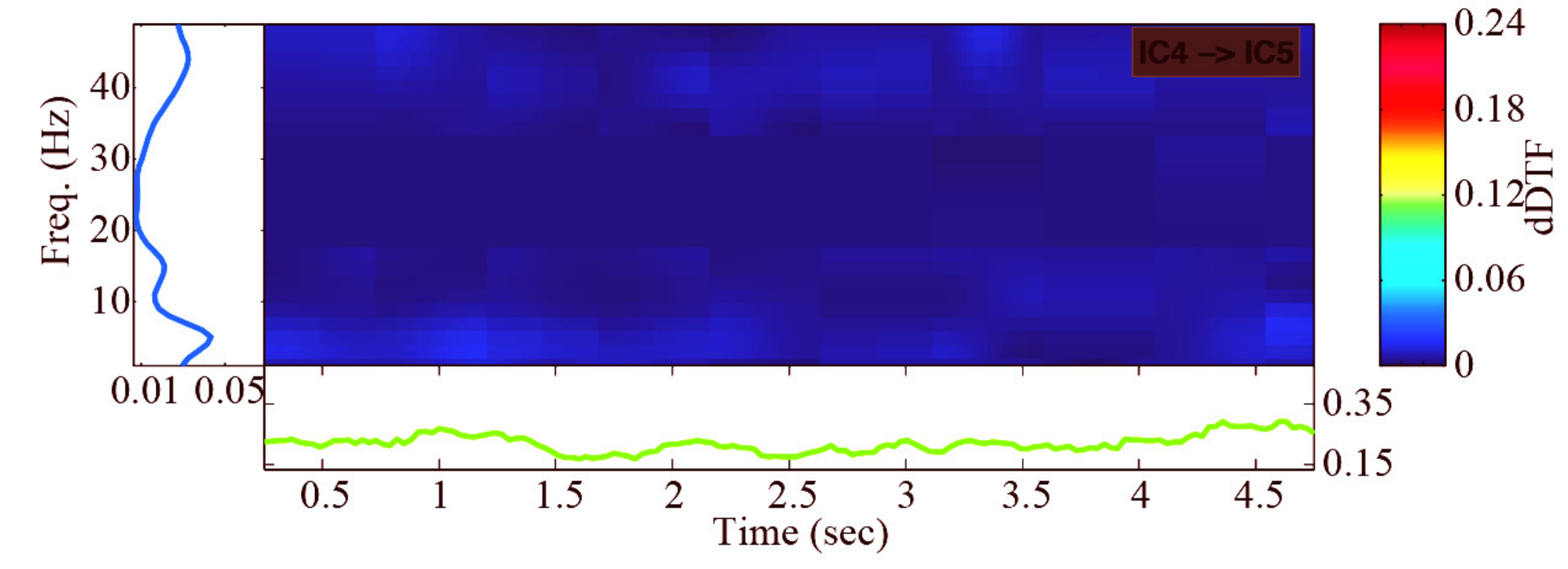}}
  \caption{(a) The interaction topology identified by the dDTF
    method for the Erd\H{o}s--R{\'{e}}nyi network with index $10$. (b and c) A zoom-in of the (indirect) connection
    $(2,4)$ and the (direct) connection
    $(4,5)$, resp.}\label{Fig:dDTF_ER_Network10}
  \vspace*{-1.5ex}
\end{figure}

\rev{
  \begin{example}\longthmtitle{Cortical brain network identification
      from EEG data}\label{ex:eeg}
    In this example, we apply our method to a multi-channel
    electroencephalogram (EEG) time-series recorded from a human scalp
    during a selective visual attention experiment in order to
    identify the manifest and latent-mediated connections among the
    channels. The EEG data is taken from the sample dataset available
    in the EEGLAB MATLAB toolbox~\cite{AD-SM:04}. This dataset
    contains recordings from 32 channels for more than $3$ seconds
    with \nrev{$T_s = 7.8$\! ms} sampling time ($128$\! Hz sampling
    frequency).  \nrev{Channel locations are shown in
      Fig.~\ref{fig:eeg-loc-R}(a) on a top (axial) view of the
      skull.  During the experiment, the subject is asked to perform
      specific motor actions in response to certain visual stimuli,
      requiring coordination among several cortices. We take the first
      13 EEG channels corresponding to the fronto-temporal cortical
      areas (shown as blue squares in Fig.~\ref{fig:eeg-loc-R}(a))}
    as the manifest nodes and the remaining channels as well as the
    truly hidden brain regions (the ones not probed in the test) as
    the latent nodes. In the following, we present the results of
    identifying the direct and indirect connections among the manifest
    nodes using the LSAR method as well as the dDTF
    algorithm~\cite{MK-MD-WAT-SLB:01,AK-MM-MK-KJB-SK:03} and the S+L
    algorithm of~\cite{MZ-RS:16}. For each method, we only keep the
    edges whose identified weights are above a certain threshold
    $\theta$ (which we choose as a proportional constant $\alpha \in
    (0, 1)$ times the largest edge weight in the network).

    In neuroscience, the brain dynamics generating the EEG data are
    usually approximated by a high-order AR model of the
    form~\eqref{eq:LTI_Network-HO} ($\nu \gtrsim 10$). As mentioned in
    Remark~\ref{rem:higher-order2}, larger $\tau$ and thus larger
    number of parameters are then required, which may lead to
    over-parametrization. To prevent this, we use an
    exponentially-regularized version of~\eqref{eq:ls_AR} by
    minimizing
    \begin{align}\label{eq:reg}
      \func{tr}(\vec e_N \vec e _N^T + \gamma \hat{\mathbf A}_\tau P
      P^T \hat{\mathbf A}_\tau^T),
    \end{align}
    where $P = \func{diag}(1, \rho_0^{-1}, \dots, \rho_0^{-(\tau -
      1)}) \otimes I_{n_m}$ and, ideally, $\rho_0 = \rho(A_{22})$ (in
    practice, it is found by trial and error). The role of the
    exponential regularizer is to encourage the higher-order AR terms to
    decay exponentially, as $\tilde A_i^*$ do. In the simulations that
    follow, we have used $\gamma = 10$ and $\rho_0 = 0.9$.
    
    Fig.~\ref{fig:eeg-graph-ours} shows the reconstructed manifest
    subnetwork with direct and indirect connections using the LSAR
    method for $\tau = 15$ and different values of $\alpha$. One can
    observe that the sensitivity of the network structure to the
    threshold ratio $\alpha$ is significant, showing that the majority
    of network links are relatively weak with respect to the largest
    link (which is usually a self-loop). This sensitivity, however, is
    smaller for the indirect connections. Note that increasing
    $\alpha$ is a way of enforcing sparsity among the manifest nodes
    similar (but not equivalent) to~\cite{MZ-RS:16}. Also, note that
    unlike~\cite{MZ-RS:16}, the manifest subnetwork estimated by our
    method is directed (though directions are not shown in
    Fig.~\ref{fig:eeg-graph-ours} for simplicity).
    \begin{figure}[htb!]
      \subfigure[$\alpha =
      0.15$]{\includegraphics[width=0.47\linewidth]{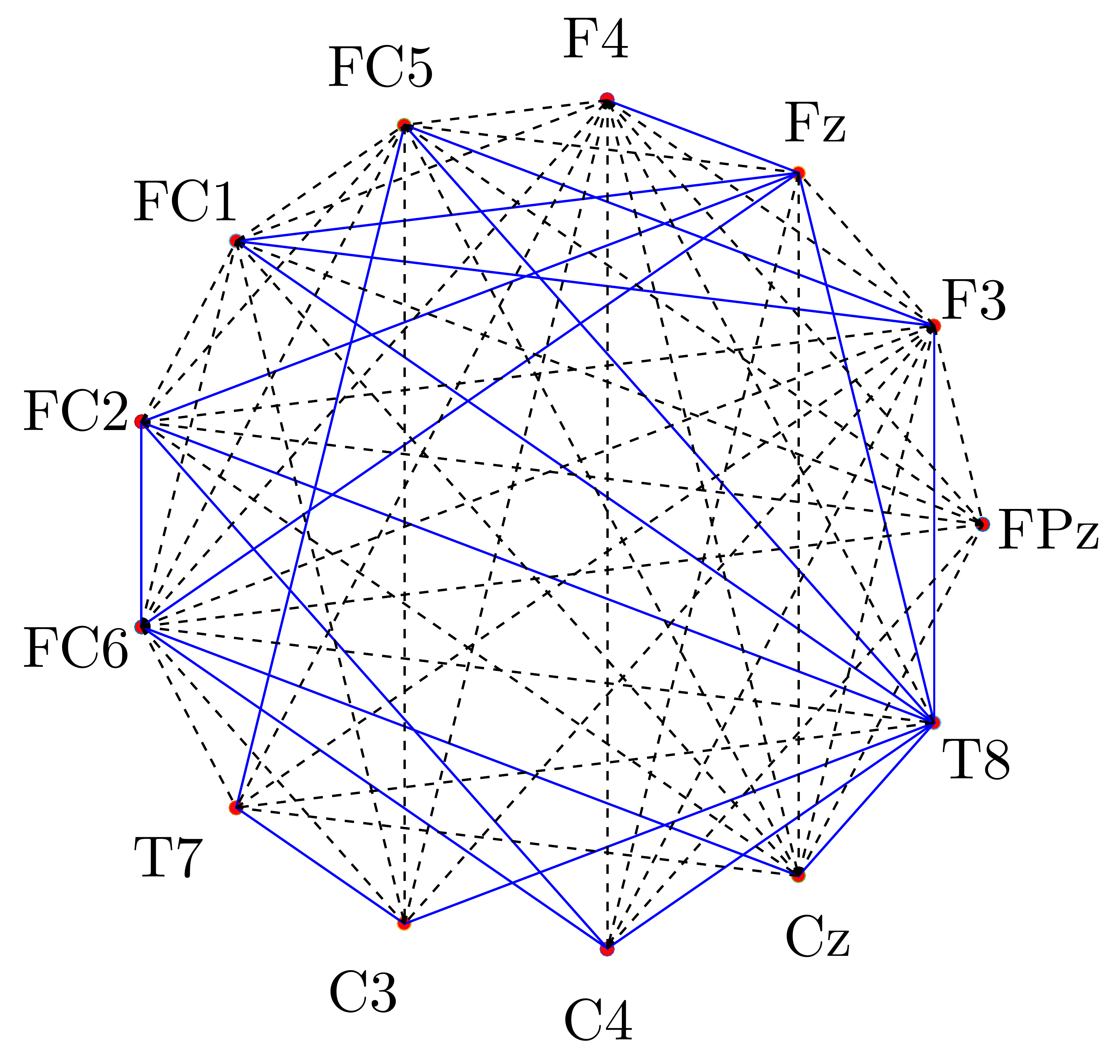}}
      \hspace{5pt}
      \subfigure[$\alpha =
      0.12$]{\includegraphics[width=0.47\linewidth]{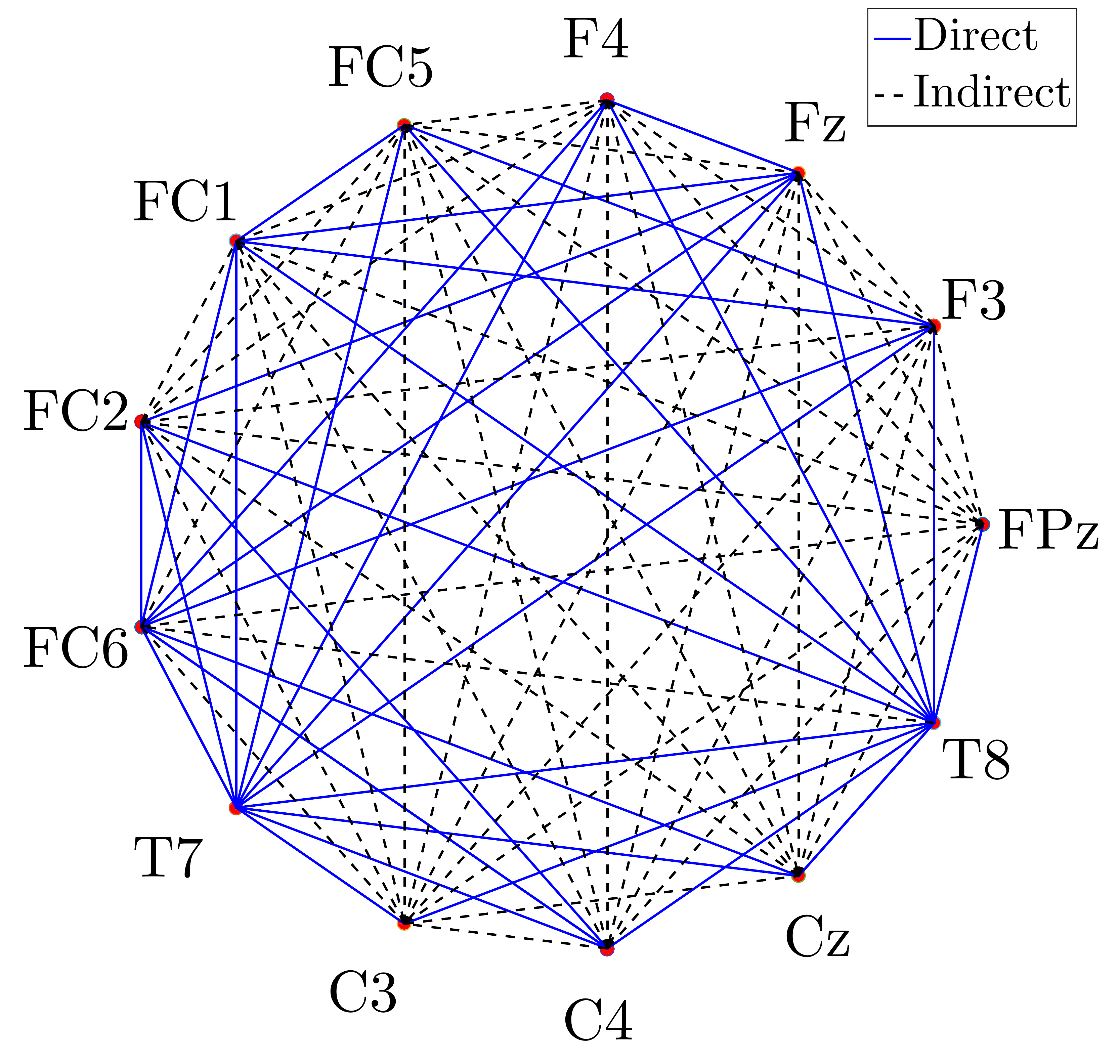}}
      \\
      \subfigure[$\alpha =
      0.09$]{\includegraphics[width=0.47\linewidth]{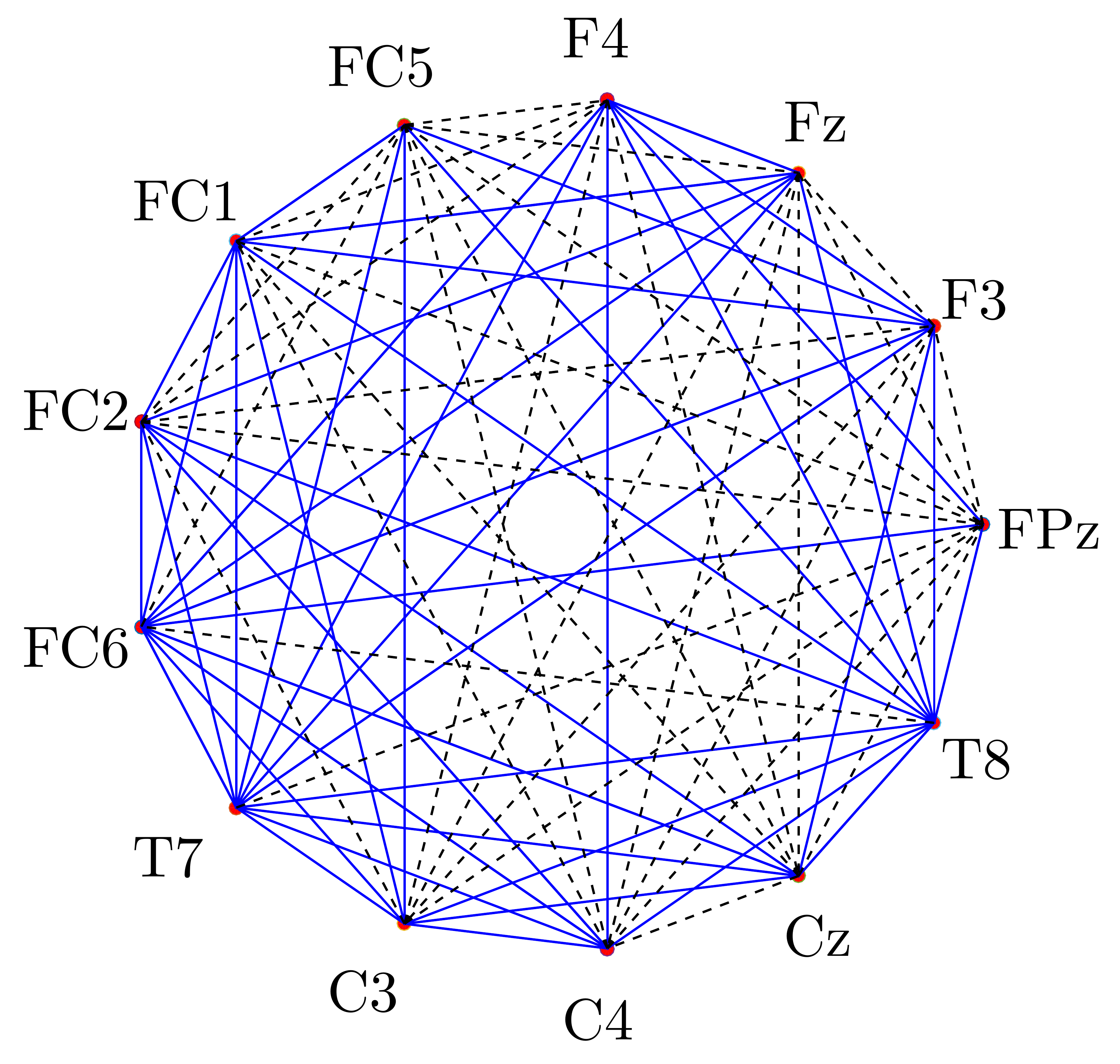}}
      \hspace{5pt}
      \subfigure[$\alpha =
      0.06$]{\includegraphics[width=0.47\linewidth]{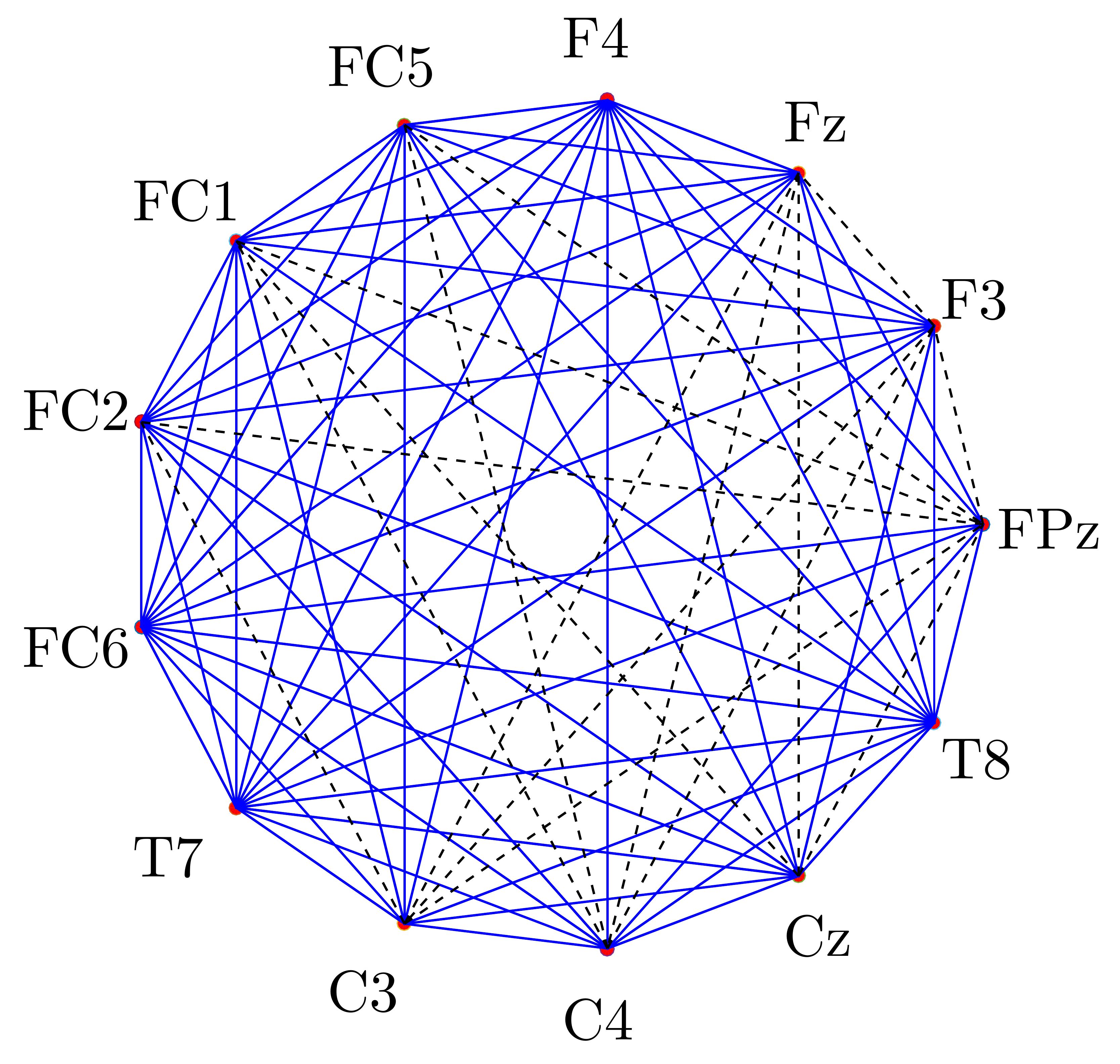}}
      \caption{Reconstructed manifest subnetwork for the EEG data
        in Example~\ref{ex:eeg} using our proposed method \nrev{with the exponentially-regularized objective function~\eqref{eq:reg} and $\gamma = 10$, $\rho_0 = 0.9$,} and $\tau =
        15$. The direct (solid blue) and indirect (dashed black)
        connections are depicted for different values of threshold
        ratio $\alpha$. For each value of $\alpha$, the connections
        whose weights are smaller than $\alpha$ times the largest
        network weight are removed.}
      \label{fig:eeg-graph-ours}
    \end{figure}
    
    For comparison, Fig.~\ref{fig:eeg-graph-ZS} shows the
    reconstructed manifest subnetwork with direct and indirect
    connections using the S+L method of~\cite{MZ-RS:16} for $n =
    5$\footnote{$n$ represents the model order
      in~\cite{MZ-RS:16}. While the role of the model order is not
      discussed in the reference, the use of higher-order models
      significantly increases the computational cost of the
      algorithm. Also, note that there is no one-to-one correspondence
      between the subfigures of
      Figs.~\ref{fig:eeg-graph-ours}-\ref{fig:eeg-graph-EEGLAB}.}. Although
    the use of a threshold value is not prescribed in~\cite{MZ-RS:16},
    we have used a fixed value of $\alpha = 0.01$ for all values of
    $(\lambda, \gamma)$, since the absence of a threshold ($\alpha =
    0$) results
    in all nodes being estimated to be (both directly and indirectly)
    connected. This lack of sparsity occurs for all values of
    $(\lambda, \gamma)$ (no matter how large they are chosen), unless
    extremely large values are employed, which results in a fully
    disconnected network. 
    From various plots, we see that even
    with the use of a threshold value all the nodes are
    estimated to be indirectly connected, with the sparsity of direct
    connections and the estimated number of latent nodes being determined by $(\lambda, \gamma)$.
    \nrev{This abundance of indirect connections and parameter-based tuning of direct connectivity is similar to our results in Fig.~\ref{fig:eeg-graph-ours}, even though the details of the reconstructed networks do not exactly match.}
    \begin{figure}[htb!]
      \subfigure[$\lambda = 1, \gamma\lambda = 0.1, l =
      1$]{\includegraphics[width=0.47\linewidth]{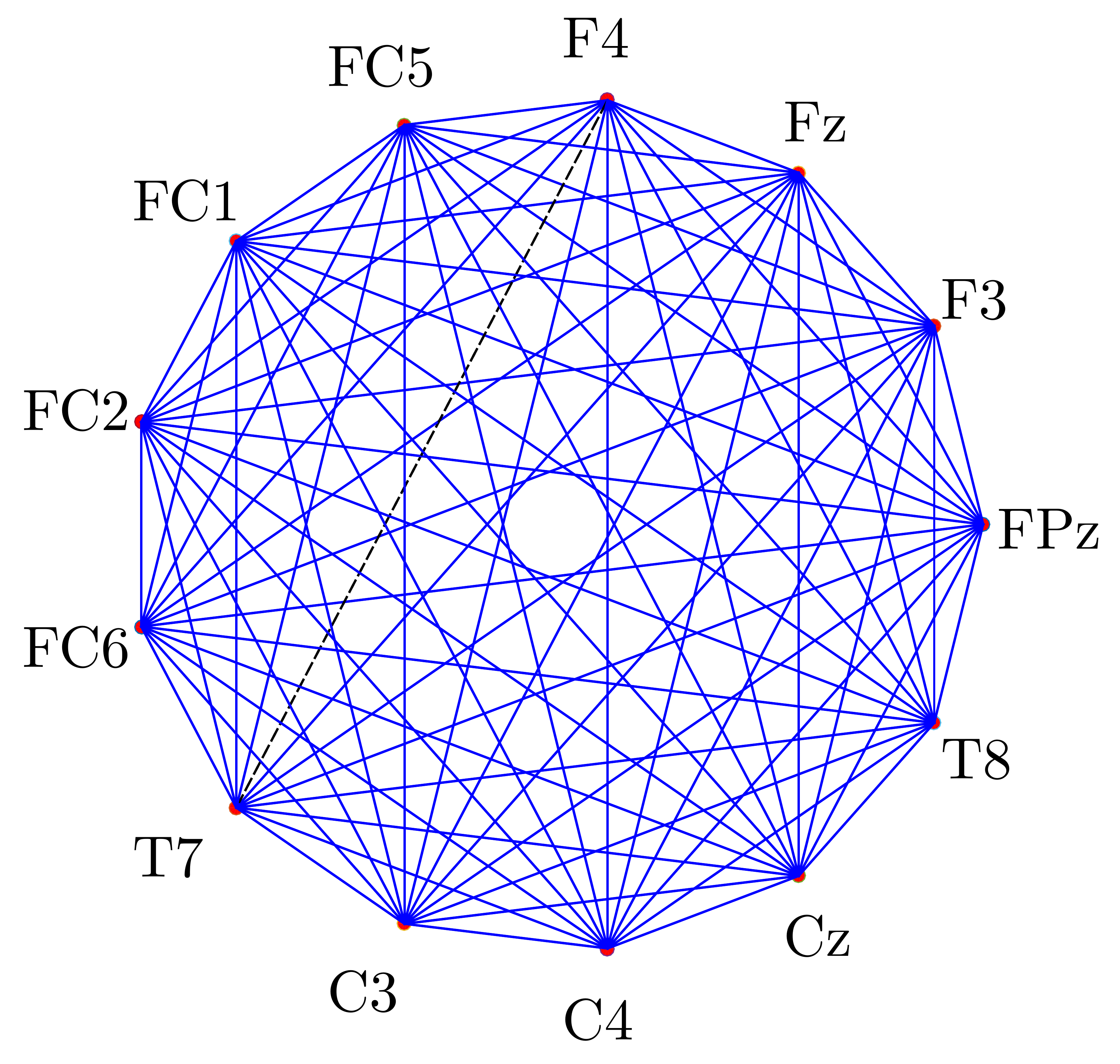}}
      \hspace{5pt}
      \subfigure[$\lambda = 1, \gamma\lambda = 0.5, l =
      13$]{\includegraphics[width=0.47\linewidth]{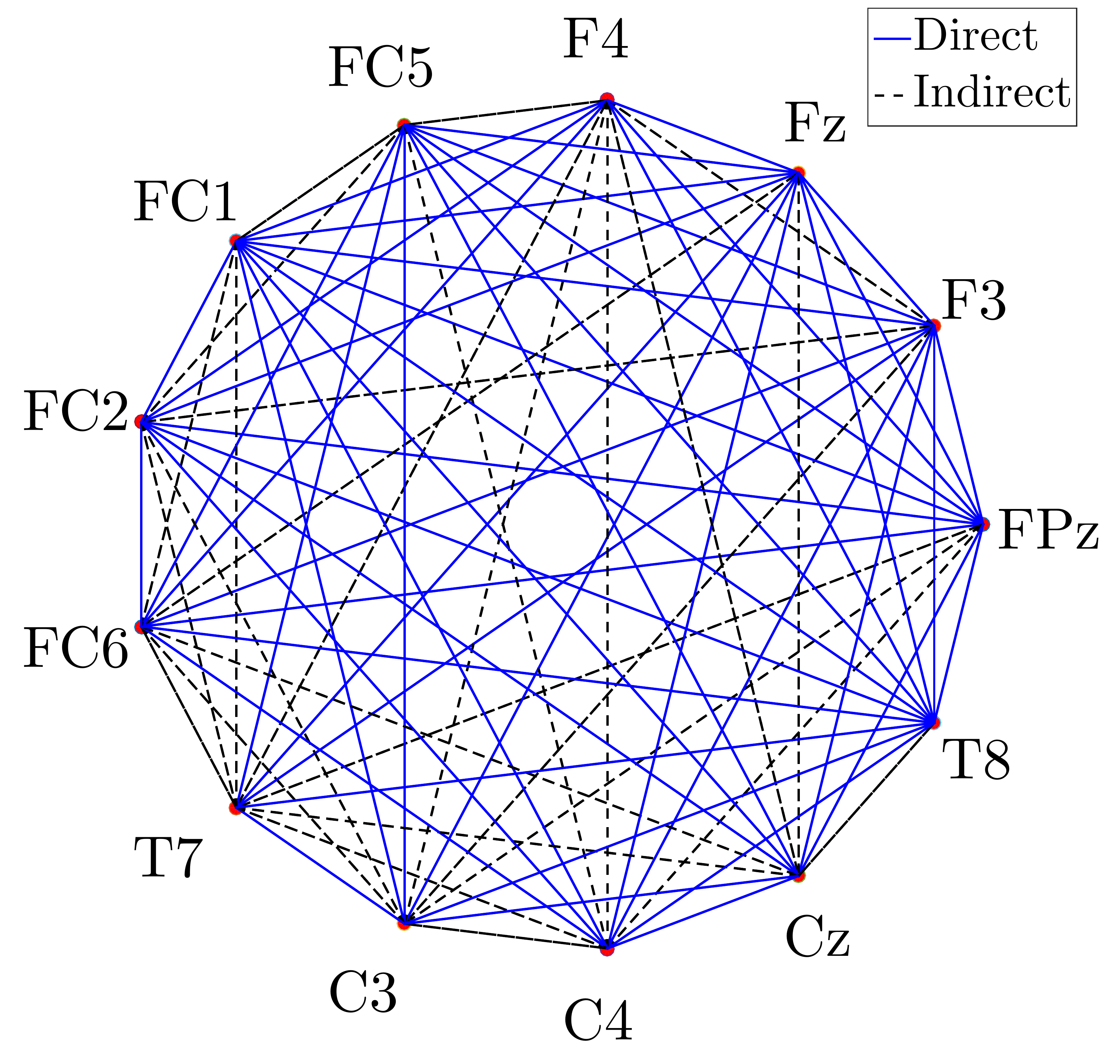}}
      \\
      \subfigure[$\lambda = 1, \gamma\lambda = 1, l =
      18$]{\includegraphics[width=0.47\linewidth]{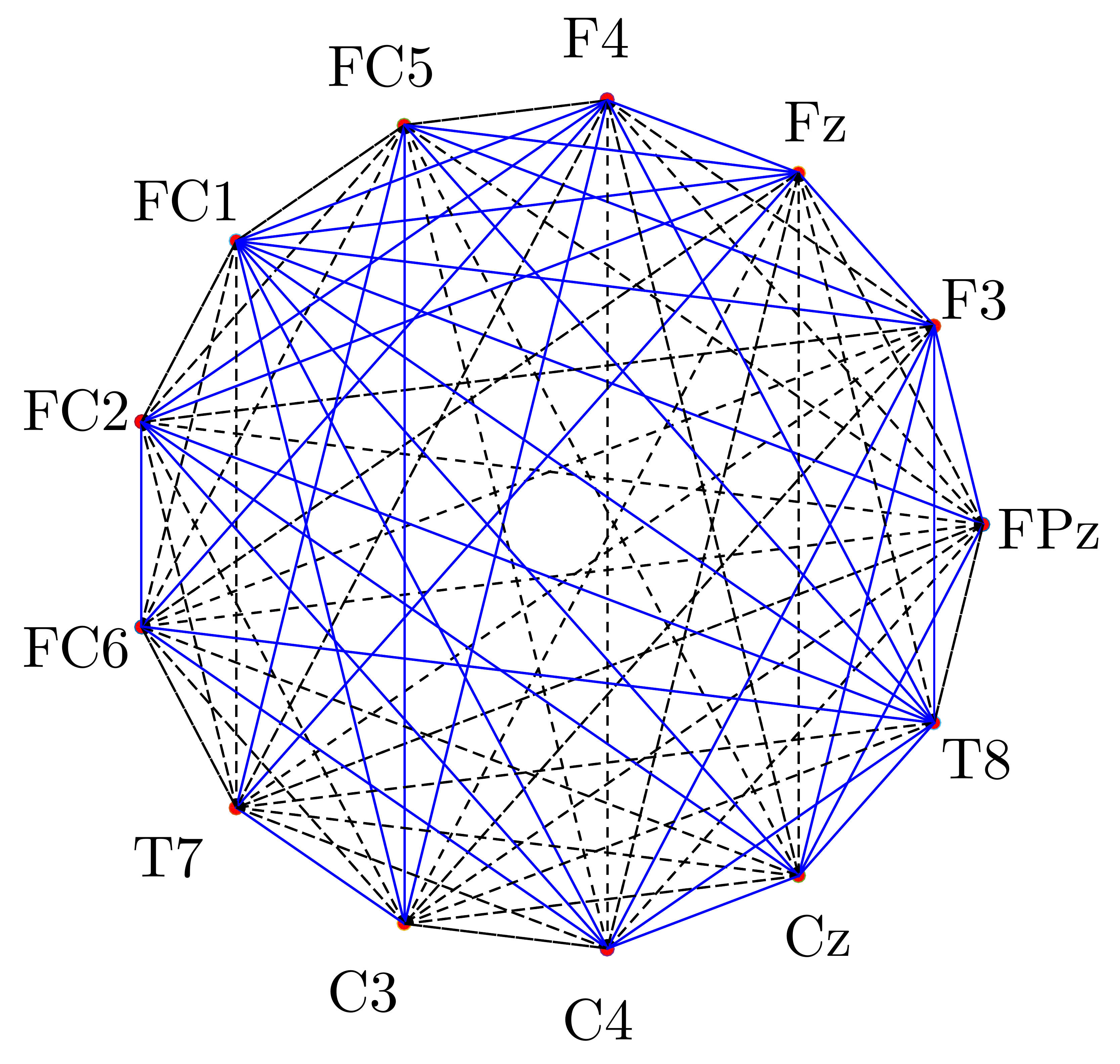}}
      \hspace{5pt}
      \subfigure[$\lambda = 0.1, \gamma\lambda = 0.5, l =
      50$]{\includegraphics[width=0.47\linewidth]{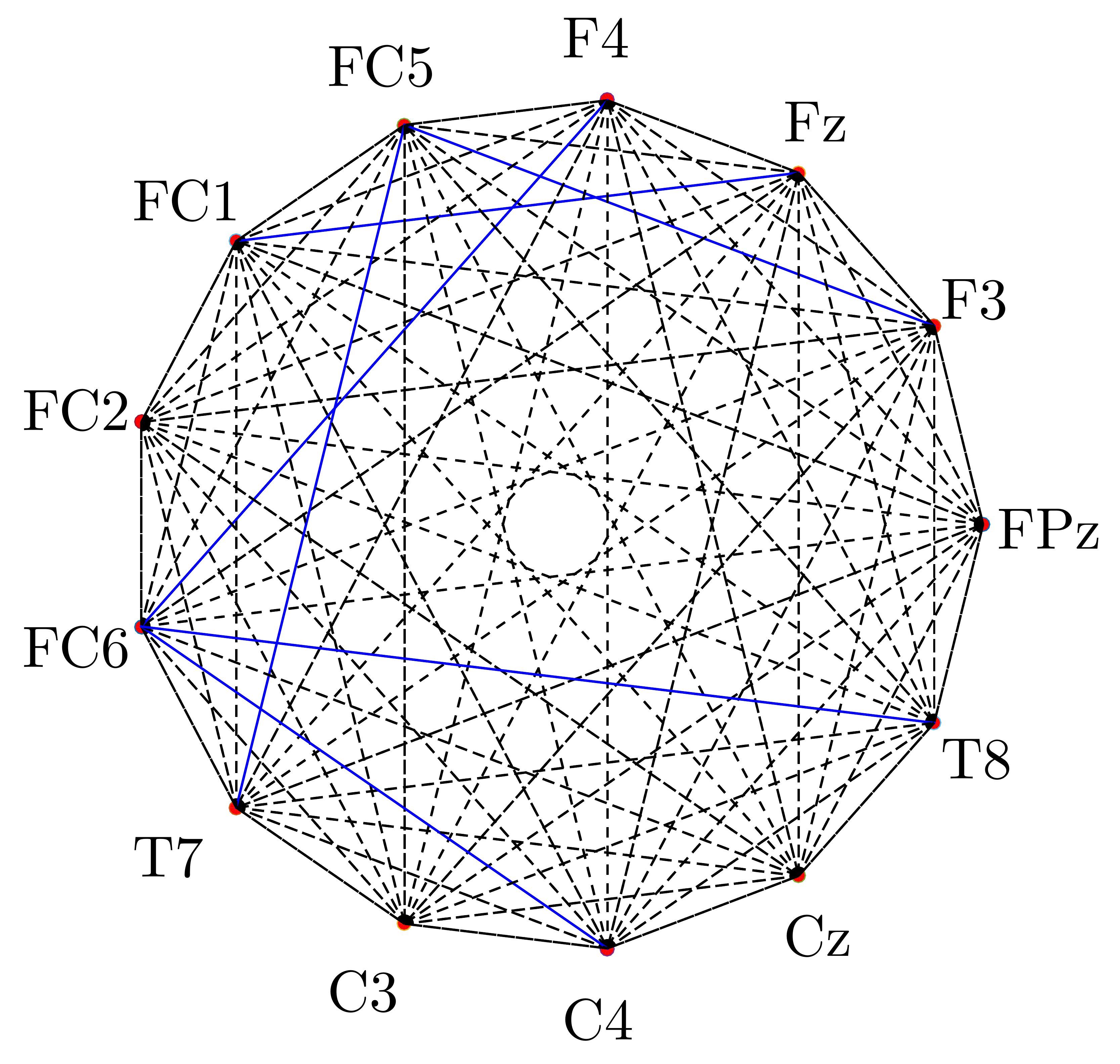}}
      \caption{Reconstructed manifest subnetwork for
        Example~\ref{ex:eeg} using the \nrev{S+L} method in~\cite{MZ-RS:16}. The
        direct (solid blue) and indirect (dashed black) connections
        are depicted for different values of weight parameters
        $(\lambda, \gamma)$ and fixed threshold ratio $\alpha =
        0.01$. $l$ represents the estimated number of latent
        nodes.}\label{fig:eeg-graph-ZS}
      \vspace*{-15pt}
    \end{figure}

    Fig.~\ref{fig:eeg-graph-EEGLAB} shows the result of applying
    both the Directed Transfer Function (DTF)~\cite{MJK-KJB:91} and
    direct Directed Transfer Function (dDTF) methods to the EEG
    channel data to estimate the indirect and direct connections
    between the manifest nodes, respectively, for different frequency
    bands. Both methods are applied to the data using the EEGLAB SIFT
    plugin for $\tau = 15$ (selected based on SIFT Model Order
    Selection). In all cases, a constant threshold ratio $\alpha =
    0.1$ is used and the value of the threshold is computed with
    respect to the largest \emph{off-diagonal} link weight in the same
    frequency. \nrev{As can be seen, the connectivity pattern is
      considerably different between lower and higher frequencies,
      where several pairs are not even indirectly connected
      over the $\delta$-$\theta$ band. This is in contrast to the
      reconstructed networks of Fig.~\ref{fig:eeg-graph-ours} in
      which most pairs are at least indirectly connected, even for
      threshold values as large as $\alpha = 0.15$. Nevertheless, a
      common feature of all the reconstructed networks in
      Figs.~\ref{fig:eeg-graph-ours}-\ref{fig:eeg-graph-EEGLAB} is
      that the density of direct connections is higher in the
      fronto-central (FC) areas and lower in central (C) areas and
      midline frontal pole (FPz).
      The independence of this sparsity pattern from the employed
      reconstruction method and parameter value suggests that it is a
      robust feature of the actual brain connectivity among these
      areas.}
    
    \begin{figure}
      \subfigure[$f = 1^{Hz}$ ($\delta$
      band)]{\includegraphics[width=0.47\linewidth]{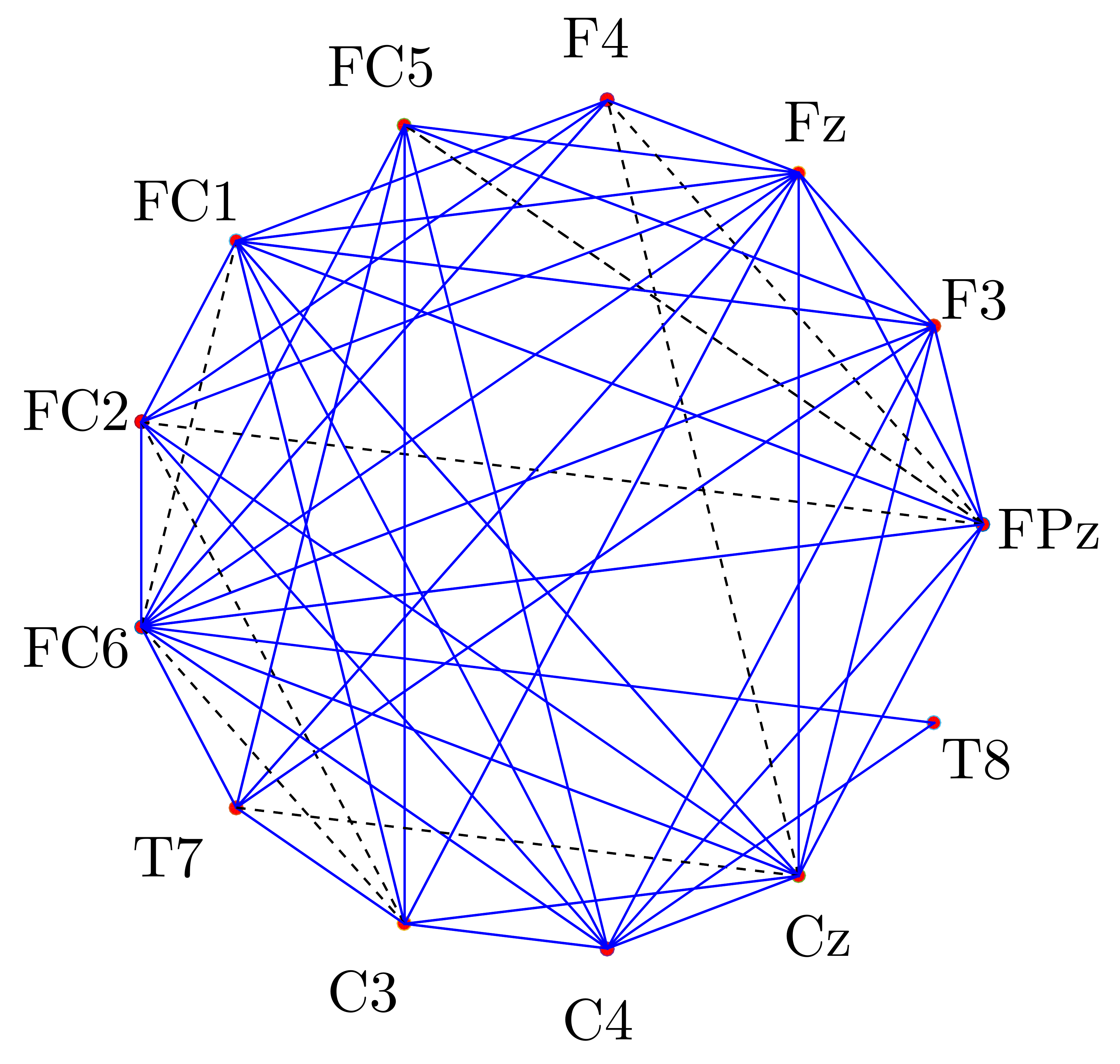}}
      \hspace{5pt}
      \subfigure[$f = 5^{Hz}$ ($\theta$
      band)]{\includegraphics[width=0.47\linewidth]{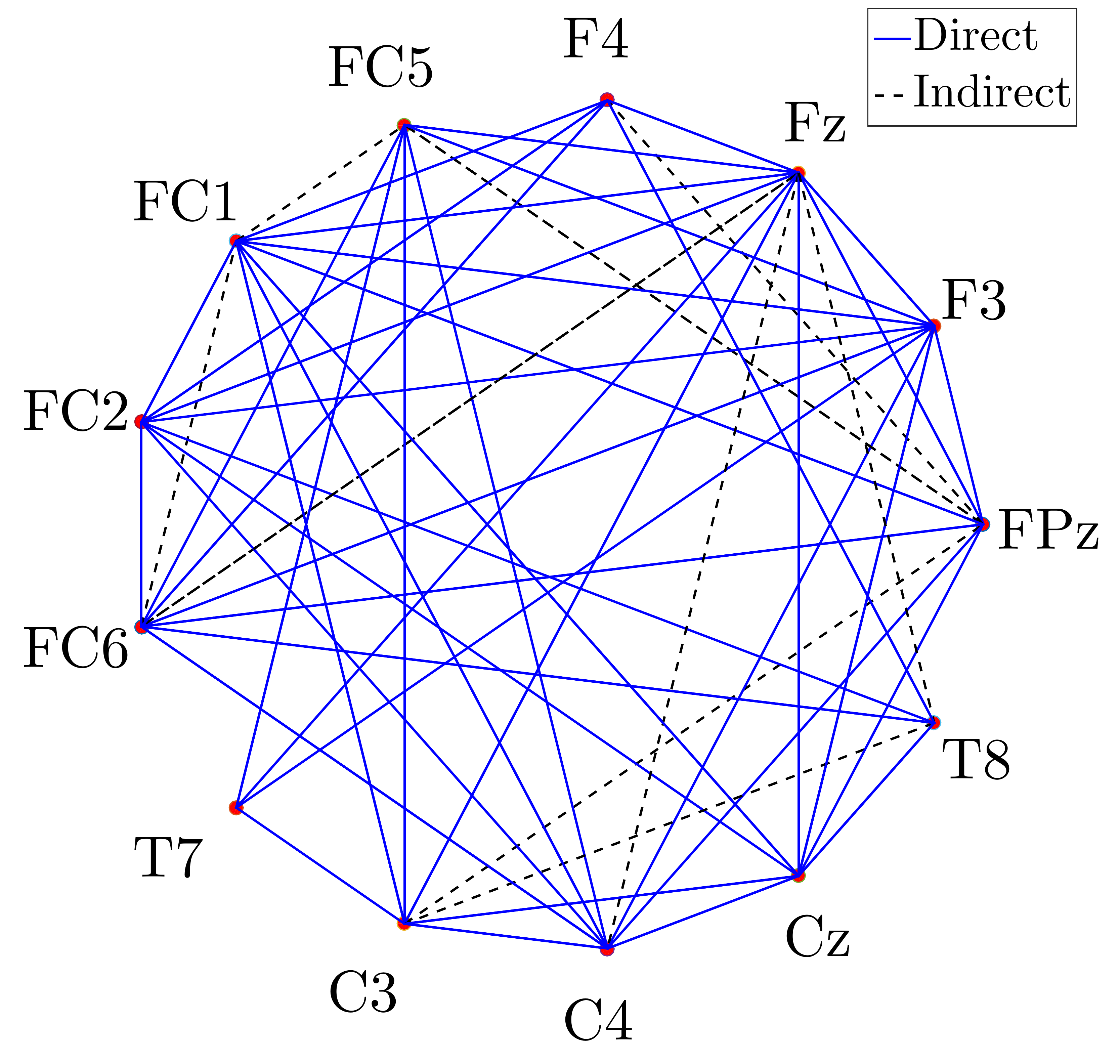}}
      \\
      \subfigure[$f = 20^{Hz}$ ($\beta$
      band)]{\includegraphics[width=0.47\linewidth]{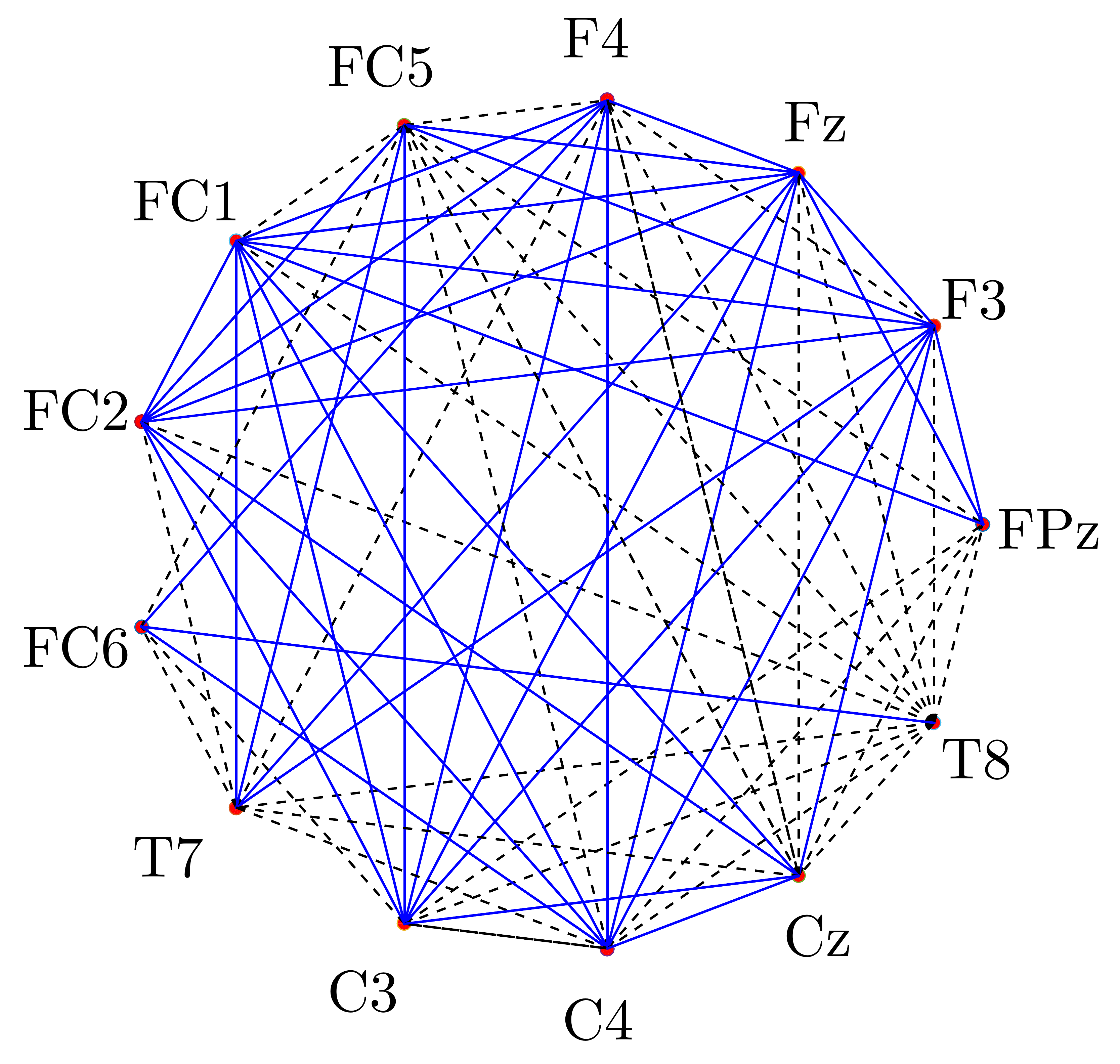}}
      \hspace{5pt}
      \subfigure[$f = 50^{Hz}$ ($\gamma$
      band)]{\includegraphics[width=0.47\linewidth]{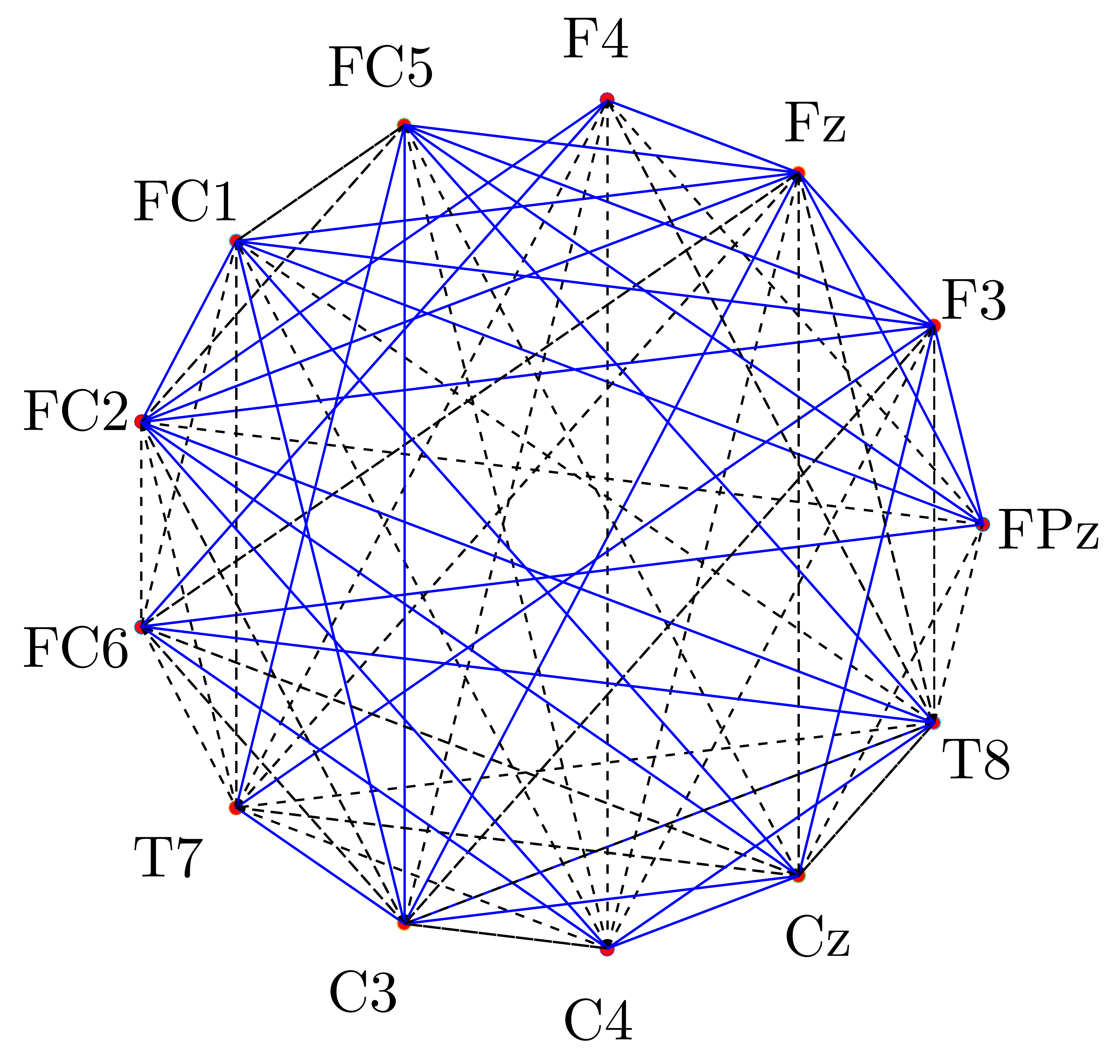}}
      \caption{Reconstructed manifest subnetwork for
        Example~\ref{ex:eeg} using the combination of DTF and dDTF
        estimation methods. The direct (solid blue) and indirect
        (dashed black) connections are illustrated for different
        frequency values and fixed threshold ratio $\alpha = 0.1$.}
      \label{fig:eeg-graph-EEGLAB}
      \vspace*{-10pt}
\end{figure}

Since the \emph{true} network structure is unknown for this example (and
hence the methods are not directly comparable), we validate our LSAR
estimated connectivity %
based on its ability to predict
\emph{future} (i.e., \nrev{unseen}) channel activity.  Thus, we used
the first \nrev{$80\%$} of data for LSAR estimation
and the last \nrev{$20\%$} for evaluation, which is based on
\begin{align}\label{eq:R}
  R^2 = 1 - \frac{\sum_{k = N+1}^{N'} \|e(k)\|^2}{\sum_{k = N+1}^{N'}
    \|y(k)\|^2},
\end{align}
denoting the percentage of the future channel activity that is
correctly predicted by the model~\cite[\S 16.4]{LL:99}, where
$\{y(k)\}_{k = N+1}^{N'}$ is the latter data sequence not used for
estimation. \nrev{The blue curve in Fig.~\ref{fig:eeg-loc-R}(c)
  shows the value of \nrev{$R \times 100\%$} for the LSAR method as a
  function of model order for the same selection of nodes as above
  (i.e., anterior)%
  \footnote{Edge values are not thresholded ($\alpha = 0$) for
    computing $R$ values.}.  This shows that the method is capable of
  predicting more than \nrev{$96.5\%$} of \nrev{unseen} data with
  model orders $\tau = 15 \sim 20$} (which is relatively low given the
large number of latent nodes and the high order of the underlying
brain dynamics).  \nrev{It should be noted that the $R$-value is not a
  suitable measure for comparison among the networks obtained by the
  LSAR, S+L, and dDTF methods. On the one hand, the AR model
  underlying the dDTF method is almost identical to the LSAR model
  used here, resulting in almost identical $R$ values, while the
  reconstructed networks are considerably different
  (c.f. Figs.~\ref{fig:eeg-graph-ours}
  and~\ref{fig:eeg-graph-EEGLAB}) due to different interpretations of
  on the model implications for network connectivity. On the other
  hand, the $R$ value is not well-defined for the S+L method since the
  right-hand side of~\eqref{eq:R} is negative, i.e., the reconstructed
  AR model has extremely poor \emph{prediction} performance. This is
  not surprising as the S+L method is aimed at maximizing the entropy
  (and thus minimizing predictability).

  Next, we analyzed the effect of the choice of manifest nodes on the
  reconstructed network. In addition to selecting the 13
  most anterior cortical nodes as above, we performed other runs where
  we selected the 13 most posterior nodes and 13 random nodes to
  reconstruct the manifest network using the LSAR method. We show in
  Fig.~\ref{fig:eeg-loc-R} these node choices (a), the reconstructed
  network for the posterior (b) and random (d) selections ($\alpha =
  0.12$), and (c) the $R$ values for all three cases. Interestingly, the
  density of direct connections is significantly higher among the
  posterior nodes. Also, the LSAR prediction performance is
  significantly lower in this case, suggesting less
  conformity of the occipito-parietal cortex to the simplifying
  assumptions of our AR model (linearity and passivity of latent
  nodes). Consistently, the network density and $R$ value
  of the random case interpolates between the anterior and posterior
  cases, as expected.

  Finally, an interesting observation in Fig.~\ref{fig:eeg-loc-R}(c)
  is that, even an AR model with $\tau = 2$ can predict about $95\%$
  of unseen data in all cases. This, at first glance, questions the
  need for any higher-order models as far as prediction is
  concerned. Nevertheless, notice that even an AR model with $\tau =
  1$, corresponding to an \emph{isolated} manifest subnetwork, can
  predict $90\%$ of unseen data, while the visual discrimination
  task performed by the subject heavily relies on coordination between
  posterior (visual) and anterior (motor planning and execution)
  areas. The reason why this model can predict unseen data so well is
  in the strong dominance of first-order local dynamics of every area
  (the diagonal of $\tilde A_0$) over the rest of network
  dynamics.%
  \footnote{\nrev{This can be easily seen by inspecting the AR coefficients
    $\tilde A_i$ estimated from data, and is physiologically justified
    as each area is composed of millions of neurons that are locally
    densely connected and serve specific purposes but only
    (relatively) sparsely connected with remote areas.}}  Thus,
  the prediction performance of a first-order model serves as a
  \emph{baseline} for higher orders, capturing the contribution
  of local interactions to the overall brain dynamics. This enlightens
  why the $\sim 1\%$ improvement in prediction performance as we go
  from $\tau = 2$ to $\tau = 15 \sim 20$ is significant.}

\begin{figure}
  \subfigure[]
      {\includegraphics[width=0.51\linewidth]{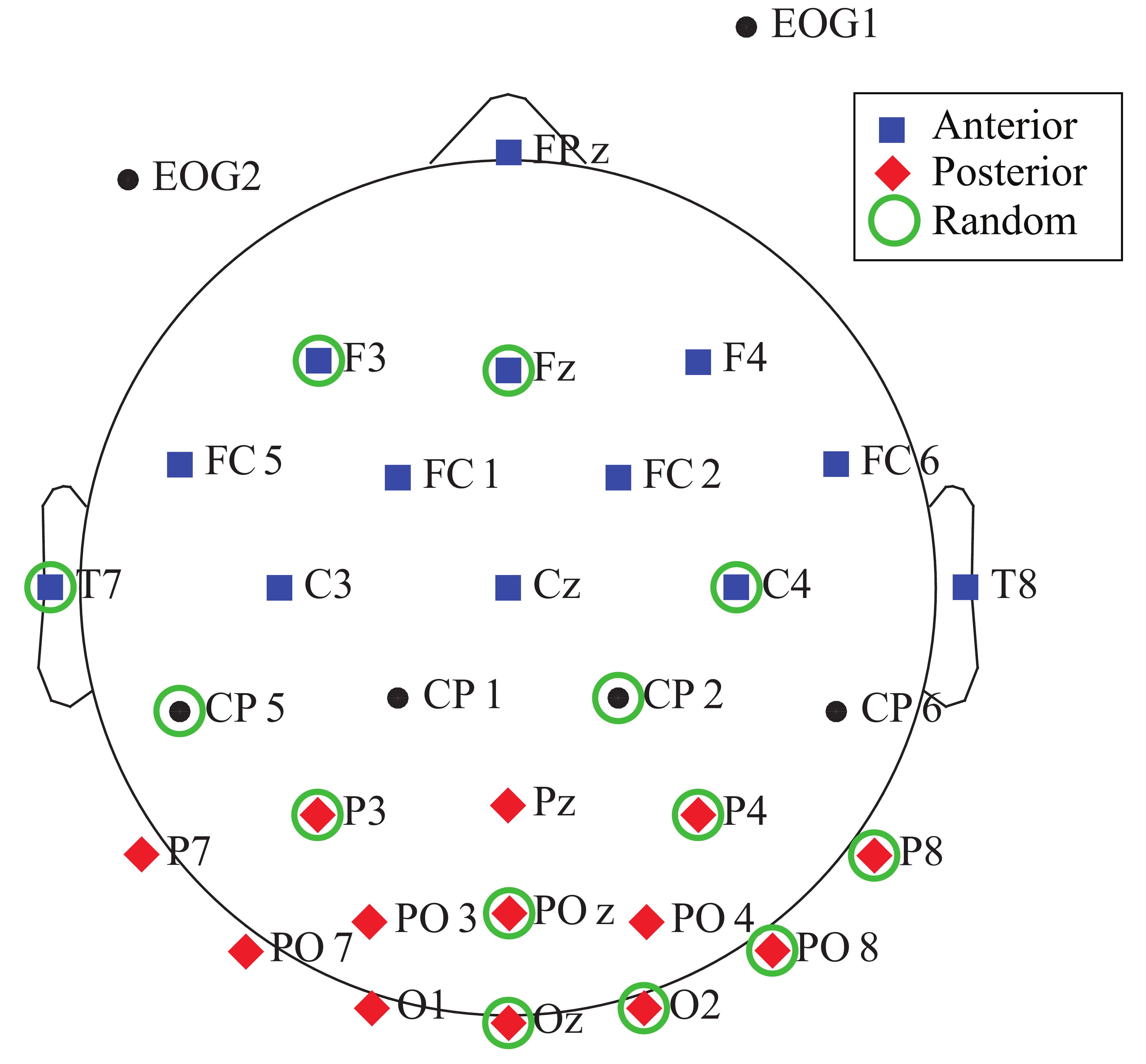}}
      \hspace{1pt} \subfigure[]
      {\includegraphics[width=0.46\linewidth]{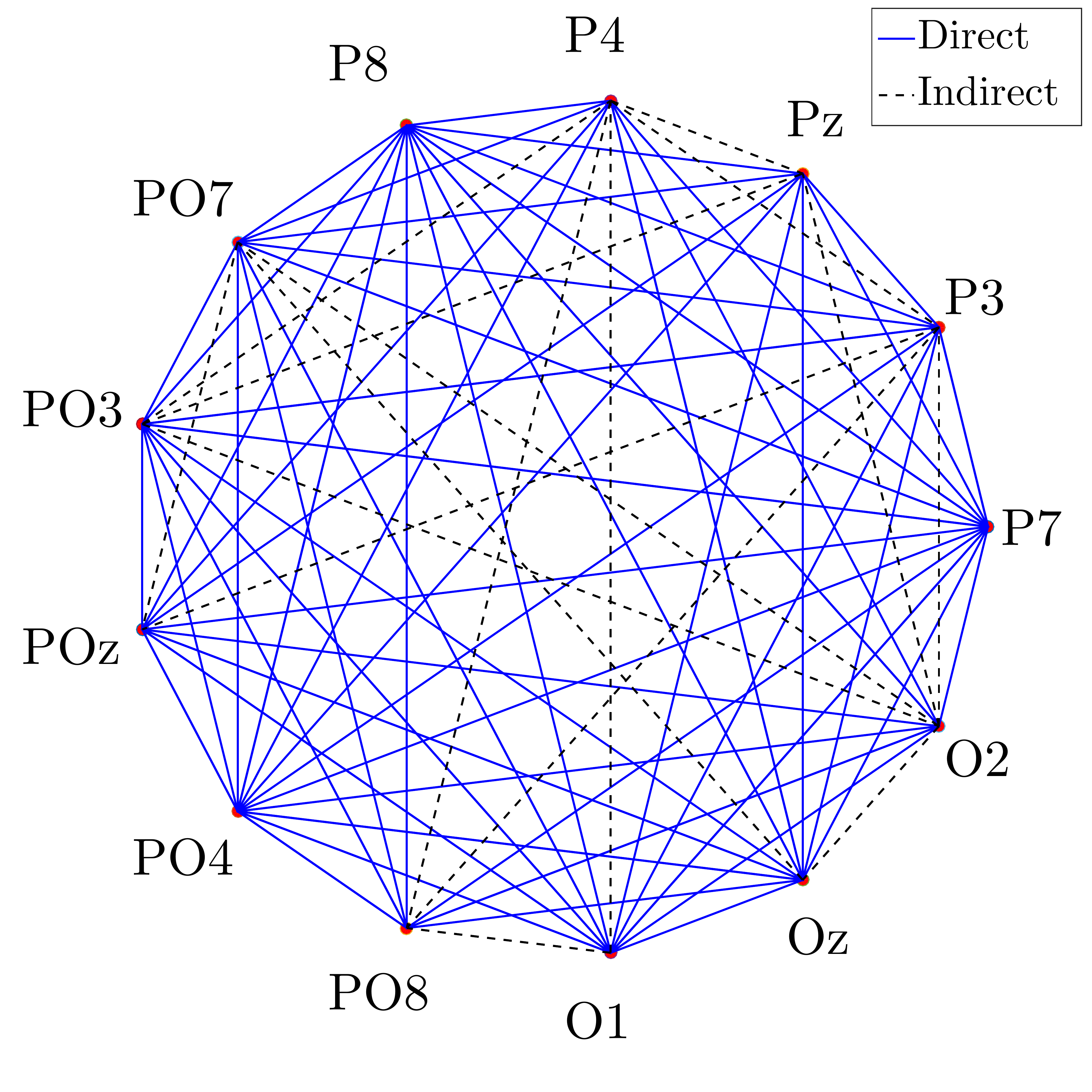}}
      \\
      \subfigure[]
      {\includegraphics[width=0.48\linewidth]{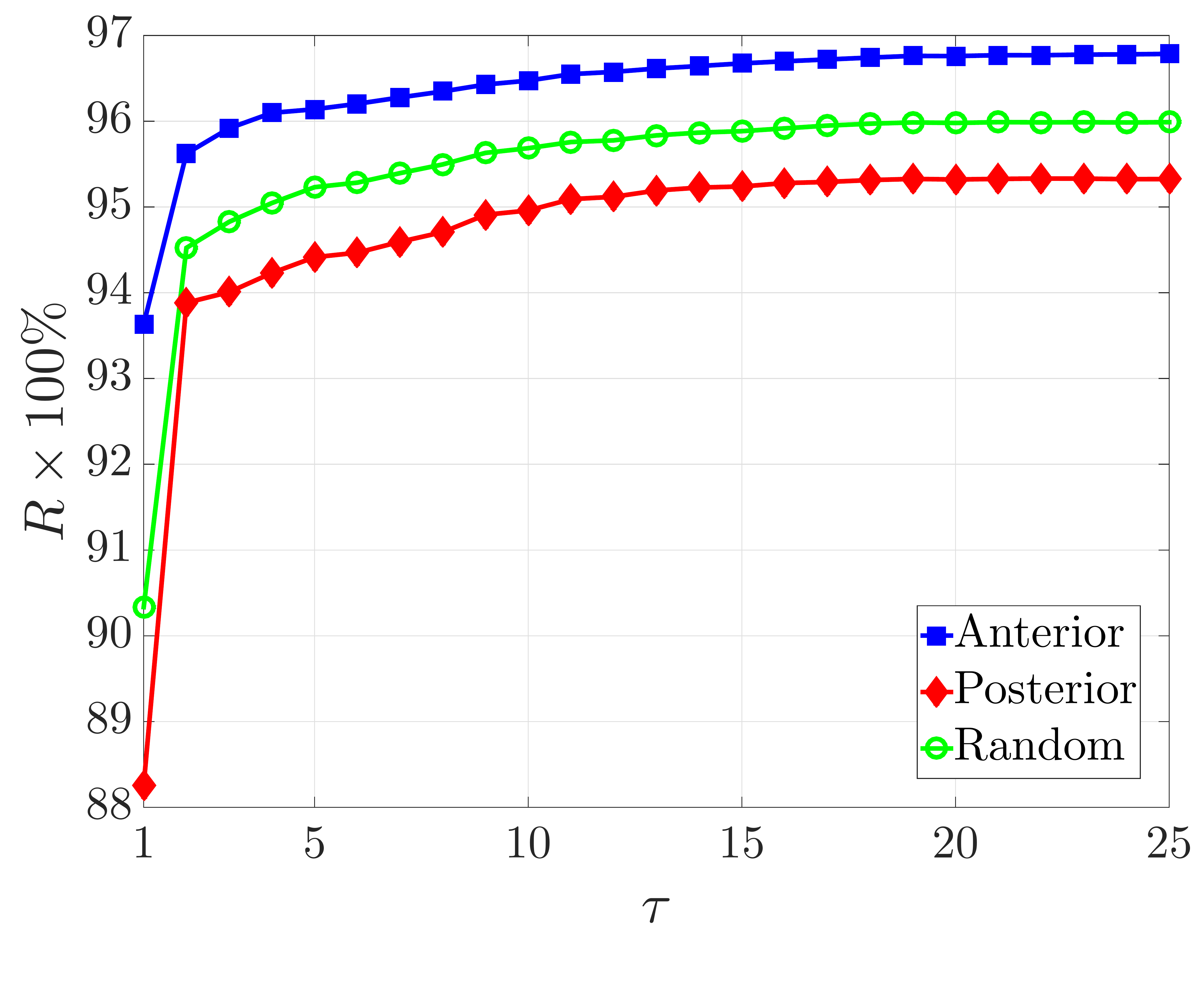}}
      \hspace{5pt}
      \subfigure[]
      {\includegraphics[width=0.46\linewidth]{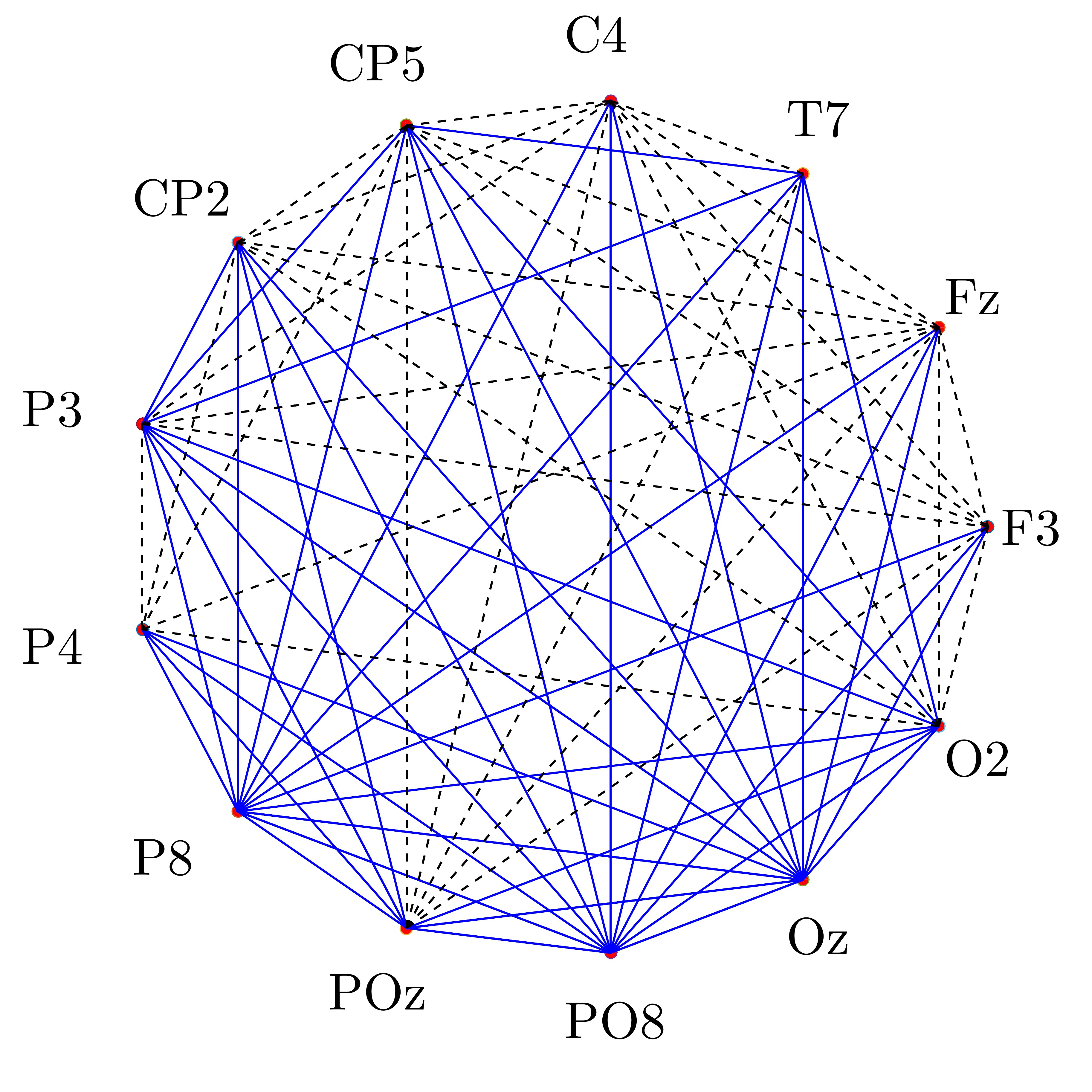}}
      \vspace{-10pt} \caption{\nrev{Comparison between different
          selections of manifest nodes in Example~\ref{ex:eeg}:
          (a) Electrode locations.
          (b and d) The reconstructed network for
          the 13 posterior nodes and 13 random nodes, resp.
          ($\alpha = 0.12$). (c) Prediction performance $R$
	  for the three different choices of manifest
          nodes (reconstructed network for anterior selection is given in
          Fig.~\ref{fig:eeg-graph-ours}(b)).}}
    \label{fig:eeg-loc-R}
    \vspace{-10pt}
\end{figure}
\end{example}
}

\section{Conclusions}\label{Sec:con}

We have considered the problem of identifying the interaction
structure among a group of nodes, termed manifest, that can be
directly \nrev{actuated and measured}, and are part of a larger
linear-time invariant network containing an unknown number of latent
nodes. We have shown that, if there are no inputs to the latent nodes,
then the transfer function of the manifest subnetwork can be
approximated to any degree of accuracy by means of an auto-regressive
model.  We have proposed a least-squares estimation method that uses
measured data to generate estimates that converge in probability to
this AR model exponentially fast as the length of data and the model
order increase. The estimation method does not require any knowledge
of the number or the states of the latent nodes.  We have illustrated
our results in a directed ring network, a group of
Erd\H{o}s--R{\'{e}}nyi random graphs, \rev{and on a time-series of EEG
  data recorded from the human brain.}  Future work will investigate
the sensitivity of the estimation's performance to latent nodes, the
characterization of particular network structures which are easier or
more difficult to identify, the application of our results to the
analysis of brain data, and the extension of the results to network
models \nrev{where, in addition to manifest and latent, there are
  nodes that can be actuated but not measured, and nodes that can be
  measured but not actuated.}

\section*{Acknowledgments}
The authors would like to thank Dr. John Iversen for numerous
discussions, helpful feedback on the contents of the paper, and
introducing us to the SIFT Matlab toolbox.  This work was supported by
NSF Award CNS-1329619.

\vspace{-5pt}
\bibliographystyle{ieeetr}

\vspace{-5pt}
\appendix

\begin{lemma}\label{Lemma1}
  Given two vectors $a,b\in \mathbb{R}^{n}$, it holds for any $ M\in
  \mathbb{R}_{> 0}$ that $\rVert ab^{T}\rVert _{\max }\leq
  M^{-1}a^{T}a+Mb^{T}b$.
\end{lemma}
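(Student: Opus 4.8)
The plan is to reduce the matrix inequality to a scalar one entry by entry and then invoke the arithmetic--geometric mean inequality. First I would recall that, by definition of the max norm, $\|ab^T\|_{\max} = \max_{i,j} |a_i b_j| = \max_{i,j} |a_i|\,|b_j|$, so it suffices to show that $|a_i|\,|b_j| \le M^{-1}a^Ta + Mb^Tb$ for an arbitrary pair of indices $(i,j)$.

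For such a fixed pair, I would write $|a_i|\,|b_j| = \bigl(M^{-1/2}|a_i|\bigr)\bigl(M^{1/2}|b_j|\bigr)$ and apply the elementary inequality $xy \le \tfrac12(x^2+y^2)$, valid for all real $x,y$ (equivalently, $(x-y)^2 \ge 0$), with $x = M^{-1/2}|a_i|$ and $y = M^{1/2}|b_j|$. This yields $|a_i|\,|b_j| \le \tfrac12\bigl(M^{-1}a_i^2 + Mb_j^2\bigr)$. Since $a_i^2 \le \sum_k a_k^2 = a^Ta$ and $b_j^2 \le \sum_k b_k^2 = b^Tb$, the right-hand side is bounded above by $\tfrac12(M^{-1}a^Ta + Mb^Tb) \le M^{-1}a^Ta + Mb^Tb$. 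Taking the maximum over $i$ and $j$ gives the claim.

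There is essentially no obstacle here; the statement is a one-line application of Young's inequality. The only points requiring (minimal) care are the direction of the scaling by $M$ -- so that the factor $M$ multiplies $b^Tb$ and $M^{-1}$ multiplies $a^Ta$, matching how the lemma is invoked with $M = \bar\rho^\tau$ in~\eqref{eq4:Pf_Lemma2} -- and the observation that the bound is not tight, since a factor of $\tfrac12$ is discarded when passing from a single entry to the full quadratic forms (harmless for the intended use).
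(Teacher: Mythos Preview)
Your proof is correct and follows essentially the same route as the paper: both arguments reduce to a single entry $|a_ib_j|$, apply the AM--GM/Young inequality with the scaling $M$, and then dominate $a_i^2$ and $b_j^2$ by the full quadratic forms $a^Ta$ and $b^Tb$. Your write-up is in fact a bit more explicit than the paper's (which compresses the Young-inequality step into a single displayed chain), and your observation about the discarded factor $\tfrac12$ is accurate but, as you note, immaterial for the application.
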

\begin{proof}
  By definition of the max norm,
  \vspace*{-1ex}
  \begin{align*}
    \rVert ab^{T}\rVert _{\max }& =\max_{1\leq i,j\leq n}\left\vert
      a_{i}b_{j}\right\vert \leq \sum_{i=1}^{n}(M^{-1}\left\vert
      a_{i}\right\vert ^{2}+M\left\vert b_{i}\right\vert ^{2})
    \\
    & =M^{-1}a^{T}a+Mb^{T}b.
  \end{align*}
  \vspace*{-1ex}
\end{proof}

\end{document}